\newcommand\figcaption{\def\@captype{figure}\caption}
\newcommand\tabcaption{\def\@captype{table}\caption}
\newcommand{\widesim}[2][1.5]{
  \mathrel{\overset{#2}{\scalebox{#1}[1]{$\sim$}}}
}
\DeclareMathAlphabet{\mathsf}{OT1}{cmss}{m}{n}
\SetMathAlphabet{\mathsf}{bold}{OT1}{cmss}{bx}{n}
\newcommand{\Rmnum}[1]{\expandafter\@slowromancap\romannumeral #1@}
\begin{document}

\title{\huge Online Quantification of Input Model Uncertainty by Two-Layer Importance Sampling }

\date{}

\author{Tianyi Liu and Enlu Zhou\thanks{ T. Liu and E. Zhou are affiliated with School of Industrial and Systems Engineering at Georgia Tech. Liu's email address is tianyiliu@gatech.edu. Zhou's email address is enlu.zhou@isye.gatech.edu.}}
\maketitle
\begin{abstract}
Stochastic simulation has been widely used to analyze the performance of complex stochastic systems and facilitate decision making in those systems. Stochastic simulation is driven by the input model, which is a collection of probability distributions that model the stochasticity in the system. The input model is usually estimated using a finite amount of data, which introduces the so-called input model uncertainty to the simulation output. How to quantify input uncertainty has been studied extensively, and many methods have been proposed for the batch data setting, i.e., when all the data are available at once. However, methods for  ``streaming data'' arriving sequentially in time are still in demand, despite that streaming data have become increasingly prevalent in modern applications. To fill this gap, we propose a two-layer importance sampling framework that incorporates streaming  data for online input uncertainty quantification. Under this framework, we develop two algorithms that suit different application scenarios: the first scenario is when data come at a fast speed and there is no time for any new simulation in between updates; the second is when data come at a moderate speed and a few but limited simulations are allowed at each time stage.  We prove the consistency and asymptotic convergence rate results, which theoretically show the efficiency of our proposed approach. We further demonstrate the proposed algorithms on a numerical example of the news vendor problem.
\end{abstract}

\section{Introduction}
For a complex stochastic system, real-world experiments are usually expensive or difficult to conduct. In this case, stochastic simulation is always a powerful tool to analyze the system behavior. Stochastic simulation is driven by the input model, which is a collection of distributions that model the randomness in the system. There are generally two sources of uncertainty in a simulation experiment. One is the simulation uncertainty that reflects the intrinsic randomness of the system. The other is the input model uncertainty (or simply as input uncertainty), which is caused by the estimation of the input model from finite realizations of the  real-world stochastic processes. For example, when simulating a queuing network, we generate samples of customer arrivals and service times from appropriate distributions (input models). The simulation output (e.g., average queue length) depends on the parameters of the input model (e.g., arrival and service rates). These parameters are usually estimated from a finite amount of data and have estimation error (and hence input uncertainty). Without quantifying the input uncertainty, simulation users can hardly separate the input uncertainty from the simulation uncertainty, which can result in a wrong interpretation of simulation results. A proper quantification of input uncertainty can also provide inferences on system sensitivity or robustness to input uncertainty.

Various input uncertainty quantification methods have been proposed under batch data setting where data are available all at once. These methods include  Bayesian methods (e.g.,\cite{chick2001input,zouaoui2003accounting,zouaoui2004accounting,xie2014bayesian}), frequentist methods (e.g., \cite{barton1993uniform,barton2001resampling,cheng1997sensitivity}), delta methods (e.g., \cite{cheng1997sensitivity}),  meta-model assisted methods  (e.g., \cite{barton2013quantifying,xie2014bayesian}), and {some more recent ones (e.g., \cite{lam:2016sensitivity, ZhouZhu:2015risk, lam2016empirical, lam2018subsampling, lin2015single,feng2019WSCuq}). For a comprehensive review on input uncertainty quantification, the reader can refer to \cite{barton2012tutorial} and \cite{song2017input}.}

Despite the abundance of methods developed under the batch data setting, there is no method specifically designed to work with streaming data, which refer to data arriving one by one or in mini batches sequentially in time. With streaming data, it is natural to take an ``online'' approach to input uncertainty quantification, i.e., 
to update the quantification after each new input data point comes in. Online quantification can further facilitate online decision making in data-driven applications. For example,  in supply chain management, a retailer can continuously collect customer demand data to update the estimate of the demand distribution and quantify its uncertainty, which in turn leads to updated evaluation and improvement of the restocking policy.  

The main bottleneck to online quantification is the long running time of simulation experiments. Consider a naive approach that extends a batch method to the streaming data setting: we just repeat the method every time when new data point becomes available. This naive extension is obviously  inefficient since we have to re-run simulation experiments every time, and can hardly be used in practice especially when fast decision making is required. To have an applicable online method, we should be able to update estimates fast enough every time when new data point arrives, which means we can only do very few or even no simulation experiments every time. On the other hand, we need to maintain the accuracy of estimates over a long time horizon, which is much harder than controlling estimation accuracy in one time stage (e.g., under the batch data setting) due to the possibility that estimation error gets accumulated over time.  These two contradicting requirements (i.e., few simulation experiments every time and maintaining estimation accuracy over time) make the online problem extremely challenging.

To address the challenge described above, we need to make full use of every simulation experiment that has ever been run. This is the motivation behind our proposed approach of Two-Layer Importance Sampling (TLIS). More specifically, we assume the input distribution takes a parametric form with a known distribution form but unknown input parameter. Given a pre-specified prior distribution of the input parameter, we apply the Bayesian rule to update the posterior distribution when new data arrive. In this model, there are two layers of uncertainty. The outer-layer input uncertainty is characterized by the posterior distribution of the input parameter, and the inner-layer simulation uncertainty is caused by sampling from an input distribution. Our proposed approach applies the importance sampling  technique to reuse simulation outputs at both layers. At the outer layer, we reuse the system performance estimates under the input parameter samples from  posterior distributions at previous time stages. At the inner layer, we apply importance sampling to estimate the system performance under any new input parameter sample by a weighted average of all simulation outputs under different input parameters. This two-layer importance sampling application makes it possible to run very few or even no new simulation experiments at each time stage.

Our TLIS approach can be applied to scenarios with different speeds of data arrivals and quantification needs. In the first scenario of fast data arrival, we assume no simulation experiment is allowed between data arrivals. One example of this scenario is the stock market where the stock price changes rapidly and there is a need to keep updating quantification of risk. During market opening, there is no time for new simulation experiments, but a large amount of simulation experiments can be done during market closure, which can be viewed as initialization of the process. For this scenario, we apply TLIS by using the inner-layer importance sampling to estimate system performance with those simulation experiments done at the initial time stage. In the second scenario of moderate data arrival, we assume a small amount of simulation experiments are allowed at each time stage. An example is daily inventory management, where customer demand data is collected daily and used to update  evaluation of the restocking policy. Although the algorithm for the first scenario is still applicable here, we will take advantage of new simulation experiments and apply the inner-layer importance sampling to use new simulation outputs under input parameter samples drawn from the current posterior. 

We note our proposed TLIS is related to ``green simulation'' proposed by \cite{feng:2015WSCgreensim} and \cite{Feng2017GreenSR} in the sense of reusing simulation outputs from previous experiments. However, their key condition for convergence that the stationary measure exists does not hold in this online setting, where the posterior distribution changes with data arrivals. We also note a recent work \cite{feng2019WSCuq} applies green simulation for input uncertainty quantification, and they independently developed a method that is similar to our inner-layer importance sampling technique. However, they consider the batch data setting, while we consider streaming data as well as different scenarios of data arrival speeds. 

Our convergence results show that TLIS asymptotically tracks the true quantification over time. {Moreover, TLIS has an increased asymptotic convergence rate due to the importance sampling used in both outer-layer and inner-layer. Compared with the Direct Monte Carlo method (described in Section \ref{Direct_MC}), the outer-layer importance sampling improves the outer-layer convergence rate by a factor $O(1/\sqrt{K}),$ and the inner-layer importance sampling improves the inner-layer convergence rate by a factor $O(1/\sqrt{M}),$ where $K$ is the number of time stages we reuse the simulation outputs and  $M$ is the number of input parameter samples we draw at each time stage. } Numerical testing verifies our theoretical results and further shows the advantage of TLIS compared to the Direct Monte Carlo method, a Simple Importance Sampling method, and the online application of the Green Simulation method. 

We conclude our contribution as follows: 1) we are among the first to consider input uncertainty quantification with streaming data, and developed a Two-Layer Importance Sampling approach that can adapt to different speeds of data arrivals;
2) we theoretically and numerically showed that our algorithms over-perform  the online extension of some batch methods under the same simulation budget; 3) we proved some important properties of exponential families of distributions, which is of independent interest.

\section{Problem Setting}\label{section:modelsetting}
The goal of stochastic simulation is usually to estimate the system performance $H := E_{\xi\sim F^c}[h(\xi)]$, where $\xi$ is a random vector (r.v.) following the input distribution $F^c$, and $h$ is a function that can be evaluated by simulation. In this paper, we assume that the input distribution lives in a parametric family of distributions, that is, $F^c$ takes the parametric form $F(\cdot;\theta^c), \theta^c \in \Theta\subseteq \RR^s$, where $\theta^c$ is the true input parameter,  and $\Theta$ is the parameter space. While the value $\theta^c$ is unknown, we receive streaming data over time. Specifically, at each time $t$ ($t=1,2,\ldots$), we observe a new data point $\xi_t\sim F(\cdot;\theta^c)$ that is independent of the past data. With each new data point, we want to update the input model $F(\cdot;\theta_t)$ and quantify the impact of $F(\cdot;\theta_t)$ on the system performance estimation, particularly in a real-time fashion. { For $\forall~\theta\in\Theta,$ we further denote the probability density function of $F(\cdot;\theta)$ as $p(\cdot;\theta).$ We will use $\xi\sim p(\cdot;\theta)$ in the rest of the paper to denote drawing a sample $\xi$ from the input model $F(\cdot;\theta)$.}

We take a Bayesian approach to process data sequentially in time. The unknown input parameter is treated as a r.v. $\theta$ defined on $(\Theta,\cB_\theta,\pi_0)$, where $\pi_0$  is the prior distribution which is specified at time $t=0$.  At time stage $t$, the posterior distribution on $\theta$ has the probability density function (p.d.f.) $\pi_t := p(\theta|\xi_1,\ldots,\xi_t)$ and cumulative distribution function (c.d.f.) $\Pi_t$. Then at time stage $t+1$ when a new data point $\xi_{t+1}$ comes in,  the posterior distribution is updated according to
\begin{align}\label{bayes}
\pi_{t+1}(\theta):= p(\theta|\xi_1,\ldots,\xi_{t+1})= \frac{\pi_t(\theta)p(\xi_{t+1}|\theta)}{\int\pi_t(\theta)p(\xi_{t+1}|\theta)d\theta}. 
\end{align} We assume that we can sample  from $\pi_t$ for $t>0.$

To study the impact of input model on performance estimation, we introduce the notation $H(\theta) := E[h(\xi_{\theta})]$, which is a random variable induced by the r.v. $\theta$. The c.d.f. of the induced posterior distribution on $H(\theta)$ at time $t$ is then defined as
$$
G_t(h) := \PP(H(\theta)\leq h|\xi_1,\ldots,\xi_{t}).
$$
As in \cite{xie2014bayesian}, we use  the credible interval (also called the Bayesian confidence interval) of the induced posterior distribution on the performance measure to quantify input uncertainty. To be specific, the $(1-\alpha)100\%$ credible interval $[q_t, Q_t]$ is defined as
$$
G_t(Q_t) - G_t(q_t) = 1-\alpha,
$$
where $Q_t=\inf\{h|G_t(h)\geq1-\alpha/2\}$ and $q_t=\inf\{h|G_t(h)\geq \alpha/2\}.$
Note $G_t(\cdot)$ is the posterior distribution on the system performance. Intuitively, it represents our belief about the system performance based on the current dataset. If $H(\theta')$ can be evaluated exactly for each $\theta'\in\Theta$, the above credible interval can quantify the uncertainty  solely due to the input data. However, since we only have noisy evaluations of $H$, we need to estimate the credible intervals, which boils down to estimating the quantiles of $G_t(\cdot)$. Therefore, our goal is to estimate the quantiles of $G_t(\cdot)$ in a real-time manner at each time $t$ given the new data point $\xi_t$.

\section{Algorithms}
In this section, we present algorithms for quantifying input uncertainty in an online setting. We first present a Direct Monte Carlo  method, which serves as a benchmark and reveals the challenges of the online quantification problem. To address these challenges, we develop a novel framework of Two-Layer Importance Sampling. Under this framework, we then design two algorithms respectively for the following two scenarios: 1) all simulations are done at the beginning of the time, and no new simulation is allowed at any subsequent time stage;  and 2) a small number of simulation replications can be done at each time stage. In the end, we show that our framework also generalizes some other algorithms, including a Simple Importance Sampling algorithm and an online application of the Green Simulation method that was originally proposed in \cite{feng:2015WSCgreensim}  and \cite{Feng2017GreenSR}.

\subsection{Direct Monte Carlo Method}\label{Direct_MC}
Before introducing our algorithm, we first present a benchmark, Direct Monte Carlo method, to show the structure and challenges of this online quantification problem. Recall from last section that our goal is to estimate the quantiles of the performance posterior distribution $G_t$.  {The Direct  Monte Carlo method uses two-layer nested simulation:\begin{itemize}
	\item At the outer-layer, we draw $M$  $\theta-$samples $\{\theta^i_t\}_{i=1}^M$ from $\pi_t.$ The empirical distribution
	$$\overline{\pi}_t(\theta)=\frac{1}{M}\sum_{i=1}^M\cI(\theta = \theta_{t}^i)$$ is a consistent estimator for $\pi_t,$ by the famous  Glivenko-Cantelli Theorem. 	
	\item At the inner-layer, for each $\theta^i_t$ we carry out a finite number of simulation replications to obtain outputs $h(\xi^{i,j}_t), j=1,\ldots, N$, and then use the sample average $$\overline{H}^N(\theta_t^i) = \frac{1}{N}\sum_{i=1}^N h(\xi^{i,j}_t)$$ to estimate the true system performance under $\theta^i_t$. 
\end{itemize}

We then sort the performance estimate $\{\overline{H}^N(\theta_t^i)\}_{i=1}^M$ in an ascending order and find the quantile estimate.
The Direct Monte Carlo method is simple and easy to  implement. However, it requires a large number of simulation replications to obtain a sufficiently accurate estimate at any time stage. Specifically, at the outer-layer, we  need to choose a sufficiently large $M$ so that $\overline\pi_t$ is close to $\pi_t.$ At the inner-layer, we prefer choosing a large $N$ to control the simulation error.  This  high demand on simulation is generally not feasible in the online setting, since simulation is usually time consuming.
}

\subsection{Two-Layer Importance Sampling}
Our main idea is to adapt the well-known importance sampling (also called likelihood ratio) technique to enlarge the effective size of $\theta$-samples and the number of simulation replications that are used to estimate the system performance at any time stage. Specifically, at each time we apply IS to both the outer and inner layers in the following way.
\begin{itemize}
	\item At the outer-layer, we use importance sampling over multiple time stages to transform sets of $\theta$-samples from previous time stages to a weighted set of $\theta$-samples following the current posterior distribution $\pi_t$.
	\item At the inner-layer, we use importance sampling across different $\theta$-samples at the same time stage. That is, we use all the simulation outputs obtained under different $\theta$-samples to estimate the system performance under one target $\theta$-sample. We term this technique as Cross Importance Sampling (CIS).
\end{itemize}

Following the main idea outlined above, we develop the Two-Layer Importance Sampling (TLIS) algorithm in detail.
For any fixed $\theta\in\Theta$ and integer $K\in[1,t]$,  the posterior distribution $G_t$ can be rewritten as follows.
\begin{align}\label{eq_is}
G_t(h)&=\EE_{\theta_t\sim\pi_t}[\cI({h \geq H(\theta_t)})] \nonumber \\
&=\frac{1}{K}\sum_{k=0}^{K-1}\EE_{\theta_{t-k}\sim\pi_{t-k}}\left[\frac{\pi_t(\theta_{{t-k}})}{\pi_{{t-k}}(\theta_{{t-k}})}\cI\left({h\geq H(\theta_{t-k})}\right)\right]. 
\end{align}
Given a set of i.i.d. samples $\{\theta_{t-k}^i\}_{i=1}^M\widesim{\text{i.i.d.}}\pi_{t-k}(\theta)$, the expectation term in (\ref{eq_is}) can be estimated as follows:
$$
\EE_{\theta_{t-k}\sim\pi_{t-k}}\left[\frac{\pi_t(\theta_{{t-k}})}{\pi_{{t-k}}(\theta_{{t-k}})}\cI\left({h\geq H(\theta_{t-k})}\right)\right] \approx \frac{1}{M}\sum_{i=1}^M w_{t|t-k}^i \cI\left(h \geq H(\theta_{t-k}^i)\right),
$$
where $w_{t|t-k}^i$ is the likelihood ratio between $\pi_t$ and $\pi_{t-k}$ evaluated at $\theta_{t-k}^i$, i.e., 
\begin{align}\label{out_weight}
w_{t|t-k}^i=\frac{\pi_t(\theta_{t-k}^i)}{\pi_{t-k}(\theta_{t-k}^i)}\propto p(\xi_{t-k+1},...,\xi_{t}|\theta_{t-k}^i)=\prod_{\tau=t-k+1}^{t}p(\xi_{\tau}|\theta_{t-k}^i), 
\end{align}
where $p(\xi_{t-k+1},...,\xi_{t}|\theta_{t-k}^i)$ is the joint p.d.f. of $\{\xi_{t-k+1},...,\xi_{t}\}$ given that the input parameter is $\theta_{t-k}^i$, and the last equality follows from the independence among the data sequence $\{\xi_{\tau}\}_{\tau=1}^\infty.$  Therefore,  an unbiased c.d.f. estimate of $G_t$ using the $\theta$-samples from the most recent $K$ time stages is as follows:
\begin{align}\label{pi_estimate}
\hat{G}_{t}^{M,K}(\theta)=\frac{1}{KM}\sum_{k=0}^{K-1} \sum_{i=1}^M w_{t|t-k}^i \cI\left(h \geq H(\theta_{t-k}^i)\right).
\end{align}

In (\ref{pi_estimate}), the performance $H(\theta_{t-k}^i)$ cannot be evaluated exactly and has to be estimated through simulation. To make full use of all the simulation outputs, we propose Cross Importance Sampling, which applies importance sampling to the simulation outputs under different $\theta$'s in order to estimate the performance under one target $\theta$. More specifically, suppose we have  a set of $\theta$-samples $\{\theta^i\}_{i=1}^M$ and their corresponding $N$ simulation outputs $\{h(\xi^{i,j})\}_{j=1}^N,$ where $\xi^{i,j}\widesim{\text{i.i.d.}}p(\cdot|\theta^i), $ $i=1,...,M.$ Let's call $\{\theta^i\}_{i=1}^M$ the proposal parameter set. Our target is to estimate $H(\theta)$, where the target parameter $\theta$ could be inside or outside the proposal set  $\{\theta^i\}_{i=1}^M$.   Note that
\begin{align}\label{eq_cis}
H(\theta) &= \EE_{\xi\sim p(\cdot|\theta)}[h(\xi)] \nonumber \\ 
&= \frac{1}{M}\sum_{i=1}^M\EE_{\xi\sim p(\cdot|\theta^i)}\left[h(\xi)\frac{p(\xi|\theta)}{p(\xi|\theta^i)}\right].
\end{align}
Replacing the expectation terms in (\ref{eq_cis}) by the sample averages of $\left\{h(\xi^{i,j})\frac{p(\xi^{i,j}|\theta)}{p(\xi^{i,j}|\theta^i)}\right\}_{i,j}$,  $H(\theta)$ can by estimated by
\begin{align}\label{cis}
\hat{H}^{M,N}(\theta)&=\frac{1}{M} \sum_{i=1}^M\left\{\frac{1}{N}\sum_{j=1}^N h(\xi^{i,j})\frac{p(\xi^{i,j}|\theta)}{p(\xi^{i,j}|\theta^i)}\right\}=\frac{1}{NM} \sum_{i=1}^M\sum_{j=1}^N \nu^{i,j}(\theta)h(\xi^{i,j}),
\end{align}
where $\nu^{i,j}(\theta)=\frac{p(\xi^{i,j}|\theta)}{p(\xi^{i,j}|\theta^i)}.$
\begin{figure}[htb!]
	\center
	\includegraphics[scale=0.1]{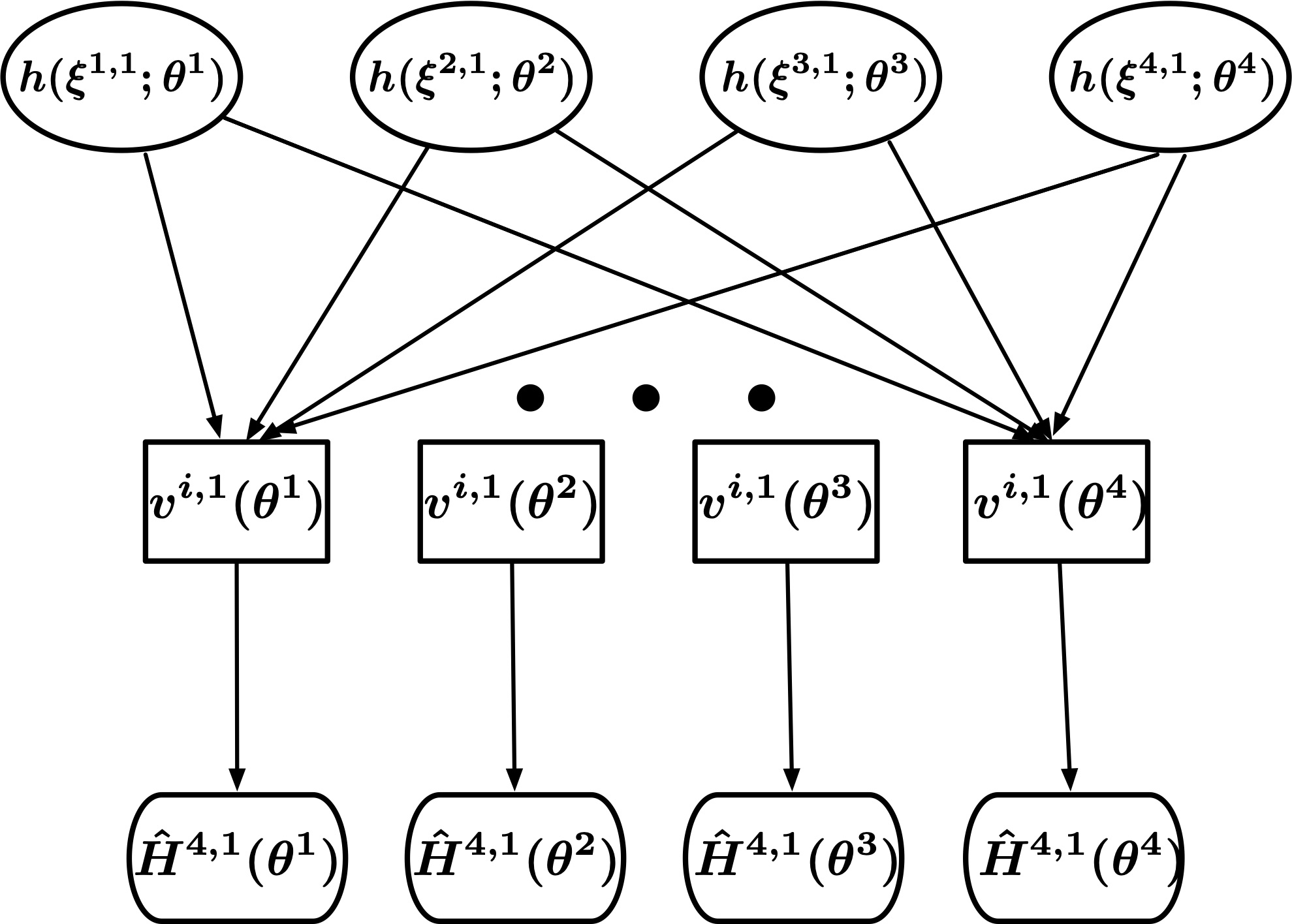}
	\caption{Illustration of Cross Importance Sampling. ($M=4,$ $N=1,$ same proposal and target parameter sets.)}
	\label{fig_cis}
\end{figure}
Compared to the sample average estimator obtained by the Direct Monte Carlo method, \eqref{cis} uses $NM$  instead of $N$ simulation outputs to estimate any single system performance. Thus, as we will show later, when certain condition is satisfied, the variance of the estimator can be substantially reduced. Note that one special case is that the target parameter set is exactly the proposal parameter set. In this case,  the system performance estimator  for each $\theta$-sample in this set  uses simulation outputs under all the samples  crosswise. That is why we call this method Cross Importance Sampling (CIS). An illustration of  CIS is shown in Figure \ref{fig_cis}, when there are four $\theta$-samples and one simulation output under each $\theta$-sample.
Replacing $H(\theta_{t-k}^i)$ in (\ref{pi_estimate}) by  $\hat H^{M,N}(\theta_{t-k}^i)$, we finally obtain the empirical estimator of the c.d.f. $G_t(h),$ as follows.
\begin{align}\label{CDF_estimate}
\hat{G}_{t}^{M,N,K}(h)=\frac{1}{KM}\sum_{k=0}^{K-1} \sum_{i=1}^M w_{t|t-k}^i \cI\left(h \geq \hat H^{M,N}(\theta_{t-k}^i)\right).
\end{align}

\begin{remark}
	Cross Importance Sampling should not be used to estimate average system performance under a given input distribution. Since every system performance estimate uses all the simulation outputs, they are	positively correlated. When taking average, the mean estimator will suffer from extreme large variance, and thus CIS is not recommended. In the quantification of input uncertainty, however, we care about one single quantile. There is no issue of positive correlation here in quantile estimator.
\end{remark}

Our Two-Layer Importance Sampling algorithm is presented below in Algorithm \ref{General}.
In the step of Cross Importance Sampling, the proposal parameter set should be chosen appropriately for the practical scenario. We discuss two main scenarios driven by the pace of decision making compared with data arrivals: I) Fast decision making. For instance, in the stock market, price changes every second, and investors need to update decisions and quantify risks in real time. II) Moderate (but still online) decision making. This scenario requires up-to-date but not immediate decision making. For example, in inventory management decisions are made at a moderate pace, such as weekly or monthly. These two scenarios allow different amount of simulation experiments we can carry out between decision epochs. In the first scenario, there is hardly any time for new simulation experiments between decisions, and we can only carry out simulation experiments in specific time periods such as market closures. In the second scenario, however, we can simulate the system to estimate the performance after new data come in, but we can only afford a small number of simulation replications because each replication is expensive and there is often limited computational resource. Targeting at these two scenarios, we propose the corresponding CIS estimators by choosing appropriate proposal parameter sets.

\setcounter{algorithm}{0}
\begin{algorithm}[htb!]
	\setlength{\textfloatsep}{0pt}
	\DontPrintSemicolon
	\caption{ \it Two-Layer Importance Sampling (TLIS) }
	\label{General}
	\textbf{Input}: Data sequence $\{\xi_{t},\,t=1,2,...\}.$ \\
	\textbf{Output}: Estimator of $G_t$ and its quantile. \\ 
	\textbf{Initialization}: Specify a prior distribution $\pi_0$. Draw i.i.d. samples $\{\theta_0^1,...,\theta_0^M\}$ from $\pi_0$. For each $i=1,...,M$, run $N$ simulation replications at $\theta_0^i$ to obtain the outputs $h(\xi^{i,j}_0)$ with $\xi^{i,j}_0 \widesim{\mathrm{i.i.d.}}p(\cdot|\theta_0^i), j=1,...,N.$\\
	\textbf{ At Time stage $t ~(t \geq 1)$}: A new data point $\xi_{t}$ arrives. The following steps are carried out.
	\begin{itemize}
		\item[1.] \textit{Outer-layer Importance Sampling}: For each $\theta_{t-k}^i$, calculate its importance weights $w_{t|t-k}^i$,$k=1,...,\min\{t,K\},$ $i=1,...,M,$ according to \eqref{out_weight}.
		\item[2.] {\textit{Cross Importance Sampling}: Compute $\pi_t$, and draw samples $\theta_t^i\widesim{\text{i.i.d.}}\pi_t, i=1,...,M.$ Choose a proposal parameter set $\{\theta^i\}_{i=1}^M$ according to scenarios.\\
		$\bullet$ Scenario I: $\{\theta^i\}_{i=1}^M=\{\theta_0^i\}_{i=1}^M.$ Then estimate $H(\theta_t^i)$ by $\hat H^{M,N}(\theta_t^i)$ according to \eqref{CIS1}.\\
		$\bullet$ Scenario II: $\{\theta^i\}_{i=1}^M=\{\theta_t^i\}_{i=1}^M.$ For each $\theta^i$, run simulations to obtain the  simulation outputs  $\{h(\xi^{i,j})\}_{j=1}^N,$ where $\xi^{i,j}\widesim{\text{i.i.d.}}F(\cdot;\theta^i)$, and estimate $H(\theta_t^i)$ by $\hat H^{M,N}(\theta_t^i)$ according to \eqref{CIS2}.
			}
		\item[3.] \textit{Quantification}: Compute the c.d.f. estimator $\hat{G}_{t}^{M,N,K}(h)$ according to \eqref{CDF_estimate}, and get its $\alpha$-quantile $~~~~~~\hat{q}_t^{\alpha}= \inf\left\{h: \hat{G}_t^{M,N,K} (h)\geq \alpha \right\}$.		
	\end{itemize}		
\end{algorithm}

\noindent $\bullet$ {\bf Scenario I}: As we mentioned before, in this scenario we can only carry simulations in some special time periods. Here, we assume this time period is $t=0.$ In later stages, we are unable to carry out any new simulation experiments and can only use the simulation outputs at $t=0$ to estimate the system performance. To this end, in each time stage, when we apply Cross Importance Sampling, the proposal parameter set $\{\theta^i\}_{i=1}^M$ will be chosen as $\{\theta_{0}^i\}_{i=1}^M.$ Then we do not need any simulation experiments but can still achieve good estimates of the system performance of the corresponding new outer-layer input parameter samples.  The CIS estimator \eqref{cis} for scenario I is shown as follows.
\begin{align}\label{CIS1}
\hat{H}^{M,N}(\theta_t^i)	=\frac{1}{NM}\sum_{l=1}^M\sum_{j=1}^N\frac{p(\xi_0 ^{l,j}|\theta_t^i)}{p(\xi_0^{l,j}|\theta_0^l)}h(\xi_0^{l,j}),~~\forall i=1,...,M.
\end{align}
Note that our theoretical analysis will justify that this algorithm can achieve relatively accurate quantification for a long time horizon compared with naive Monte Carlo method. However, when new simulation runs are allowed (e.g., during market closure),  we recommend to restart the algorithm by discarding the current simulation outputs and obtaining new outputs under a newly drawn set of $\theta$-samples from the latest posterior distribution. 

\noindent $\bullet$ {\bf Scenario II}: In this scenario, we can do a small number of simulation replications to estimate the system performance. For each newly drawn i.i.d. samples $\theta_t^i,$ $i=1,...,M,$ we do $N$ simulations and obtain the system performance $\{h(\xi_t^{i,j})\}_{j=1}^N,$  $i=1,...,M.$ We then apply CIS with the proposal parameter set chosen as $\{\theta_{t}^i\}_{i=1}^M$.  The CIS estimator  \eqref{cis} for scenario II is shown as follows.
\begin{align}\label{CIS2}
\hat{H}^{M,N}(\theta_t^i)	=\frac{1}{NM}\sum_{l=1}^M\sum_{j=1}^N\frac{p(\xi_t ^{l,j}|\theta_t^i)}{p(\xi_t^{l,j}|\theta_t^l)}h(\xi_t^{l,j}),~~\forall i=1,...,M.
\end{align}
 For notational simplicity, in the rest of the paper, we call Algorithm \ref{General} using CIS estimators \eqref{CIS1} and \eqref{CIS2} respectively as Algorithm TLIS-1 and Algorithm TLIS-2.

\subsection{Other Algorithms}
Some other algorithms can also be interpreted from our Two-Layer Importance Sampling framework. These includes the Direct Monte Carlo  method mentioned in Section \ref{section:modelsetting}, a Simple Importance Sampling method, and an online application of the Green Simulation algorithm that was originally proposed in \cite{feng:2015WSCgreensim}  and \cite{Feng2017GreenSR}.

\noindent $\bullet$ {\bf Direct Monte Carlo Method}: When $K=1$ and the proposal parameter set  includes the target parameter sample only, the Two-Layer Importance Sampling is reduced to the Direct Monte Carlo method. The c.d.f. estimator of $G_t$ obtained in each time stage can be written as follows.
$$\overline{G}_t^{M,N}(h)=\frac{1}{M}\sum_{i=1}^M  \cI\left(h \geq \overline{H}^N(\theta_t^i)\right),$$
where $\theta_t^i\widesim{\text{i.i.d.}}\pi_t,$ and $\overline{H}^N(\theta_t^i)=\frac{1}{N}\sum_{i=1}^N h(\xi^{i,j}_t)$ is the sample average estimate of the system performance of $H(\theta_t^i),$ where
$\{\xi^{i,j}_t\}_{j=1}^N\widesim{\text{i.i.d.}}p(\cdot|\theta_t^i).$ Note that the Direct Monte Carlo method is hardly applicable to the online setting, due to its high computational cost at each time $t$.

\noindent $\bullet$ {\bf Simple Importance Sampling}: Unlike the Direct Monte Carlo method, the Simple Importance Sampling algorithm supports fast decision making in Scenario I. It only draws new input parameter samples and runs simulation experiments in the beginning of the algorithm. When a new data point comes, it simply transforms these samples and the corresponding system performance estimates by importance sampling to the target new posterior distribution. The c.d.f. estimator of $G_t$ obtained at each time stages can be written as follows.
$$\tilde{G}_t^{M,N}(h)=\frac{1}{M}\sum_{i=1}^M \frac{\pi_t(\theta_0^i)}{\pi_0(\theta_0^i)}\cI\left(h\geq \tilde{H}^N(\theta_0^i)\right),$$
where $\theta_0^i\widesim{\text{i.i.d.}}\pi_0,$ and $\tilde{H}^N(\theta_0^i)=\frac{1}{N}\sum_{j=1}^N h(\xi^{i,j}_0)$ is the sample average estimate of the system performance of $H(\theta_0^i),$ where
$\{\xi^{i,j}_0\}_{j=1}^N\widesim{\text{i.i.d.}}p(\cdot|\theta_0^i).$ { One significant difference between Simple Importance Sampling and TLIS-1 is that Simple Importance Sampling never draws new $\theta-$samples from current posterior distribution and thus, heavily depends on the $\theta-$samples at the initialization stage.  As the posterior distribution evolves, $\pi_t$ could significantly differ from $\pi_0$ and as a result, the variance of the importance weights can explode over time.}

\noindent $\bullet$ {\bf Green Simulation}: \cite{feng:2015WSCgreensim}  and \cite{Feng2017GreenSR} design the Green Simulation algorithm to save the simulation budget for off-line system performance estimation. It  can also be applied for the online quantification of input uncertainty. The main difference of Green Simulation from Two-Layer Importance Sampling is that it  does not include the inner-layer Cross Importance Sampling, but uses the sample average estimator instead. Green Simulation also requires new simulations at every time stage, so it is only applicable to Scenario II mentioned above. Without new simulations at each time stage, Green Simulation reduces to the Simple Importance Sampling method. The c.d.f. estimator of $G_t$ obtained at each time stage can be written as follows.

$$\breve{G}_{t}^{M,N,K}(h)=\frac{1}{KM}\sum_{k=0}^{K-1} \sum_{i=1}^M \frac{\pi_t(\theta_{t-k}^i)}{\pi_(\theta_{t-k}^i)} \cI\left(h \geq \breve H^{N}(\theta_{t-k}^i)\right),$$
where $\theta_s^i\widesim{\text{i.i.d.}} \pi_s,~s=t-k,...,t,$   and $\breve{H}^N(\theta_{t-k}^i)=\frac{1}{N}\sum_{j=1}^N h(\xi^{i,j}_{t-k})$ is the sample average estimate of the system performance of $H(\theta_{t-k} ^i),$ where
$\{\xi^{i,j}_{t-k}\}_{j=1}^N\widesim{\text{i.i.d.}}p(\cdot|\theta_{t-k}^i).$
We will empirically compare Green Simulation algorithm with our Two-Layer Importance Sampling algorithm in Experiment 2 in Section \ref{experiment}.

\section{Convergence Analysis}
In this section, we analyze the convergence and the convergence rate of Two-Layer Importance Sampling. We consider the case where the parametric input distribution belongs to the exponential families of distributions. Note that since every distribution from exponential families has a conjugate prior, they are widely used in parametric models.  We will first briefly introduce the exponential families of distributions and show some important properties of them. Using these properties, we will then show that Two-Layer Importance Sampling can achieve consistent estimators and faster convergence speed than the Direct Monte Carlo Method. We will further discuss in Appendix \ref{discuss_general} the general conditions for our  convergence results when applying the algorithm to an arbitrary parametric input distribution.

Since there are multiple sources of randomness coming from data, input distribution, and simulation, we first  construct the probability space required.  Suppose $\theta$ takes value in a parameter space $\Theta\subset\RR^s,$ which is equipped with a Borel $\sigma-$ algebra $\cB_{\theta}$.  Let $(\Omega, \cF,\PP^1_\theta)$ be the probability space in which $\xi_\theta$ takes value, where $\Omega\in \RR^d$, $\cF$ is the Borel $\sigma$-algebra on $\Omega,$ and $\PP^1_\theta$ is the distribution of $\xi_\theta$ given $\theta.$ Then $(\Omega^n, \cF^n,\PP_\theta^n)$ is the probability space in which $n$ i.i.d. copies of $\xi_\theta$ take values. Here, $\cF^n$ is the product $\sigma$-algebra $\cF\otimes\cF\otimes\cdot\cdot\cdot\otimes \cF$, and $\PP_\theta^n$ is the product measure $\PP^1_\theta\times\PP^1_\theta\cdot\cdot\cdot\times \PP^1_\theta.$ By Kolmogorov's extension theorem (see, e.g., \cite{durrett2019probability}, Theorem A.3.1), we can extend $(\Omega^n, \cF^n,\PP_\theta^n)$ to  $(\Omega^\NN, \cF^\NN,\PP_\theta^\NN),$ where $\Omega^\NN$ is the space of all infinite sequences in $\Omega,$ and $\cF^\NN$ is the $\sigma$-algebra generated  by all sets taking the following form
$$\left\{\tilde \xi\in \Omega^\NN:(\tilde \xi_1,\tilde \xi_2,\ldots,\tilde \xi_n)\in \cF^n\right\},$$
where $\tilde \xi_i$ is the $i$th component of $\tilde \xi.$ The probability $\PP_\theta^\NN$ is defined as the product measure that
coincides with $\PP_\theta^n$ on $ \cF^n$, i.e.,
$$\PP_\theta^\NN\left( \left\{\tilde \xi\in \Omega^\NN:(\tilde \xi_1,\tilde \xi_2,\ldots,\tilde \xi_n)\in H\right\}\right)=\PP_\theta^n(H),~~\forall H\in \cF^n.$$
We remark that this construction follows that in \cite{wu2018bayesian}. Please refer to Section 2.1 in \cite{wu2018bayesian} for more details.

Moreover, we  make the following assumption on the parameter space which holds throughout the rest of the paper.
\begin{assumption}\label{TLIS-1}
	The parameter space $\Theta$ is compact.
\end{assumption}
This assumption can be easily satisfied in practice. For example, we can use our prior knowledge on the parameter to set up such a compact set.

\subsection{ Exponential Families of Distributions (EFDs) }\label{sec_exp}

Exponential families of distributions (EFDs) include most of the commonly used distributions, such as Gaussian distributions, exponential distributions, and Poisson distributions.  One important property EFDs enjoy is the existence of  conjugate priors, which makes them popular in Bayesian statistics and hence suitable for the role of input models in our setting. EFDs have the following form of probability density functions: 
\begin{align*}
p(x\mid\theta) = \kappa(x) \exp \left ( {\theta}^\top{T}(x) -A(\theta)\right ),
\end{align*}
where ${T}(x), ~A(\theta)$, and $ \kappa(x)$ are known functions. Denote the true parameter as $\theta^c\in\Theta\subseteq\cN\subseteq \RR^s,$ where $\cN$ is the natural parameter space defined as
$$\cN=\left\{\theta: A(\theta)<\infty \right\}.$$ 

Suppose at time $t,$ we have a data point $\xi_t\sim p(\cdot|\theta^c).$
Assuming  non-informative uniform prior $\pi_0$, i.e.,
$\pi_0(\theta)={\cI}(\theta\in\Theta)/\mu(\Theta),$ where $\mu$ is the Lebesgue measure on $\RR^s,$ then the posterior distribution $\pi_t$ of $\theta$  given $\{\xi_i\}_{i=1}^t$ is as follows.
$$\pi_t(\theta) = f(\boldsymbol\chi_t,t)\pi_0( \theta) \exp \left (\theta^\top \boldsymbol\chi_t - t A(\theta) \right ),$$
where $$ \boldsymbol\chi_t=\sum_{i=1}^t {T}(\xi_i),~\text{and}~f(\boldsymbol\chi_t,t)=\int_\Theta \pi_0(\theta) \exp \left (\theta^\top \boldsymbol\chi_t - t A(\theta) \right )<\infty.$$

When we apply Two-Layer Importance Sampling, we need to reuse the $\theta-$samples from the posterior distributions at previous time stages. Recall that the estimator in \eqref{pi_estimate}, i.e., \begin{align*}
\hat{G}_{t}^{M,K}(\theta)=\frac{1}{KM}\sum_{k=0}^{K-1} \sum_{i=1}^M \frac{\pi_t(\theta_{t-k}^i)}{\pi_{t-k}(\theta_{t-k}^i)} \cI(h \geq H(\theta_{t-k}^i)),
\end{align*}
is unbiased and has variance that depends on the likelihood ratio $ \pi_t(\theta_{t-k}^i)/\pi_{t-k}(\theta_{t-k}^i), ~k=0,...,K-1$. When the variance of $\pi_t/\pi_{t-k}$ is extremely large, $\pi_{t-k}$ is not a good proposal distribution for $\pi_t,$ and we should not reuse samples from time stage $t-k.$ Fortunately, the next lemma verifies that for EFDs and a fixed $K,$ the variance of the likelihood ratio is bounded almost surely (a.s.).  {Due to space limit, we defer all the technical proofs   appeared in Section \ref{sec_exp} to Appendix \ref{exp_proof}.}
\begin{theorem}\label{thm:exp_out_bounded}
	Suppose $\nabla^2A(\theta)$ is positive definite for all $\theta\in\cN,$ and  $\norm{{T}(\cdot)}_2$ is bounded in $\Omega.$ For any fixed constant $k\geq 0,$ as $t\rightarrow\infty,$
	\begin{align}\label{exp_weight_var}
	\EE\left(\frac{\pi_{t}(\theta)}{\pi_{\max\{t-k,0\}}(\theta) }\right)^2\rightarrow 1 ~~\text{ a.s. ($\PP^\NN_{\theta^c}$)},
	\end{align}
	and
	$$\EE\left(\frac{\pi_{t}(\theta)}{\pi_{\max\{t-k,0\}}(\theta) }\right)^3<\infty ~~\text{ a.s. ($\PP^\NN_{\theta^c}$)}.$$
\end{theorem}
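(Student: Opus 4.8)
The plan is to pass from the likelihood ratios to the normalizing constants of the posteriors, and then combine an exact algebraic cancellation with a Bernstein--von Mises (local asymptotic normality) argument; throughout, $\theta$ is taken to be distributed according to the proposal $\pi_{\max\{t-k,0\}}$, which for $t>k$ is $\pi_{t-k}$. Write $Z(\boldsymbol\chi,n):=\int_\Theta\exp(\theta^\top\boldsymbol\chi-nA(\theta))\,d\theta$, so that on $\Theta$ one has $\pi_n(\theta)=\exp(\theta^\top\boldsymbol\chi_n-nA(\theta))/Z(\boldsymbol\chi_n,n)$ (the constant prior cancels with its normalization) and hence $\pi_t(\theta)/\pi_{t-k}(\theta)=\tfrac{Z(\boldsymbol\chi_{t-k},t-k)}{Z(\boldsymbol\chi_t,t)}\exp(\theta^\top\Delta_t-kA(\theta))$ with $\Delta_t:=\boldsymbol\chi_t-\boldsymbol\chi_{t-k}=\sum_{\tau=t-k+1}^{t}T(\xi_\tau)$. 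A direct computation then gives, for $r=2,3$,
\[
\EE\!\left[\Big(\tfrac{\pi_t(\theta)}{\pi_{t-k}(\theta)}\Big)^{r}\right]
=\frac{Z(\boldsymbol\chi_{t-k},t-k)^{\,r-1}\;Z\!\big(r\boldsymbol\chi_t-(r-1)\boldsymbol\chi_{t-k},\;t+(r-1)k\big)}{Z(\boldsymbol\chi_t,t)^{\,r}} .
\]
Since $\Theta$ is compact with $0<\mu(\Theta)<\infty$, and $A$ is convex (hence lower semicontinuous) and finite on $\Theta\subseteq\cN$, it attains a finite minimum on $\Theta$; together with $\norm{T(\cdot)}_2$ bounded this makes every integrand above bounded and strictly positive on $\Theta$, so each $Z(\cdot,\cdot)$ is finite and positive, and the $r=3$ instance is finite for every data path with $T$ bounded --- in particular a.s.~$(\PP^\NN_{\theta^c})$, which is the third claim (the Laplace estimates below in fact bound it in $t$).

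For the second moment the lower bound $\EE[(\pi_t/\pi_{t-k})^2]\ge(\EE[\pi_t/\pi_{t-k}])^2=1$ is Jensen. For the matching upper bound, by the strong law and the exponential-family identity $\EE_{\theta^c}[T(\xi)]=\nabla A(\theta^c)$ we have $\boldsymbol\chi_n/n\to\eta^c:=\nabla A(\theta^c)$ a.s., and I work on that event, assuming $\theta^c$ lies in the interior of $\Theta$ (otherwise a one-sided version is needed; see below). Put $H:=\nabla^2A(\theta^c)$ (positive definite) and $\hat\theta_n:=(\nabla A)^{-1}(\boldsymbol\chi_n/n)$, the posterior mode; then $\hat\theta_n\to\theta^c$, and since $\boldsymbol\chi_t/t-\boldsymbol\chi_{t-k}/(t-k)=O(k/t)$ also $\hat\theta_t-\hat\theta_{t-k}=O(k/t)$. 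Reparametrizing $\theta=\hat\theta_t+u/\sqrt t$, the Bernstein--von Mises/Laplace behaviour of exponential-family posteriors gives that the laws of $u$ under $\pi_t$ and under $\pi_{t-k}$ converge to $N(0,H^{-1})$ and $N\big(0,\tfrac{t}{t-k}H^{-1}\big)$ respectively (the centre offset $\sqrt t(\hat\theta_{t-k}-\hat\theta_t)=O(k/\sqrt t)$ and the scale mismatch $\sqrt{(t-k)/t}=1+o(1)$ both vanish). Carrying this substitution through the $r=2$ identity --- the Jacobians cancel, leaving $\EE[(\pi_t/\pi_{t-k})^2]=\int q_t(u)^2/q_{t-k}(u)\,du$ with $q_t,q_{t-k}$ those laws --- and controlling the non-Gaussian remainder reduces the integral, up to a factor $1+o(1)$, to the Gaussian $\chi^2$-type integral $\int \phi_{t^{-1}H^{-1}}(u)^2/\phi_{(t-k)^{-1}H^{-1}}(u)\,du$, which evaluates in closed form to $\big(t^2/(t^2-k^2)\big)^{s/2}\to1$. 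Together with the lower bound this proves \eqref{exp_weight_var}.

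The crux, and the step I expect to be the main obstacle, is making this rigorous: one needs the Laplace expansion $\log Z(\boldsymbol\chi,n)=nA^*(\boldsymbol\chi/n)-\tfrac s2\log n+\tfrac s2\log(2\pi)-\tfrac12\log\det\nabla^2A(\hat\theta(\boldsymbol\chi/n))+o(1)$ ($A^*$ the convex conjugate of $A$) with the $o(1)$ \emph{uniform} over the random-but-convergent tilts $\boldsymbol\chi/n$ that arise, so that the three $\log Z$ terms in the $r=2$ identity cancel to within $o(1)$. The pieces cancel as follows. In the leading term, a second-order Taylor expansion of $A^*$ about $\eta^c$ (using $\nabla A^*(\eta^c)=\theta^c$, $\nabla^2A^*(\eta^c)=H^{-1}$) makes the zeroth- and first-order contributions cancel \emph{exactly}, courtesy of the two balancing identities $(t-k)+(t+k)=2t$ and $\boldsymbol\chi_{t-k}+(2\boldsymbol\chi_t-\boldsymbol\chi_{t-k})=2\boldsymbol\chi_t$, while the quadratic and remainder contributions are $o(1)$ a.s.\ using $\norm{\boldsymbol\chi_n-n\eta^c}_2=O(\sqrt{n\log\log n})$ (law of the iterated logarithm), boundedness of $\Delta_t$, and $C^3$-smoothness of $A^*$ near $\eta^c$ (which follows from real-analyticity of $A$ on the interior of $\cN$ together with $H$ positive definite); the $-\tfrac s2\log n$ terms combine to $\tfrac s2\log\frac{t^2}{t^2-k^2}\to0$; the $\log(2\pi)$ terms cancel identically; and the $\log\det\nabla^2A(\hat\theta)$ terms cancel by continuity of $\nabla^2A$ since all three modes converge to $\theta^c$. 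The remaining delicate point, and where I would be most careful in writing the appendix proof, is the behaviour of $A$ near $\partial\cN$ and the boundary case $\theta^c\in\partial\Theta$, dispatched under Assumption~\ref{TLIS-1} together with mild regularity of the family.
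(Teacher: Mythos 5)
Your proposal is correct and takes essentially the same route as the paper's proof in Appendix A.1: both first reduce the moments to the exact ratio of normalizing constants $Z(\boldsymbol\chi_{t-k},t-k)\,Z(2\boldsymbol\chi_{t}-\boldsymbol\chi_{t-k},t+k)/Z(\boldsymbol\chi_{t},t)^2$ (and its $r=3$ analogue for finiteness of the third moment) and then apply a Laplace approximation whose leading exponent is the MLE-evaluated log-likelihood, i.e.\ your $nA^*(\boldsymbol\chi_n/n)$, producing the same $\left(t^2/(t^2-k^2)\right)^{s/2}\to 1$ prefactor and the same cancellation of the Hessian determinants. The only local difference is how the leading exponentials are shown to cancel --- the paper sandwiches their difference between $0$ and a quantity vanishing a.s.\ via the two MLE optimality inequalities and strong consistency, whereas you Taylor-expand $A^*$ about $\nabla A(\theta^c)$ and control the quadratic term with the law of the iterated logarithm --- which is a minor variation rather than a different proof.
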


Note that \eqref{exp_weight_var} only shows that the variance of $\pi_{t}/\pi_{t-k}$ is bounded  for any single $k$ almost surely. In the estimator \eqref{pi_estimate}, we reuse $K$ time stages and need all the likelihood ratios to be jointly bounded. This can be verified by the following corollary.
\begin{corollary}\label{lem_bound}
	Suppose $\nabla^2A(\theta)$ is positive definite for all $\theta\in\cN,$ and  $\norm{{T}(\cdot)}_2$ is bounded in $\Omega.$ For any fixed constant $K\geq 0,$ 
	$$\PP^\NN_{\theta^c}\left(\exists C>0,\EE\left(\frac{\pi_{t}(\theta)}{\pi_{ \max\{t-k,0\}}(\theta) }\right)^2<C,\forall 1\leq k\leq K, \forall t>0.\right)=1.$$ 
\end{corollary}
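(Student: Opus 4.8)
The plan is to upgrade the pointwise‑in‑$k$ statement of Theorem~\ref{thm:exp_out_bounded} to a bound that is uniform over $1\leq k\leq K$ and over all $t>0$, exploiting that only finitely many values of $k$ occur and that each individual second moment is finite. First I would let $\Omega_k$ be the $\PP^\NN_{\theta^c}$‑probability‑one event on which $\EE\big(\pi_t(\theta)/\pi_{\max\{t-k,0\}}(\theta)\big)^2\to 1$ as $t\to\infty$, and work on $\Omega_0:=\bigcap_{k=1}^{K}\Omega_k$; being a finite intersection of probability‑one events, $\PP^\NN_{\theta^c}(\Omega_0)=1$. Fix a path $\omega\in\Omega_0$; everything below may depend on $\omega$.

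Next, for each $k\in\{1,\dots,K\}$, convergence to $1$ yields a path‑dependent threshold $T_k=T_k(\omega)$ with $\EE\big(\pi_t(\theta)/\pi_{\max\{t-k,0\}}(\theta)\big)^2\leq 2$ for all $t\geq T_k$; this disposes of the tail. For the finitely many remaining pairs $(t,k)$ with $1\leq t<T_k$ and $1\leq k\leq K$, I would check that each second moment is finite. Writing $s:=\max\{t-k,0\}$ and taking the expectation over the proposal $\pi_s$ as in Theorem~\ref{thm:exp_out_bounded}, it equals $\int_\Theta \pi_t(\theta)^2/\pi_s(\theta)\,d\theta$; on the compact set $\Theta\subseteq\cN$ the map $A$ is continuous (it is smooth on $\cN$ by the Hessian assumption), $\theta\mapsto\theta^\top\boldsymbol\chi_\tau$ is bounded, and $f(\boldsymbol\chi_\tau,\tau)\in(0,\infty)$, so the explicit EFD posterior densities $\pi_\tau(\theta)=f(\boldsymbol\chi_\tau,\tau)\,\pi_0(\theta)\exp(\theta^\top\boldsymbol\chi_\tau-\tau A(\theta))$ are each bounded above and below by strictly positive constants on $\Theta$, which makes the integral finite.

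Finally I would set
$$
C:=1+\max\Big\{\,2,\ \max_{1\leq k\leq K}\ \max_{1\leq t<T_k}\ \EE\big(\pi_t(\theta)/\pi_{\max\{t-k,0\}}(\theta)\big)^2\,\Big\},
$$
a finite (though random) constant, being a maximum of finitely many finite numbers. By construction $\EE\big(\pi_t(\theta)/\pi_{\max\{t-k,0\}}(\theta)\big)^2<C$ for every $1\leq k\leq K$ and every $t>0$ on this path, and since $\PP^\NN_{\theta^c}(\Omega_0)=1$ the corollary follows.

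I do not expect a genuine obstacle. The only subtlety worth flagging is that $C$ is inherently random — it depends on the thresholds $T_k(\omega)$ and on finitely many path‑dependent early moments — which is compatible with the statement because the quantifier ``$\exists C>0$'' lives inside the probability. The one place that needs a little care is the finiteness of the early second moments, handled via compactness of $\Theta$ and the uniform positivity and boundedness of EFD posterior densities on $\Theta$; note that the constants in those density bounds degrade as $t$ grows, which is precisely why Theorem~\ref{thm:exp_out_bounded} is needed to handle the tail $t\geq T_k$ rather than a single direct estimate.
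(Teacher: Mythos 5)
Your proposal is correct and follows essentially the same route as the paper's own proof: restrict to the probability-one event where, for each of the finitely many $k\le K$, the second moment converges to $1$, and then invoke the fact that a convergent sequence is bounded to extract a path-dependent constant $C(\omega)$. You are somewhat more careful than the paper in explicitly intersecting over $k$ and in verifying finiteness of the early-time second moments via compactness of $\Theta$, but these are refinements of the same argument rather than a different approach.
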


Corollary \ref{lem_bound} shows that the variance of the weight is uniformly bounded in the sense that for any given $t$ and any $1\leq k\leq K$, the variance of the importance ratio $\frac{\pi_{t}(\theta)}{\pi_{ \max\{t-k,0\}}(\theta)}$ is bounded almost surely. Thus, importance sampling can be used without the worry about  variance explosion.


In addition to the outer-layer importance samling, we also apply CIS in the inner-layer to improve the system performance estimates. The next theorem shows that under certain conditions, the CIS estimator also has bounded variance.
\begin{theorem}\label{thm:exp_in_bound}
	Suppose $\Theta'=\{2\theta_1-\theta_2\big|\theta_1\in\Theta,\theta_2\in\Theta \}\in\cN$ and  for every $\theta\in\Theta',$ $\EE\{h(\xi_{\theta})^2\}<\infty.$ There exists some constant $C_1>0$ such that 
	$$\sup_{\theta_1,\theta_2\in \Theta}\Var\left\{\frac{p(\xi|\theta_1)}{p(\xi|\theta_2)}h(\xi)\Big| \theta_1,\theta_2\right\}\leq C_1.$$
\end{theorem}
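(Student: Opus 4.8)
The plan is to bound the conditional variance by the conditional second moment and then exploit the product form of exponential-family densities. Fix $\theta_1,\theta_2\in\Theta$ and take all expectations with respect to $\xi\sim p(\cdot\mid\theta_2)$ (in the CIS estimator \eqref{cis} the sampling parameter is the one appearing in the denominator of the likelihood ratio). Since $\Var\{Z\mid\theta_1,\theta_2\}\le\EE\{Z^{2}\mid\theta_1,\theta_2\}$, it suffices to bound, uniformly over $\theta_1,\theta_2\in\Theta$,
$$
\EE_{\xi\sim p(\cdot\mid\theta_2)}\!\left[\left(\frac{p(\xi\mid\theta_1)}{p(\xi\mid\theta_2)}\right)^{2} h(\xi)^{2}\right].
$$

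Next I would substitute $p(x\mid\theta)=\kappa(x)\exp(\theta^\top T(x)-A(\theta))$, so that the likelihood ratio equals $\exp((\theta_1-\theta_2)^\top T(x)-A(\theta_1)+A(\theta_2))$. Squaring it, multiplying by $p(x\mid\theta_2)$, and collecting the terms linear in $T(x)$, the integrand becomes $\kappa(x)\exp((2\theta_1-\theta_2)^\top T(x))\,h(x)^{2}\cdot\exp(A(\theta_2)-2A(\theta_1))$. Writing $\theta_3:=2\theta_1-\theta_2$, which lies in $\Theta'\subseteq\cN$ by hypothesis, and inserting $e^{A(\theta_3)}e^{-A(\theta_3)}$, one recognizes $\int\kappa(x)\exp(\theta_3^\top T(x))h(x)^{2}\,dx=e^{A(\theta_3)}\EE_{\xi\sim p(\cdot\mid\theta_3)}[h(\xi)^{2}]$, yielding the exact identity
$$
\EE_{\xi\sim p(\cdot\mid\theta_2)}\!\left[\left(\frac{p(\xi\mid\theta_1)}{p(\xi\mid\theta_2)}\right)^{2} h(\xi)^{2}\right]=\EE_{\xi\sim p(\cdot\mid\theta_3)}\!\big[h(\xi)^{2}\big]\cdot\exp\!\big(A(\theta_3)+A(\theta_2)-2A(\theta_1)\big).
$$
This reduces the claim to two uniform bounds. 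For the exponential factor: convexity of $A$ on $\cN$ gives $A(\theta_1)=A(\frac12\theta_3+\frac12\theta_2)\le\frac12 A(\theta_3)+\frac12 A(\theta_2)$, so the exponent is nonnegative, and since $A$ is finite (indeed smooth, under the standing assumption that $\nabla^{2}A$ is positive definite on $\cN$) and hence continuous on $\cN$, it is bounded on the compact sets $\Theta$ and $\Theta'$ — here $\Theta'$ is compact as the continuous image of $\Theta\times\Theta$ under $(\theta_1,\theta_2)\mapsto 2\theta_1-\theta_2$, and $\Theta$ is compact by Assumption~\ref{TLIS-1} — so $\sup_{\theta_1,\theta_2\in\Theta}\exp(A(\theta_3)+A(\theta_2)-2A(\theta_1))<\infty$. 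For the moment factor: $\theta\mapsto\EE_{\xi\sim p(\cdot\mid\theta)}[h(\xi)^{2}]=e^{-A(\theta)}\int\kappa(x)e^{\theta^\top T(x)}h(x)^{2}\,dx$ is finite on $\Theta'$ by hypothesis, and $\theta\mapsto\log\int\kappa(x)e^{\theta^\top T(x)}h(x)^{2}\,dx$ is again a convex log-partition-type function, so the same compactness argument gives $\sup_{\theta_3\in\Theta'}\EE_{\xi\sim p(\cdot\mid\theta_3)}[h(\xi)^{2}]<\infty$. Taking $C_1$ to be the product of the two suprema completes the proof.

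The exponent bookkeeping in the second paragraph is routine. The substantive step — the place where the hypotheses $\Theta'\subseteq\cN$ and $\EE\{h(\xi_\theta)^{2}\}<\infty$ on $\Theta'$ are genuinely used — is the uniformization: upgrading pointwise finiteness of the log-partition function $A$ and of the weighted partition function $\int\kappa(x)e^{\theta^\top T(x)}h(x)^{2}\,dx$ to uniform bounds over the compact parameter sets. This rests on the regularity of cumulant functions of exponential families (finite convex functions are continuous on the interior of their domain, and $A$ is in fact smooth on $\cN$ here) together with compactness of $\Theta$ and hence of $\Theta'$. If one prefers to sidestep any delicacy about boundary points, it is enough to observe that in the cases of interest $\cN$ (and the domain of the weighted partition function) is open, so that a compact set contained in it lies in its interior automatically.
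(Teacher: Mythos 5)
Your proposal is correct and follows essentially the same route as the paper's proof: bound the conditional variance by the second moment, use the exponential-family form to rewrite the integral as $\exp(A(\theta_2)-2A(\theta_1))\int\kappa(x)h(x)^2\exp((2\theta_1-\theta_2)^\top T(x))\,dx$, and invoke convexity/continuity of $A$ and the weighted log-partition function together with compactness of $\Theta$ (hence of $\Theta'$) to get a uniform bound. Your write-up is in fact somewhat more careful than the paper's, which compresses the final uniformization step into the remark that these functions are ``easy to see'' to be bounded.
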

For EFDs  whose  natural parameter space is $\RR^s$ (e.g., normal distributions with known variance), the assumption  $\Theta'=\{2\theta_1-\theta_2\big|\theta_1\in\Theta,\theta_2\in\Theta \}\in\cN$ is naturally satisfied. For other distributions, this assumption can be violated. However, in practice, we only need  a weaker assumption: For any $\theta_i$ and $\theta_j$ in the sample set $\{\theta_1,...,
\theta_N\},$ $2\theta_i-\theta_j\in \cN.$ In fact, this weaker assumption holds with high probability when $t$ is large. To see it clearly, we consider the natural parameter space to be $R^+=(0,+\infty).$ By Bernstein-von Mises theorem \citep{van2000asymptotic}, we know that as $t\rightarrow \infty$, $$\EE_{\theta\sim\pi_t}[\theta]\rightarrow \theta^c~~\text{and}~~\Var_{\theta\sim\pi_t}[\theta]=O\left(\frac{1}{t}\right).$$
By Chebyshev's inequality,  we know that the sample will concentrate in the neighborhood of the expectation with high probability,
$$\PP(|{\theta_t-\EE\theta_t}|\leq {\EE\theta_t}/3)\geq 1-9\frac{\Var{\theta_t}}{(\EE\theta_t)^2}=1-O\left(\frac{1}{t}\right).$$
Given the inner-layer sample size $N$, the probability that all $N$ $\theta-$samples fall in the region  $\left(\frac{2{\EE\theta_t}}{3},\frac{4{\EE\theta_t}}{3}\right)$ is approximately $1-O\left(\frac{N}{t}\right)$ when $t$ is large. Moreover, $ \forall \theta_1, \theta_2\in \left(\frac{2}{3}{\EE\theta_t},\frac{4}{3}{\EE\theta_t}\right),$  $2\theta_1-\theta_2>0$  holds. This implies the weaker assumption is satisfied with very high probability. However,  the weaker condition could still be violated when $t$ is small. Thus, CIS is not suitable for the cases where the posterior distribution is dispersive due to the lack of input data when $t$ is small. We can empirically observe this phenomenon in our numerical example in Section \ref{experiment}.


\subsection{Convergence Results}
Theorem \ref{thm:exp_out_bounded}, Corollary \ref{lem_bound}, and Theorem~\ref{thm:exp_in_bound} together provide us the key properties of EDFs to show the convergence property of  Two-Layer Importance Sampling algorithm. We first re-state the following assumption such that these properties hold for EDFs.
\begin{assumption}\label{assumption_Exp}
	$~$
	\begin{itemize}
		\item[1.] Condition in Theorem \ref{thm:exp_out_bounded}. That is, suppose $\nabla^2A(\theta)$ is positive definite for all $\theta\in\cN$, and  $\norm{{T}(\cdot)}_2$ is bounded in $\Omega.$
		\item[2.] Condition in Theorem \ref{thm:exp_in_bound}. That is, $\Theta'=\{2\theta_1-\theta_2\big|\theta_1\in\Theta,\theta_2\in\Theta \}\in\cN$ and  for every $\theta\in\Theta',$ $\EE\{h(\xi_{\theta})^2\}<\infty.$
	\end{itemize}
\end{assumption}
We introduce the following notations. Recall that $N,$ $M,$ and $K$ are the number of inner-layer simulation replications, number of outer-layer $\theta-$samples, and number of previous time stages reused, respectively. Recall that the estimator of the posterior distribution of the system performance at time $t$ under the hyper-parameter tuple, $(M,N,K),$ defined in \eqref{CDF_estimate}, is denoted as 
$\hat{G}_{t}^{M,N,K}(h).$
If we ignore the inner layer simulation error, i.e., $N=\infty,$ the c.d.f. estimate defined in \eqref{pi_estimate}  is denoted as 
$\hat{G}_{t}^{M,K}(h).$ 
Note that since we use the true system performance $H(\theta),$ there is only input uncertainty in $\hat{G}_{t}^{M,K}(h).$ Denote the p.d.f of $G_t$ as $g_t.$ We define the estimators of $g_t$  corresponding to $\hat{G}_{t}^{M,N,K}(h)$ and $\hat{G}_{t}^{M,k}(h)$ respectively, as follows.
$$\hat{g}_{t}^{M,N,K}(h)=\frac{1}{KM}\sum_{k=0}^{K-1} \sum_{j=1}^M w_{t|t-k}^j \cI\left(h = \hat H^{M,N}(\theta_{t-k}^i)\right),$$
$$\hat{g}_{t}^{M,K}(h)=\frac{1}{KM}\sum_{k=0}^{K-1} \sum_{j=1}^M w_{t|t-k}^j \cI\left(h = H(\theta_{t-k}^i)\right).$$ 
For any fixed $\alpha\in(0,1),$ let $\hat{q}_{t}^{M,N,k}$,  $\hat{q}_{t}^{M,k}$, and $q_{t}$ denote the quantile estimators obtained by TLIS with and  without simulation uncertainty, and the true $\alpha $ quantile at time $t$, i.e.,
\begin{align*}
&\hat{q}_{t}^{M,N,K}=\inf\left\{h: \hat{G}_t^{M,N,K}(h)\geq \alpha \right\},\\
&\hat{q}_{t}^{M,K}=\inf\left\{h: \hat{G}_t^{M,K}(h)\geq \alpha \right\},\\
&q_{t}=\inf\left\{h: G_t(h)\geq \alpha \right\}.
\end{align*}
Since we assume $\alpha$ is fixed, here we omit $\alpha$  for notational simplicity.
Recall that the estimate of the posterior distribution of the input parameter  is as follows.
$$\hat{\pi}_{t}^{M,K}(h)=\frac{1}{KM}\sum_{k=0}^{K-1} \sum_{i=1}^M w_{t|t-k}^j \cI\left(\theta = \theta_{t-k}^i\right).$$

We next analyze  the asymptotic properties of  the quantile estimator $\hat{q}_t^{M,N,K}$, as the inner and outer sample sizes ($N$ and $M$) both go to infinity.
Specifically, we prove the consistency and asymptotic normality under the following set of conditions.

\subsubsection{Consistency}
We first show that our proposed estimator $\hat{q}_{t}^{M,N,K} $ is consistent, which implies when we run enough simulation replications, we can get a precise quantification of the input uncertainty. It turns out that, under Assumption \ref{assumption_Exp},  $\hat{q}_{t}^{M,N,K} $  is consistent in the sense that it converges to ${q}_{t}$ as $N$  first goes to infinity and then $M$ goes to infinity, which is shown in the following theorem. 

\begin{theorem}\label {thm:consistency}
	Under  Assumption \ref{assumption_Exp}, we have 
	$$\lim_{M\rightarrow\infty}\lim_{N\rightarrow\infty}\hat{q}_{t}^{M,N,K}= q_{t},~~ \mathrm{a.s.}$$
\end{theorem}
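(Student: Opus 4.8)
The plan is to prove consistency in two stages, matching the iterated-limit structure of the statement: first send $N\to\infty$ with $M$ and $K$ fixed, showing that the inner-layer CIS error vanishes so that $\hat{G}_t^{M,N,K}\to\hat{G}_t^{M,K}$ pointwise (and $\hat q_t^{M,N,K}\to\hat q_t^{M,K}$); then send $M\to\infty$, showing that the outer-layer importance-sampling estimator $\hat{G}_t^{M,K}$ converges to the true $G_t$ uniformly in $h$, hence the quantiles converge. For the first stage, fix the $\theta$-samples $\{\theta_{t-k}^i\}$ and the proposal set. By Theorem~\ref{thm:exp_in_bound}, each CIS estimator $\hat H^{M,N}(\theta_{t-k}^i)$ is an average of $NM$ terms whose variance is bounded (and each is unbiased for $H(\theta_{t-k}^i)$ by \eqref{eq_cis}), so by the strong law of large numbers $\hat H^{M,N}(\theta_{t-k}^i)\to H(\theta_{t-k}^i)$ a.s.\ as $N\to\infty$, for every $i,k$ simultaneously (finitely many). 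Since the indicator $\cI(h\ge\cdot)$ is continuous at every point except where $h$ equals one of the $H(\theta_{t-k}^i)$, and almost surely $h\neq H(\theta_{t-k}^i)$ for the relevant $h$ near $q_t$ (the induced distribution $G_t$ is continuous under the EFD/positive-definite-Hessian assumptions, so $H(\theta)$ has a density and ties occur with probability zero), we get $\hat G_t^{M,N,K}(h)\to\hat G_t^{M,K}(h)$ a.s., and the standard quantile-convergence lemma (convergence of CDFs plus continuity/strict monotonicity at the quantile level) gives $\lim_{N\to\infty}\hat q_t^{M,N,K}=\hat q_t^{M,K}$ a.s.

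For the second stage, I would show $\hat q_t^{M,K}\to q_t$ a.s.\ as $M\to\infty$. Write $\hat G_t^{M,K}(h)=\frac1K\sum_{k=0}^{K-1}\big(\frac1M\sum_{i=1}^M w_{t|t-k}^i\cI(h\ge H(\theta_{t-k}^i))\big)$. For each fixed $k$, the summands are i.i.d.\ (in $i$) with mean equal to the $k$-th term of \eqref{eq_is}, and by Corollary~\ref{lem_bound} the weights $w_{t|t-k}^i$ have bounded second moment a.s., so in particular finite first moment; thus the SLLN applies and the inner average converges a.s.\ to $\EE_{\theta_{t-k}\sim\pi_{t-k}}[\frac{\pi_t}{\pi_{t-k}}\cI(h\ge H(\theta_{t-k}))]$. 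Averaging over the finitely many $k$ and invoking \eqref{eq_is}, $\hat G_t^{M,K}(h)\to G_t(h)$ a.s.\ for each fixed $h$. To upgrade pointwise to uniform convergence I would use a Glivenko–Cantelli-type argument: $\hat G_t^{M,K}$ is a (random, weighted) increasing right-continuous step function bounded between $0$ and a quantity converging to $1$, and $G_t$ is continuous, so pointwise convergence on a countable dense set together with monotonicity of both functions upgrades to $\sup_h|\hat G_t^{M,K}(h)-G_t(h)|\to 0$ a.s.\ (a Polya-type theorem). Uniform convergence of CDFs plus continuity and strict increase of $G_t$ at level $\alpha$ (guaranteed because $G_t$ has a positive density near $q_t$ under Assumption~\ref{assumption_Exp}) then yields $\hat q_t^{M,K}\to q_t$ a.s. Combining the two stages via the iterated limit gives the claim.

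The main obstacle I expect is the passage from convergence of the \emph{CDF estimators} to convergence of the \emph{quantiles}, done cleanly at each of the two stages and at the right ``level'' of almost-sureness. Concretely: (i) in the first stage the limiting object $\hat q_t^{M,K}$ is itself random (it depends on the $\theta$-samples), so one needs that almost surely over the $\theta$-samples the CDF $\hat G_t^{M,K}$ is strictly increasing / continuous at its own $\alpha$-quantile — this requires that no mass atom of the weighted empirical distribution sits exactly at level $\alpha$, which should follow from $H(\theta_{t-k}^i)$ being a.s.\ distinct with continuous law; and (ii) one must be careful that the exceptional null sets accumulated across $N\to\infty$ (for each fixed $M$) and then across $M\to\infty$ are handled in the correct order so the final statement is a single a.s.\ statement — this is why the theorem is phrased as an iterated rather than joint limit, and the proof should respect that by first fixing $M$ throughout the $N$-limit. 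Everything else (unbiasedness of the IS estimators, bounded-moment $\Rightarrow$ SLLN, Polya's theorem, continuity of $G_t$) is routine given Theorems~\ref{thm:exp_out_bounded}--\ref{thm:exp_in_bound} and Corollary~\ref{lem_bound}.
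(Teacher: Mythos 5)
Your proposal is correct and follows the paper's decomposition exactly: freeze the $\theta$-samples and kill the inner-layer error as $N\to\infty$, then kill the outer-layer error as $M\to\infty$ (the paper states these as Lemma \ref{thm:in_consistency} and Lemma \ref{thm:out_consistency}). The differences are in how the two lemmas are executed. For the inner layer, the paper sidesteps your concern (i) entirely: rather than passing through pointwise convergence of $\hat G_t^{M,N,K}$ and a quantile-convergence lemma (which forces you to worry about an atom of the weighted empirical distribution sitting exactly at level $\alpha$), it shows that once $\max_i|\hat H^{M,N}(\theta_i)-H(\theta_i)|$ falls below a third of the minimal gap $\epsilon$ between distinct $H$-values, the ordering of the performance estimates coincides with that of the true performances; since the weights are unchanged, the same sample index $i_\alpha$ realizes the $\alpha$-quantile of both $\hat G_t^{M,N,K}$ and $\hat G_t^{M,K}$, so $|\hat q_t^{M,N,K}-\hat q_t^{M,K}|=|\hat H^{M,N}(\theta_t^{i_\alpha})-H(\theta_t^{i_\alpha})|\to 0$. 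For the outer layer, you propose pointwise SLLN plus a Polya/Glivenko--Cantelli upgrade to uniform convergence of the CDF followed by quantile continuity; the paper instead works directly on the events $\{\hat q_t^{M,K}\le q_t-\delta\}$ and $\{\hat q_t^{M,K}\ge q_t+\delta\}$, rewrites each as a deviation event for the weighted empirical CDF at a single point, and shows neither occurs infinitely often via the law of the iterated logarithm and Borel--Cantelli, following Theorem 4.2 of \cite{egloff2010quantile} and using the uniform second-moment bound on the weights from Corollary \ref{lem_bound}. Both routes require the same regularity, namely $G_t(q_t-\delta)<\alpha<G_t(q_t+\delta)$ for every $\delta>0$, which you flag explicitly and the paper uses implicitly when comparing $KM\alpha-KMG_t(q_t\pm\delta)$ to the $\sqrt{M\log\log M}$ fluctuation scale. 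Your route proves more than is needed (uniform convergence of the CDF); the paper's is more economical and closer in spirit to the rate analysis that follows in Theorem \ref{thm:out_rate}.
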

Due to space limit, we only provide a proof sketch here, please refer to Appendix \ref{proof:consistency} for the detailed proof of the theorem and technical lemmas.
\proof{Proof Sketch.}
	
	Note that the estimation error can be decomposed according to the source of uncertainty. Specifically, we have 
	\begin{align*}
	&\hat{q}_{t}^{M,N,K}-q_{t}=\underbrace{\hat{q}_{t}^{M,N,K}-\hat{q}_{t}^{M,K}}_{\text{Inner-Layer Error}}+\underbrace{\hat{q}_{t}^{M,K}-q_{t}.}_{\text{Outer-Layer Error}}
	\end{align*}
	Here, the inner-layer error is caused by the simulation uncertainty, while the outer-layer error comes from the input uncertainty. The following lemma shows that when $N$ first goes to infinity, the inner-layer error will vanish.
	\begin{lemma} \label {thm:in_consistency}
		Given $\{\theta_i\}_{i=1}^M,$ we have 
		$$\lim_{N\rightarrow\infty} \hat{q}_{t}^{M,N,K}=\hat{q}_{t}^{M,K},~~ \mathrm{a.s.}$$
	\end{lemma}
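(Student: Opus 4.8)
The plan is to fix the outer-layer randomness entirely — that is, condition on the $\theta$-samples $\{\theta_{t-k}^i\}$ from all reused stages together with the outer weights $w_{t|t-k}^i$ — and then argue that as $N\to\infty$ the only remaining source of randomness, the Cross Importance Sampling estimates $\hat H^{M,N}(\theta_{t-k}^i)$, converge to $H(\theta_{t-k}^i)$, which forces the step-function CDF estimate and then its $\alpha$-quantile to converge. First I would recall that $\hat H^{M,N}(\theta_{t-k}^i)$ is, by \eqref{cis}, an average over $N$ i.i.d. blocks (indexed by $j$) of the quantity $\frac1M\sum_{l=1}^M \nu^{l,j}(\theta_{t-k}^i) h(\xi^{l,j})$; for fixed $M$ and fixed proposal parameters this is a sample mean of i.i.d. terms with finite second moment (by Theorem~\ref{thm:exp_in_bound}, which controls $\Var\{\frac{p(\xi|\theta_1)}{p(\xi|\theta_2)}h(\xi)\}$ uniformly over $\Theta$), so the strong law of large numbers gives $\hat H^{M,N}(\theta_{t-k}^i)\to H(\theta_{t-k}^i)$ a.s. as $N\to\infty$. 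Since there are only finitely many pairs $(k,i)$ with $0\le k\le K-1$ and $1\le i\le M$, this convergence holds simultaneously for all of them on a single almost-sure event.

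Next I would pass from convergence of the finitely many performance estimates to convergence of the CDF estimator $\hat G_t^{M,N,K}(h)$ to $\hat G_t^{M,K}(h)$. Both are finite weighted sums of indicators $\cI(h\ge \cdot)$ at the points $\hat H^{M,N}(\theta_{t-k}^i)$ resp. $H(\theta_{t-k}^i)$, with the \emph{same} nonnegative weights $w_{t|t-k}^i/(KM)$ summing to $\hat\pi_t^{M,K}$'s total mass. For any $h$ that is not one of the (finitely many) limit atoms $H(\theta_{t-k}^i)$, the indicator $\cI(h\ge \hat H^{M,N}(\theta_{t-k}^i))$ equals $\cI(h\ge H(\theta_{t-k}^i))$ for $N$ large enough, so $\hat G_t^{M,N,K}(h)\to \hat G_t^{M,K}(h)$ pointwise at every continuity point of the limit; equivalently $\hat G_t^{M,N,K}\Rightarrow \hat G_t^{M,K}$ weakly. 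The final step is the standard fact that weak convergence of distribution functions implies convergence of quantiles at continuity points of the limiting quantile function: $\hat q_t^{M,N,K}=\inf\{h:\hat G_t^{M,N,K}(h)\ge\alpha\}\to \inf\{h:\hat G_t^{M,K}(h)\ge\alpha\}=\hat q_t^{M,K}$, provided $\alpha$ is not a value where $\hat G_t^{M,K}$ is flat.

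The main obstacle is precisely that last caveat: because $\hat G_t^{M,K}$ is a discrete (atomic) distribution, its generalized inverse is only continuous at $\alpha$'s that are not in the closure of the flat parts, and one must handle the possibility that $\alpha$ coincides with the cumulative weight at one of the atoms $H(\theta_{t-k}^i)$. I would address this by noting that for a \emph{fixed} realization of $\{\theta_{t-k}^i\}$ and continuously distributed $h(\cdot)$ (or more generally under a mild no-tie assumption on the induced $H$-values, which holds a.s.), the atoms $H(\theta_{t-k}^i)$ are almost surely distinct, and $\alpha$ being a fixed interior point of $(0,1)$ almost surely avoids the finitely many jump-heights of $\hat G_t^{M,K}$; then $\hat q_t^{M,K}$ is the location of a genuine jump strictly crossing level $\alpha$, and a small-$\epsilon$ sandwich argument — for $N$ large, $\hat G_t^{M,N,K}(\hat q_t^{M,K}-\epsilon)<\alpha<\hat G_t^{M,N,K}(\hat q_t^{M,K}+\epsilon)$ — pins the estimated quantile into $(\hat q_t^{M,K}-\epsilon,\hat q_t^{M,K}+\epsilon)$. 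I would state this tie-avoidance as the content of a short technical lemma (deferred to the appendix, as the authors indicate) and otherwise the argument is routine.
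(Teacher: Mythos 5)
Your proposal is correct in substance and shares the paper's first step (the strong law of large numbers applied, for each of the finitely many pairs $(k,i)$, to the $N$ i.i.d.\ blocks of the CIS estimator, giving $\hat H^{M,N}(\theta_{t-k}^i)\to H(\theta_{t-k}^i)$ simultaneously a.s.), but the second half takes a genuinely different route. The paper does not pass through weak convergence of the c.d.f.'s at all: it fixes $\epsilon$ as the minimal gap between distinct values $H(\theta_{t-k}^i)$, notes that once every estimate is within $\epsilon/3$ of its target the \emph{ordering} of the $KM$ atoms is preserved, and therefore the quantile estimator $\hat q_t^{M,N,K}$ is literally $\hat H^{M,N}(\theta^{i_\alpha})$ for the \emph{same} index $i_\alpha$ that realizes $\hat q_t^{M,K}=H(\theta^{i_\alpha})$; the lemma then reduces to convergence of a single performance estimate. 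This buys something concrete that your route does not: because the jump heights of $\hat G_t^{M,N,K}$ are the same partial sums of the same weights $w_{t|t-k}^i/(KM)$ as those of $\hat G_t^{M,K}$ once the order is locked in, the argument goes through even when $\alpha$ coincides exactly with a cumulative weight, whereas your weak-convergence-plus-quantile-continuity argument genuinely needs the tie-avoidance caveat you flag (your sandwich inequality $\alpha<\hat G_t^{M,N,K}(\hat q_t^{M,K}+\epsilon)$ fails at a flat level). Moreover, your proposed a.s.\ tie-avoidance lemma is not airtight as stated: the stage-$k=0$ weights are identically $1$, so a jump height whose contributing atoms all come from stage $t$ equals a deterministic multiple of $1/(KM)$ and can coincide with $\alpha$ with positive probability. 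If you adopt the order-preservation observation to handle that boundary case, your argument closes; otherwise it has a small but real gap precisely where the paper's more elementary order-statistics argument is cleanest.
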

	Intuitively, the  number of inner-layer simulation replications $N$ going to infinity ensures that for any  $\theta$ in $\{\theta_i\}_{i=1}^M,$ $\hat H^{M,N}(\theta)\rightarrow H(\theta)$ almost surely by the law of large number. Suppose we have $M$ parameter samples $\theta_1,...,\theta_M.$ We sort $\{H(\theta_i)\}_{i=1}^{M}$ in the ascending order and get $H(\theta_{(1)})\leq H(\theta_{(2)})\leq\ldots \leq H(\theta_{(M)})$. Similarly, we sort the system performance  estimates $\{\hat H^{M,N}(\theta_i)\}_{i=1}^{M}$ in ascending order and get $\hat H^{M,N}(\theta^{(1)})\leq \hat H^{M,N}(\theta^{(2)})\leq\ldots \leq \hat H^{M,N}(\theta^{(M)})$. Note that for any $i=1,..,M,$ the order statistics $\theta_{(i)}$ does not necessarily equal to $\theta^{(i)}$ since the simulation uncertainty may change the order. Let $$\epsilon=\inf\left\{H(\theta_{(i+1)})-H(\theta_{(i)})\Big|H(\theta_{(i+1)})\neq H(\theta_{(i)})\right\}.$$ When $N$ is large enough, such that   $$|{H(\theta_i)-\hat H^{M,N}(\theta_i)}|\leq \frac{\epsilon}{3},~\forall i=1,..,M.$$ If we have $H(\theta_{(i)})<H(\theta_{(i+1)})$, then for any $1\leq i\leq M-1,$ $$\hat H^{M,N}(\theta_{(i)})\leq H(\theta_{(i)})+\frac{\epsilon}{3}\leq H(\theta_{(i+1)})-\epsilon+\frac{\epsilon}{3}<H(\theta_{(i+1)})-\frac{\epsilon}{3}\leq \hat H^{M,N}(\theta_{(i+1)}), $$
	which implies $\theta_{(i)}=\theta^{(i)}, i=1,..., M.$
	It follows that $(\theta^{(1)},\theta^{(2)},...,\theta^{(M)})\rightarrow (\theta_{(1)},\theta_{(2)},...,\theta_{(M)})$ almost surely as $N\rightarrow \infty.$ Then, one can show that the inner-layer error vanishes when $N$ is large enough.

Next, we show the consistency of the outer-layer by the following lemma.
	\begin{lemma}\label {thm:out_consistency}
		Under  Assumption \ref{assumption_Exp}, we have   $$\lim_{M\rightarrow\infty}\hat{q}_{t}^{M,K}= q_{t},~~ \mathrm{a.s.}$$
	\end{lemma}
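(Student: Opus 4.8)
The plan is to prove Lemma \ref{thm:out_consistency} by showing that the weighted empirical c.d.f.\ $\hat{G}_t^{M,K}$ converges uniformly to $G_t$ almost surely as $M\to\infty$, and then transferring this convergence to the quantiles. First I would fix $t$ and $K$, and write
\[
\hat{G}_t^{M,K}(h)=\frac{1}{K}\sum_{k=0}^{K-1}\left[\frac{1}{M}\sum_{i=1}^M w_{t|t-k}^i\,\cI\!\left(h\geq H(\theta_{t-k}^i)\right)\right],
\]
which is an average over the $K$ (fixed) lag-blocks of a weighted i.i.d.\ empirical c.d.f. For each fixed $k$, the summands $w_{t|t-k}^i\,\cI(h\geq H(\theta_{t-k}^i))$ are i.i.d.\ across $i$ with mean exactly $\EE_{\theta_{t-k}\sim\pi_{t-k}}[\tfrac{\pi_t}{\pi_{t-k}}\cI(h\geq H(\theta_{t-k}))]$, which by \eqref{eq_is} averages over $k$ to $G_t(h)$; so $\hat G_t^{M,K}$ is an unbiased, consistent pointwise estimator by the strong law of large numbers. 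To upgrade pointwise a.s.\ convergence to uniform a.s.\ convergence I would invoke a Glivenko--Cantelli-type argument: the class of functions $\{\theta\mapsto \cI(h\geq H(\theta)):h\in\RR\}$ is totally ordered (monotone in $h$), so the standard bracketing/grid argument applies, and the extra multiplicative weight $w_{t|t-k}^i$ is a fixed integrable (indeed, by Corollary \ref{lem_bound}, square-integrable a.s.) function of $\theta_{t-k}^i$ that does not depend on $h$; hence $\sup_h |\hat G_t^{M,K}(h)-G_t(h)|\to 0$ a.s. The needed integrability of the weights is exactly what Corollary \ref{lem_bound} supplies on an event of $\PP^\NN_{\theta^c}$-probability one.

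Next I would pass from uniform convergence of the c.d.f.\ to convergence of the $\alpha$-quantile. The standard fact is that if $G_{n}\to G$ uniformly and $G$ is continuous and strictly increasing at its $\alpha$-quantile $q$ (equivalently, $q$ is the unique solution of $G(q)=\alpha$), then $G_n^{-1}(\alpha)\to G^{-1}(\alpha)$. Concretely, for any $\varepsilon>0$, continuity/strict monotonicity of $G_t$ at $q_t$ gives $G_t(q_t-\varepsilon)<\alpha<G_t(q_t+\varepsilon)$; uniform convergence then forces, for $M$ large, $\hat G_t^{M,K}(q_t-\varepsilon)<\alpha\leq \hat G_t^{M,K}(q_t+\varepsilon)$, whence $q_t-\varepsilon\leq \hat q_t^{M,K}\leq q_t+\varepsilon$. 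Since this holds on the probability-one event where uniform convergence holds, $\hat q_t^{M,K}\to q_t$ a.s. I would note that the required regularity of $G_t$ at $q_t$ — that $g_t$ is positive in a neighborhood of $q_t$, or at least that $q_t$ is the unique $\alpha$-quantile — is a mild assumption that should already be in force (it is implicitly needed for the quantile to be well-defined as an estimation target); if it is not stated I would add it, or alternatively phrase the conclusion in terms of $G_t(\hat q_t^{M,K})\to\alpha$.

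The main obstacle I anticipate is making the Glivenko--Cantelli step fully rigorous with the random weights $w_{t|t-k}^i$ present. The subtlety is that the weight $w_{t|t-k}^i=\pi_t(\theta_{t-k}^i)/\pi_{t-k}(\theta_{t-k}^i)$ depends on the data $\xi_1,\dots,\xi_t$, so the "constant" against which we apply the law of large numbers is itself random; one has to condition on the data (work on the event of Corollary \ref{lem_bound}) and then run the i.i.d.-in-$i$ argument for the $\theta$-samples. Care is also needed because $H(\theta)$ may have atoms in its push-forward distribution (the map $\theta\mapsto H(\theta)$ need not be injective), so $G_t$ could have jumps; at points of continuity the grid argument is routine, and at the (countably many) jump points one handles left/right limits separately — this is exactly the classical Glivenko--Cantelli proof, adapted. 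A clean way to package all of this is to first prove a weighted Glivenko--Cantelli lemma (uniform a.s.\ convergence of $\tfrac1M\sum_i w_i\cI(h\ge H(\theta_i))$ to its mean, given the weights are a bounded-in-$L^2$ function of the $\theta_i$), state it once, and apply it to each of the $K$ blocks; I would expect this auxiliary lemma to be stated and proved in the appendix.
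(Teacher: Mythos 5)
Your proof is correct, but it takes a genuinely different route from the paper's. The paper never establishes uniform convergence of $\hat G_t^{M,K}$; instead it works directly with the two events $\{\hat q_t^{M,K}\le q_t-\delta\}$ and $\{\hat q_t^{M,K}\ge q_t+\delta\}$, rewrites each as the statement that the centered sum $\sum_{k}\sum_i\bigl(w^i_{t|t-k}\cI(H(\theta^i_{t-k})\le q_t\mp\delta)-G_t(q_t\mp\delta)\bigr)$ exceeds a gap of order $KM\,|\alpha-G_t(q_t\mp\delta)|$, and then rules out both events happening infinitely often via the law of the iterated logarithm (the centered sum is $O(\sqrt{M\log\log M})$ a.s.\ because Corollary \ref{lem_bound} bounds the second moment of the weights) combined with Borel--Cantelli. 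So the paper only ever evaluates the c.d.f.\ estimator at the two points $q_t\pm\delta$, whereas you prove a weighted Glivenko--Cantelli theorem over all $h$ and then transfer to quantiles. Your route costs more in the uniform-convergence step — and, as you correctly anticipate, requires care with the data-dependent weights and possible atoms of the pushforward of $H$ — but it is more modular and yields a stronger, reusable intermediate conclusion ($\sup_h|\hat G_t^{M,K}(h)-G_t(h)|\to 0$ a.s.); note, though, that pointwise a.s.\ convergence at $q_t\pm\varepsilon$ already suffices for your final quantile step, so the full uniformity is not strictly needed. Both arguments rest on exactly the same two ingredients: the a.s.\ uniform bound on $\EE[(w^i_{t|t-k})^2]$ from Corollary \ref{lem_bound}, and the regularity condition $G_t(q_t-\delta)<\alpha<G_t(q_t+\delta)$ for every $\delta>0$ (uniqueness of the $\alpha$-quantile); the paper uses the latter implicitly in asserting $KM\alpha-KMG_t(q_t+\delta)$ is eventually very negative, and you are right to flag it as an assumption that should be made explicit.
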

	The proof of the above lemma is similar to that of Theorem 4.2 in \cite{egloff2010quantile}. We only need to verify that for any $\delta>0,$
	$$\PP\left(\hat{q}_{t}^{M,K}\leq q_{t}-\delta,i.o.\right)=0\text{~~ and ~~}\PP\left(\hat{q}_{t}^{M,K}\geq q_{t}+\delta,i.o.\right)=0.$$
	This can be done by Borel-Cantelli Lemma. Combining Lemma \ref{thm:in_consistency} and \ref{thm:out_consistency}, we prove the result.
\endproof

\subsection{Asymptotic Convergence Rates}
{ The consistency result shows that our proposed quantile estimator converges to the true quantile asymptotically. In this section, we study the convergence rate. }
The asymptotic convergence rate also helps  demonstrate the advantage of reusing previous outer-layer $\theta-$samples and applying inner-layer CIS over other methods. To see the improvement of each layer, we show the convergence rates for outer-layer and inner-layer separately. {Due to space limit, we defer all the technical proofs  to Appendix \ref{rate_proof}.}

Recall that at time $t,$ we reuse all the $\theta-$samples from the past $K$ stages. Thus, the estimator uses a total number of $KM$ $\theta-$samples. On the one hand, in the ideal case, from the central limit theorem we have a convergence rate of order $O(\frac{1}{\sqrt{KM}}),$ which improves the rate by a factor $O(\frac{1}{\sqrt{K}}).$ On the other, importance sampling may  change the variance of the estimator, which is determined by the importance weight.  Assumption \ref{assumption_Exp}.1 ensures the boundedness of the variance of the importance weight. Thus, we have the following theorem.
\begin{theorem} \label{thm:out_rate}
	Under Assumption \ref{assumption_Exp}, we have 
	$$(KM)^{\frac{1}{2}}\left(\hat{q}_{t}^{M,K}- q_{t}\right)\Rightarrow \sigma_t^{K} N(0,1),$$
	as $M\rightarrow\infty,$ where 
	$$\sigma_t^{K}=\sqrt{\frac{1}{K}\sum_{k=0}^{K-1}\EE\left\{  (w_{t|t-k}^j)^2 \cI\left( H(\theta_{t-k}^i)\leq q_t\right)-\alpha^2\right\}}\Bigg/g_t(q_t).$$
	{Moreover, if we further know that $g_t(q_t)$ is lower bounded by some positive constant for all $t>0,$ then there exists some constant $C_2>0,$ such that
		$$\sigma_t^{K}\leq C_2,~\forall t>0,K\geq1.$$}
\end{theorem}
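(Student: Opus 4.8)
The plan is to establish the weak convergence first, via a block decomposition that reduces it to a weighted empirical quantile CLT, and then to treat the uniform bound on $\sigma_t^K$ separately, which I expect to be the delicate point. Throughout, fix $t$ and $K$ and argue conditionally on the data $\xi_1,\dots,\xi_t$, so that $\pi_t,\pi_{t-1},\dots$ and the ratios $w_{t|t-k}(\cdot)=\pi_t(\cdot)/\pi_{t-k}(\cdot)$ are deterministic and the only randomness is in the $\theta$-samples; I also take for granted the regularity needed for any quantile CLT, namely that $G_t$ has a density $g_t$ that is continuous and positive at $q_t$. Decompose $\hat G_t^{M,K}=\frac{1}{K}\sum_{k=0}^{K-1}\hat G_t^{(k)}$ with $\hat G_t^{(k)}(h)=\frac{1}{M}\sum_{i=1}^M w_{t|t-k}^i\,\cI(h\ge H(\theta_{t-k}^i))$ and $\{\theta_{t-k}^i\}_{i=1}^M$ i.i.d.\ $\pi_{t-k}$; the $K$ blocks are mutually independent because the $\theta$-samples at distinct stages are drawn independently given the data. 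Since $\EE_{\theta\sim\pi_{t-k}}[w_{t|t-k}(\theta)\cI(h\ge H(\theta))]=\EE_{\theta\sim\pi_t}[\cI(h\ge H(\theta))]=G_t(h)$, each $\hat G_t^{(k)}$ is an unbiased, i.i.d.-based weighted empirical c.d.f.\ of $G_t$.

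First I would prove a c.d.f.\ CLT at $q_t$. Set $Z_{k,i}=w_{t|t-k}^i\,\cI(H(\theta_{t-k}^i)\le q_t)$; by Corollary~\ref{lem_bound}, $\EE[(w_{t|t-k}^i)^2]<\infty$ a.s., hence $\Var(Z_{k,i})<\infty$, and since $\EE[Z_{k,i}]=G_t(q_t)=\alpha$ we get $\Var(Z_{k,i})=\EE[(w_{t|t-k}^i)^2\cI(H(\theta_{t-k}^i)\le q_t)]-\alpha^2=:a_{t,k}$. Applying the i.i.d.\ CLT inside each block and using independence across blocks,
\begin{align*}
\sqrt{KM}\big(\hat G_t^{M,K}(q_t)-\alpha\big)=\frac{1}{\sqrt{K}}\sum_{k=0}^{K-1}\sqrt{M}\big(\hat G_t^{(k)}(q_t)-\alpha\big)\ \Rightarrow\ N\!\left(0,\ \frac{1}{K}\sum_{k=0}^{K-1}a_{t,k}\right)\qquad(M\to\infty).
\end{align*}
Then I would pass from the c.d.f.\ to the quantile by a Bahadur-type inversion, $\hat q_t^{M,K}-q_t=-\big(\hat G_t^{M,K}(q_t)-\alpha\big)/g_t(q_t)+o_P\big((KM)^{-1/2}\big)$, using (i) the first-order expansion $\sqrt{KM}\big(G_t(q_t+u/\sqrt{KM})-\alpha\big)\to g_t(q_t)u$ and (ii) stochastic equicontinuity near $q_t$ of the weighted empirical process $u\mapsto\sqrt{KM}\big[(\hat G_t^{M,K}-G_t)(q_t+u/\sqrt{KM})-(\hat G_t^{M,K}-G_t)(q_t)\big]$; for (ii), in each block the indicator class $\{\theta\mapsto\cI(H(\theta)\le h)\}_h$ is VC (half-lines composed with $H$) with square-integrable envelope $w_{t|t-k}(\cdot)$ by Corollary~\ref{lem_bound}, so the oscillation bound follows from a maximal inequality exactly as in \cite{egloff2010quantile} (applied block by block and averaged). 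Slutsky's theorem together with the display then yields $(KM)^{1/2}(\hat q_t^{M,K}-q_t)\Rightarrow\sigma_t^K N(0,1)$ with $\sigma_t^K=\big(\frac{1}{K}\sum_{k=0}^{K-1}a_{t,k}\big)^{1/2}/g_t(q_t)$, which is the stated expression since $a_{t,k}=\EE\{(w_{t|t-k}^j)^2\cI(H(\theta_{t-k}^i)\le q_t)-\alpha^2\}$.

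For the ``moreover'' claim, suppose $g_t(q_t)\ge c>0$ for all $t$. Because $0\le a_{t,k}\le\EE[(w_{t|t-k}^j)^2]$, it suffices to bound $\frac{1}{K}\sum_{k=0}^{K-1}\EE[(w_{t|t-k}^j)^2]$ uniformly over $t>0$ and $1\le K\le t$, and here Corollary~\ref{lem_bound} is not directly enough: its bounding constant is allowed to depend on $K$, and for $k$ comparable to $t$ the individual term $\EE_{\pi_{t-k}}[(\pi_t/\pi_{t-k})^2]$ can grow with $t$ (already in the Gaussian case it behaves like $(t^2/(t^2-k^2))^{1/2}$ up to lower-order fluctuations). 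The idea is that the Ces\`aro average washes these terms out. Concretely, I would sharpen the estimate behind Theorem~\ref{thm:exp_out_bounded}: write $\EE_{\pi_{t-k}}[(\pi_t/\pi_{t-k})^2]=I_{t-k}\,I(2\chi_t-\chi_{t-k},\,t+k)/I_t^2$, where $I(\chi,n)=\int_\Theta\pi_0(\theta)\exp(\theta^\top\chi-nA(\theta))\,d\theta$, $I_n=I(\chi_n,n)$, and $\chi_n=\sum_{i\le n}T(\xi_i)$, approximate each normalizer by Laplace's method (legitimate since $\nabla^2A$ is uniformly positive definite on the compact $\Theta$), and insert an a.s.\ bound on the fluctuations of $\chi_n/n$ around $\nabla A(\theta^c)$ to obtain $\EE[(w_{t|t-k}^j)^2]\le\rho(t,\,t-k)$ for an explicit $\rho$. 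One then verifies $\sup\{\frac{1}{K}\sum_{k=0}^{K-1}\rho(t,t-k):t>0,\,1\le K\le t\}<\infty$ a.s.\ by an integral comparison (after the substitution $j=t-k$, the average is dominated by something like $\frac{1}{K}\sum_{j=t-K+1}^{t}(t/j)^{\beta}\le\frac{1}{1-\beta}$ with the relevant $\beta<1$), the finitely many small-$t$ instances being trivial. This gives $\sigma_t^K\le C_2$ for all $t>0$, $K\ge1$.

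The hard part is exactly this last step: the block decomposition turns the weak convergence into standard weighted-empirical-quantile machinery, but the uniform-in-$K$ bound forces one to quantify how fast $\EE_{\pi_{t-k}}[(\pi_t/\pi_{t-k})^2]$ degrades as the two posteriors drift apart by up to $k\approx t$ observations, and to check that these contributions are sparse enough to be absorbed by the averaging---information that goes beyond the qualitative statements of Theorem~\ref{thm:exp_out_bounded} and Corollary~\ref{lem_bound}.
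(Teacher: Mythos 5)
Your argument reaches the stated conclusion, but by a genuinely different route from the paper on both halves. For the weak convergence, the paper follows Glynn's quantile-CLT device: it writes $\PP\bigl(M^{1/2}(\hat q_t^{M,K}-q_t)\le x\bigr)=\PP\bigl(\hat G_t^{M,K}(q_t+M^{-1/2}x)\ge\alpha\bigr)$ and applies the Berry--Esseen theorem to the resulting triangular array of row-wise i.i.d.\ summands (the third-moment bound in Theorem~\ref{thm:exp_out_bounded} is what licenses this), then lets $M\to\infty$ using $M^{1/2}\bigl(\alpha-G_t(q_t+M^{-1/2}x)\bigr)\to -x\,g_t(q_t)$. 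This sidesteps the Bahadur representation and the stochastic-equicontinuity machinery you invoke. Conversely, your block decomposition plus empirical-process inversion is a correct alternative: the indicator class is indeed VC since the sets $\{H\le h\}$ are nested, Corollary~\ref{lem_bound} supplies the square-integrable envelope, and your computation of the limiting variance agrees with the paper's after its rescaling from $M^{1/2}$ to $(KM)^{1/2}$. Both routes need $g_t$ continuous and positive at $q_t$, which both you and the paper take as given.

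For the ``moreover'' clause the paper is far more economical: it bounds $\EE\{(w_{t|t-k}^j)^2\cI(\cdot)\}\le\EE\{(w_{t|t-k}^j)^2\}$ and invokes Corollary~\ref{lem_bound}, which for a \emph{fixed} $K$ yields an a.s.\ constant uniform over all $t$ and all $k\le K$; combined with the assumed lower bound on $g_t(q_t)$, that is the entire argument. You are right that, read literally, ``$\forall t>0,\ K\ge 1$'' asks for uniformity in $K$, which Corollary~\ref{lem_bound} does not deliver since its constant may depend on $K$. However, your proposed repair---that the Ces\`aro average washes out the growth of $\EE[(\pi_t/\pi_{t-k})^2]$ when $k$ is comparable to $t$---does not work in general: the Laplace estimate gives growth of order $\bigl(t^2/(t^2-k^2)\bigr)^{s/2}$, and the average $\frac{1}{K}\sum_{k<K}\bigl(t^2/(t^2-k^2)\bigr)^{s/2}$ with $K\approx t$ behaves like $\int_0^1(1-u^2)^{-s/2}\,du$, which diverges for $s\ge 2$. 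So your sharpening only rescues the uniform-in-$K$ claim for scalar parameters (or when $K/t$ stays bounded away from $1$), consistent with the paper's own observation in \eqref{exlpode} that $K=t$ leads to variance explosion. The intended reading is that $K$ is a fixed algorithmic constant, in which case the paper's one-line argument suffices and your extra analysis, while it correctly diagnoses the subtlety, overstates what can be salvaged.
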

	
{
	\begin{remark}
		~
		\begin{itemize}
			\item The boundedness of $\sigma_t$ requires $g_t(q_t)$ to be uniformly bounded from zero for all $t.$ For many widely used distributions in EFDs (e.g., normal distribution), this can be  verified  when $H(\theta)$ is strictly monotone and smooth in  $\Theta.$ For more details, please refer to Appendix \ref{bound_quantile_density}. Due to current technique limit, we cannot verify it for all EFDs.
			\item When $K=1$, we get the outer-layer convergence rate of the Direct Monte Carlo method, i.e.,
			$$M^{\frac{1}{2}}\left(\hat{q}_{t}^{M,1}- q_{t}\right)\Rightarrow \sigma_t^{1} N(0,1), ~\text{as}~M\rightarrow\infty,$$
			 where  $\sigma_t^{1}=\sqrt{\EE\left\{   \cI\left( H(\theta_{t-k}^i)\leq q_t\right)-\alpha^2\right\}}\Bigg/g_t(q_t).$ 
			\item {Since $\sigma_t^{K}$ is uniformly bounded for all $K$ and $t,$ we can see that outer-layer importance sampling improves the outer-layer convergence rate from  $O(\frac{1}{\sqrt{M}})$ to  $O(\frac{1}{\sqrt{KM}}),$ when compared with the Direct Monte Carlo method.}
		\end{itemize}
\end{remark}}

Now we show the convergence rate of the inner-layer CIS step in the following theorem.
\begin{theorem} \label{thm:in_rate}
	Under  Assumption \ref{assumption_Exp},  for any $\beta\in[0,1),$
	$$\lim_{M\rightarrow\infty}\lim_{N\rightarrow\infty} \sqrt{NM^\beta}\left(\hat{q}_{t}^{M,N,K}-\hat{q}_{t}^{M,K}\right)=0 \text{ in distribution. }$$
\end{theorem}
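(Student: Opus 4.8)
The plan is to quantify the inner-layer error $\hat{q}_{t}^{M,N,K}-\hat{q}_{t}^{M,K}$ by controlling the gap between the two c.d.f.\ estimators $\hat{G}_t^{M,N,K}$ and $\hat{G}_t^{M,K}$ and then inverting, exactly as in the proof of Lemma~\ref{thm:in_consistency} but with a rate attached. First I would fix the outer-layer $\theta$-samples $\{\theta_{t-k}^i\}$ and condition on them, so that the only randomness left is the inner-layer simulation noise. As $N\to\infty$, CIS is consistent (Lemma~\ref{thm:in_consistency}) and, crucially, Theorem~\ref{thm:exp_in_bound} gives that $\hat H^{M,N}(\theta_{t-k}^i)-H(\theta_{t-k}^i)$ has variance $O(1/(NM))$ (the CIS estimator averages $NM$ terms each with bounded variance), so by the CLT $\sqrt{NM}\,(\hat H^{M,N}(\theta_{t-k}^i)-H(\theta_{t-k}^i))$ is asymptotically normal. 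In particular, for any fixed $\beta<1$, $\sqrt{NM^\beta}\,(\hat H^{M,N}(\theta_{t-k}^i)-H(\theta_{t-k}^i))=M^{(\beta-1)/2}\cdot O_p(1)\to 0$; I would make this uniform over the finitely many indices $(k,i)$, $k=0,\dots,K-1$, $i=1,\dots,M$ — this is where $\beta<1$ is used, since a union bound / maximum over $KM$ terms costs a factor that is still killed by $M^{(\beta-1)/2}$ as long as $\beta<1$.

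Next I would translate the performance-estimate error into a c.d.f.\ error. Since $\hat G_t^{M,N,K}$ and $\hat G_t^{M,K}$ differ only in the indicator arguments $\cI(h\ge \hat H^{M,N}(\theta_{t-k}^i))$ versus $\cI(h\ge H(\theta_{t-k}^i))$, a displacement of each atom by at most $\varepsilon_N:=\max_{k,i}|\hat H^{M,N}(\theta_{t-k}^i)-H(\theta_{t-k}^i)|$ implies the sandwich
\begin{align*}
\hat G_t^{M,K}\!\left(h-\varepsilon_N\right)\;\le\;\hat G_t^{M,N,K}(h)\;\le\;\hat G_t^{M,K}\!\left(h+\varepsilon_N\right),
\end{align*}
using that the importance weights $w_{t|t-k}^i\ge 0$. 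Inverting this sandwich at level $\alpha$ and using that $\hat G_t^{M,K}$ converges to $G_t$ whose density $g_t$ is positive and locally bounded away from $0$ near $q_t$ (the same local-smoothness/monotonicity facts invoked for Theorem~\ref{thm:out_rate}, cf.\ Appendix~\ref{bound_quantile_density}), the quantile map is locally Lipschitz with constant $O(1/g_t(q_t))$, so $|\hat{q}_{t}^{M,N,K}-\hat{q}_{t}^{M,K}|\le C\,\varepsilon_N$ with high probability for large $M,N$. Multiplying by $\sqrt{NM^\beta}$ and sending $N\to\infty$ then $M\to\infty$ gives $\sqrt{NM^\beta}\,\varepsilon_N\to0$ in probability from the previous paragraph, hence the claimed convergence to $0$ in distribution.

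The main obstacle I expect is making the quantile-inversion step rigorous when the limiting object $\hat G_t^{M,K}$ is itself a random step function rather than a smooth c.d.f.: the naive Lipschitz bound on the inverse fails exactly at the jumps, so one has to argue that the relevant level $\alpha$ does not fall on a plateau of $\hat G_t^{M,K}$ asymptotically — equivalently, that the sample atoms near the empirical $\alpha$-quantile do not pile up — which is where I would lean on the fact that $G_t$ has a positive density at $q_t$ so that, for $M$ large, $\hat G_t^{M,K}$ has $\Theta(M)$ atoms in any fixed neighborhood of $q_t$ and the spacing between consecutive atoms near $q_t$ is $O_p(1/M)$, dominated by $\varepsilon_N=O_p(1/\sqrt{NM})$ once $N$ is large. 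Handling the order of limits carefully (inner $N\to\infty$ first, so $\varepsilon_N$ shrinks before $M$ grows) is what keeps this clean; doing it with the limits reversed would be genuinely harder, but the theorem only claims the stated order.
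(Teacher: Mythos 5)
Your proposal is essentially sound and reaches the right conclusion, but it takes a more roundabout route than the paper. The paper's proof does not bound the quantile gap by a maximum error and then invert a c.d.f.\ sandwich; instead it observes (via the order-preservation argument of Lemma~\ref{thm:in_consistency}) that for fixed $M$ and $N$ large enough the weighted atoms do not reorder, so the empirical $\alpha$-quantile of $\hat G_t^{M,N,K}$ is attained at \emph{the same index} $(\tau,i(\tau))$ as that of $\hat G_t^{M,K}$, whence
$\hat{q}_{t}^{M,N,K}-\hat{q}_{t}^{M,K}=\hat H^{M,N}(\theta_\tau^{i(\tau)})-H(\theta_\tau^{i(\tau)})$
\emph{exactly}. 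It then applies the CLT to this single CIS estimator, whose conditional variance is at most $C/M$ by Theorem~\ref{thm:exp_in_bound}, so $\sqrt{N}(\hat{q}_{t}^{M,N,K}-\hat{q}_{t}^{M,K})\Rightarrow\sigma_\tau^M N(0,1)$ with $(\sigma_\tau^M)^2\le C/M$, and multiplying by $\sqrt{M^\beta}$ with $\beta<1$ kills the limit. This is cleaner than your max-over-atoms bound: no union bound, no quantile-map Lipschitz constant, and it yields the asymptotic distribution for fixed $M$ as a byproduct. Your route buys nothing extra here, though the sandwich inequality you write is a useful general-purpose device.

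Two repairs are needed in your write-up. First, the "locally Lipschitz with constant $O(1/g_t(q_t))$" inversion step does not apply as stated: you are inverting the random step function $\hat G_t^{M,K}$, not $G_t$, and the inverse of a step function is not Lipschitz at plateaus. You flag this yourself, and the correct fix is not the density of $G_t$ but the exact order-preservation fact above, which gives $|\hat{q}_{t}^{M,N,K}-\hat{q}_{t}^{M,K}|\le\varepsilon_N$ with constant $1$ and no inversion at all. Second, your closing sentence has the domination backwards: what you need is that $\varepsilon_N$ is eventually \emph{smaller} than the minimal gap among the $KM$ distinct values $H(\theta_\tau^i)$ (which, for fixed $M$, is an a.s.\ positive constant as $N\to\infty$), not that the spacing is dominated by $\varepsilon_N$; the latter would permit reordering and break the argument. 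With the iterated limit taken in the stated order these repairs are immediate, so the gap is one of presentation rather than substance.
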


To compare the convergence rates with and without CIS, we further present a theorem to show the inner-layer convergence rate when CIS is not applied. Recall that the c.d.f and quantile estimators obtained without CIS are 
$$\breve{G}_{t}^{M,N,K}(h)=\frac{1}{KM}\sum_{k=0}^{K-1} \sum_{i=1}^M w_{t|t-k}^i \cI\left(h \geq \breve H^{N}(\theta_{t-k}^i)\right).$$
and
$$\breve{q}_{t}^{M,N,K}=\inf\left\{h: \breve{G}_t^{M,N,K}(h)\geq \alpha \right\},$$
respectively. Then
following the  similar line of proof, we can show that without CIS,  the convergence rate of $\breve{q}_{t}^{M,N,K}$ is of the order $O(1/\sqrt{N}).$ It is formally stated in the following theorem.
\begin{theorem}\label{thm:in_rate2}
	Suppose  Assumption \ref{assumption_Exp}.1 holds. Denote by $\theta_t$ the $\theta-$parameter corresponding to the quantile $q_t$ at time $t$, i.e., $H(\theta_t)=q_t.$ Then we have
	$$\lim_{M\rightarrow\infty}\lim_{N\rightarrow\infty} \sqrt{N}\left((\breve{q}_{t}^{M,N,K}-\hat{q}_{t}^{M,K}\right)=\sigma_tN(0,1) \text{ in distribution, }$$
	where  $\sigma_t^2=\Var_{\xi\sim p(\cdot|\theta_t)}\{h(\xi)\}.$
\end{theorem}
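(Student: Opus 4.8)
\textbf{Proof proposal for Theorem \ref{thm:in_rate2}.}

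The plan is to condition on the outer-layer $\theta$-samples $\{\theta_{t-k}^i\}$ (and hence on their outer-layer weights $w_{t|t-k}^i$), so that the only remaining randomness is the inner-layer simulation noise in $\breve H^N(\theta_{t-k}^i) = H(\theta_{t-k}^i) + \frac{1}{\sqrt N}\,\zeta_{t-k}^i + o_p(1/\sqrt N)$, where by the classical CLT $\zeta_{t-k}^i \Rightarrow N(0,\Var_{\xi\sim p(\cdot|\theta_{t-k}^i)}\{h(\xi)\})$ and the $\zeta_{t-k}^i$ are independent across $(k,i)$. Here the key structural difference from Theorem \ref{thm:in_rate} is that, without CIS, $\breve H^N(\theta_{t-k}^i)$ uses only the $N$ replications at $\theta_{t-k}^i$, so its error does not shrink with $M$; this is exactly what forces the slower $O(1/\sqrt N)$ rate.

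The main step is a Bahadur-type linearization of the quantile functional. Writing $\breve{q}_{t}^{M,N,K} - \hat q_t^{M,K}$ in terms of the c.d.f.\ estimators, I would expand
$$
\breve G_t^{M,N,K}(h) - \hat G_t^{M,K}(h) = \frac{1}{KM}\sum_{k=0}^{K-1}\sum_{i=1}^M w_{t|t-k}^i\left[\cI\!\left(h\geq \breve H^N(\theta_{t-k}^i)\right) - \cI\!\left(h\geq H(\theta_{t-k}^i)\right)\right]
$$
around $h = \hat q_t^{M,K}$ (equivalently around $q_t$, using consistency from Lemma \ref{thm:out_consistency}). Taking expectations over the inner noise, the conditional mean of each bracket is $\PP(\breve H^N(\theta_{t-k}^i)\leq h) - \cI(H(\theta_{t-k}^i)\leq h)$; near $h=q_t$, only the single sample with $H(\theta_{(m)})$ straddling $q_t$ contributes a non-negligible term, and the Gaussian CLT approximation $\PP(\breve H^N(\theta_{t}) \le q_t + u/\sqrt N) \to \Phi(u/\sigma_t)$ with $\sigma_t^2 = \Var_{\xi\sim p(\cdot|\theta_t)}\{h(\xi)\}$ produces the claimed limit after dividing by the density $g_t(q_t)$ of $G_t$ at $q_t$ and by the (normalized) weight attached to that order statistic. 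One then inverts via the standard quantile-CLT argument (as in Theorem \ref{thm:out_rate} and \cite{egloff2010quantile}): show $\sqrt N(\breve q_t^{M,N,K} - \hat q_t^{M,K}) \le x$ is asymptotically equivalent to $\sqrt N(\breve G_t^{M,N,K}(\hat q_t^{M,K} + x/\sqrt N) - \alpha)$ having the right sign, and take $M\to\infty$ afterwards so that the empirical weighted measure $\hat\pi_t^{M,K}$ concentrates and $g_t(q_t)$ appears in the denominator with its true value.

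The main obstacle I anticipate is controlling the interchange of the two limits and the localization of the indicator sum: one must show that, after multiplying by $\sqrt N$, the contribution of all order statistics except the one nearest $q_t$ is asymptotically negligible (their conditional means are exponentially small in $N$ once $|H(\theta_{(j)}) - q_t|$ is bounded below, and the fluctuations around those means vanish by a Bernstein/variance bound using Assumption \ref{assumption_Exp}.1 to keep the weights $w_{t|t-k}^i$ bounded, via Corollary \ref{lem_bound}). A secondary technical point is that, because the relevant order statistic index depends on $M$, one needs a uniform (in the relevant index set) version of the Gaussian approximation for $\breve H^N$, which follows from the Berry–Esseen theorem under the moment condition $\EE\{h(\xi_\theta)^2\}<\infty$ from Assumption \ref{assumption_Exp}.2 together with compactness of $\Theta$ (Assumption \ref{TLIS-1}) to make the Berry–Esseen constant uniform over $\theta\in\Theta$.
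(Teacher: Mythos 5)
Your core intuition is right and matches the paper's: with $N\to\infty$ at fixed $M$, only the single order statistic straddling $q_t$ matters, its perturbation is the classical CLT error of a sample mean of size $N$ at that $\theta$, and letting $M\to\infty$ afterwards drives that $\theta$-sample to $\theta_t$ so the variance becomes $\Var_{\xi\sim p(\cdot|\theta_t)}\{h(\xi)\}$. The paper's proof is exactly this and nothing more: it reuses the order-preservation fact from Lemma \ref{thm:in_consistency} to write $\breve{q}_{t}^{M,N,K}-\hat{q}_{t}^{M,K}=\breve{H}^{N}(\theta_{\tau}^{i(\tau)})-H(\theta_{\tau}^{i(\tau)})$ exactly for $N$ large, applies the CLT to that single sample mean, and then sends $M\to\infty$.

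The genuine problem in your write-up is the claim that the limit is obtained ``after dividing by the density $g_t(q_t)$ \ldots and by the (normalized) weight attached to that order statistic.'' That is the mechanism of the outer-layer quantile CLT (Theorem \ref{thm:out_rate}), where the randomness sits in the indicator values of a large number $KM$ of atoms and a c.d.f.\ perturbation of size $O(1/\sqrt{KM})$ is converted to a quantile perturbation by the local slope $g_t(q_t)$. In the present iterated limit the outer samples are \emph{fixed} while $N\to\infty$: conditional on them, $\breve G_t^{M,N,K}$ and $\hat G_t^{M,K}$ are step functions with the same weights attached to $KM$ atoms, and the inner noise perturbs the \emph{locations} of the atoms, not their masses. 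The quantile estimator is literally the location of one atom, so it moves by exactly the displacement of that atom; no density factor and no weight factor enter. If you carry out the inversion $\PP(\sqrt N(\breve q-\hat q)\le x)=\PP(\breve G_t^{M,N,K}(\hat q+x/\sqrt N)\ge\alpha)$ correctly, all non-straddling indicators freeze at their limiting $0/1$ values and the event reduces to $\{\cI(\sigma_{(m)}Z\le x)=1\}$, reproducing the atom-displacement answer with no $g_t(q_t)$ -- consistent with the theorem's $\sigma_t^2=\Var_{\xi\sim p(\cdot|\theta_t)}\{h(\xi)\}$, whereas your stated mechanism would produce a variance carrying spurious factors of $g_t(q_t)$ and the atom's weight, contradicting the statement you are proving. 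Relatedly, the Berry--Esseen uniformity and exponential-tail machinery in your last paragraph is not needed for the iterated limit (there are only finitely many atoms once $M$ is fixed and a.s.\ convergence of $\breve H^N$ disposes of the non-straddling ones); it would only be relevant for a joint limit in $(M,N)$, which is not what is claimed.
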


{ As $t$ goes to infinity, $\theta_t$ will finally converge to $\theta^c$ almost surely by the consistency of the posterior distributions, and thus $\sigma_t^2$ will converge to $\Var_{\xi\sim p(\cdot|\theta^c)}\{h(\xi)\},$ which implies the boundedness of $\sigma_t^2.$  The inner-layer convergence rate without CIS is  $O(1/\sqrt{N}).$} Note that since Theorem \ref{thm:in_rate} holds for  every $\beta\in[0,1),$ the convergence rate is approximately $O(1/\sqrt{NM}).$  {Thus, CIS improves the inner-layer convergence rate by a factor   $O(1/\sqrt{M}).$ }

{In general, compared with the Direct Monte Carlo method, the outer-layer importance sampling improves the outer-layer convergence rate from $O(1/\sqrt{M})$ to $O(1/\sqrt{KM})$, and the inner-layer importance sampling improves the inner-layer convergence rate from $O(1/\sqrt{N})$ to $O(1/\sqrt{NM}).$  Combining these two rates together, the overall convergence rate of TLIS is  $O(1/\sqrt{KM})+O(1/\sqrt{NM}).$ }

{Though the previous analysis on EFDs is sufficient to justify the applicability of our algorithm, studying the condition required for general distributions still has its own merit. Please refer to Appendix \ref{discuss_general} for further discussion on general distributions.}

\section{Numerical Experiments}\label{experiment}

In this section, we use the news vendor model as an example to demonstrate our proposed algorithms and compare with other algorithms including the Direct Monte Carlo method, the Simple Importance Sampling method, and the Green Simulation method. Consider a news vendor, who buys $q$ number of newspapers at the wholesale price $c$ each morning and then sells the newspapers throughout the day at a retail price $p$ higher than the wholesale price (i.e., $p>c$). At the end of the day, any unused papers can no longer be sold and are scrapped. Suppose the demand of the paper $D$ is a random variable following the exponential distribution with the true parameter $\theta^c.$

Given $q$, $p$, and $c$, the expected profit for the news vendor is
$$H(\theta^c)=\EE_{D\sim\exp(\theta^c)}[p\min\{q,D\}]-cq.$$
The problem here is that $\theta^c$ is unknown and has to be estimated using demand data, and thus the estimation error of $\theta^c$ would impact the estimation of the expected profit.

The demand data arrive sequentially in time. More specifically, starting from time $t=1$, there is one new data point $D_t$ arriving at each time stage $t$. All these data points are i.i.d. from the true input distribution, i.e., the exponential distribution with rate parameter $\theta^c$. We take a Bayesian approach to model the unknown true parameter $\theta^c$ and treat it as a random variable. We assume a non-informative Gamma prior, which is a conjugate prior of the exponential distribution, with shape parameter $0.001$ and scale parameter $1000$. Hence, at time stage $t,$ the posterior distribution $\pi_t$ given historic data $\{D_i\}_{i=1}^t$ is a Gamma distribution with shape parameter $t+0.001$ and scale parameter $1/(\sum_{i=1}^{t}D_i+0.001)$. Our goal here is to quantify the input uncertainty  by estimating the  $\alpha$-quantile of the posterior distribution of $H(\theta)$. Note that the true $\alpha-$quantile can be calculated analytically in this example. In fact, one can verify that $H(\theta)$ is strictly decreasing in $\theta$. If we denote by  $\theta_t^{1-\alpha}$ the $(1-\alpha)-$quantile of the Gamma posterior distribution  $\pi_t$, then the true $\alpha-$quantile of  $H$ at time $t$ is exactly $q_t:=H(\theta_t^{1-\alpha}).$ The true quantiles will be used to calculate the mean square errors (MSEs) of the quantile estimators. In our experiments stated below, we set $q=0.5$, $p=1.5$, $c=1$, $\theta^c=1$, $\alpha=0.05$ for lower quantile, and $\alpha=0.95$ for upper quantile.

\noindent {\bf Numerical Experiment 1.} We first consider Scenario I, where all simulations are done at time stage $t=0$ and no new simulation for all later stages $t\geq 1$. Algorithms that are applicable to this scenario include TLIS-1 and the Simple Importance Sampling method. Although the Direct Monte Carlo method is not applicable here due to its need for new simulations at each time stage, we still implement it as a benchmark. For fair comparison, all the algorithms use the same simulation budget. Specifically, the Direct Monte Carlo method runs $M*N$ simulation replications at every time stage, while TLIS-1 and the Simple Importance Sampling method run $T*M*N$ simulation replications only at $t=0.$  We set the outer sample size $M=30,$ inner sample size $N=10$, and time horizon $T=200$. For TLIS-1, we set the number of reused time stages $K=20.$  Note that if $t\leq20,$ TLIS-1 reuses the simulation outputs from all previous time stages.  We run $100$ macro replications, and report in Figure~\ref{exp1} the MSEs of both the upper and lower quantile estimates, which is computed at time $t$ according to $MSE_t = \frac{1}{100}\sum_{i=1}^{100}(\hat{q}^i_t-q_t)^2$, where $q_t$ is the true $\alpha$-quantile of the posterior distribution of $H(\theta)$, and $\hat{q}^i_t$ is the quantile estimate from the $i$-th macro replication.

\begin{figure}[t]
	\centering
	\includegraphics[scale=0.17]{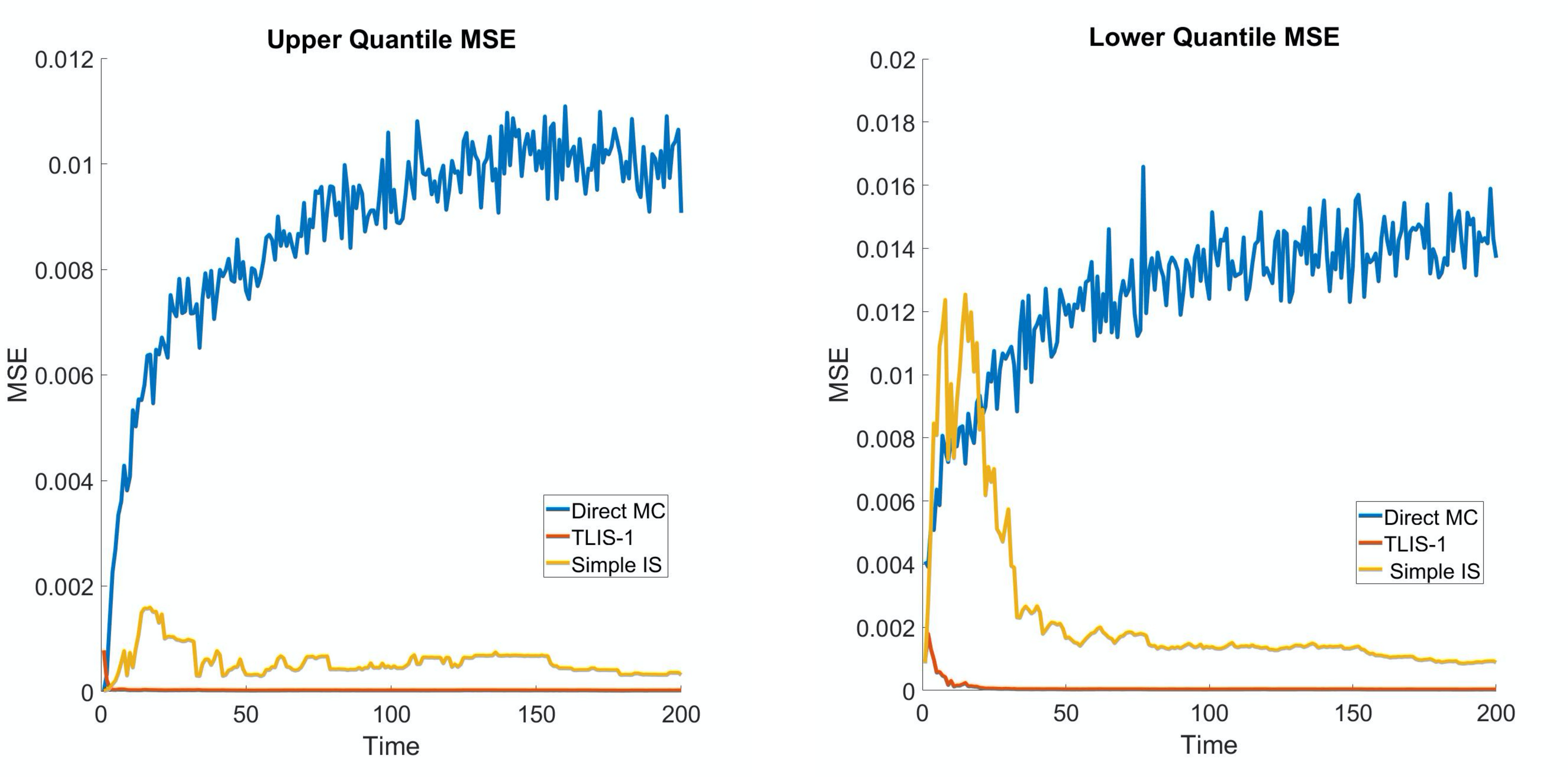}	
	\caption{Comparison between TLIS-1, Simple Importance Sampling (Simple IS) and the Direct Monte Carlo (Direct MC) method under the same simulation budget. $M=30,$ $N=10,$ and $T=200.$}
	\label{exp1}
\end{figure}

As Figure~\ref{exp1} shows,  TLIS-1 significantly outperforms the other two algorithms. Its estimators achieve small and stable MSEs that are close to 0 at all time stages. In contrast, the Direct Monte Carlo method and Simple Importance Sampling method have much larger MSEs. The fact that Simple Importance Sampling performs worse than TLIS-1 justifies that drawing new out-layer  $\theta-$samples in every time stage helps achieve a more accurate estimate. This can be done under a limited simulation budget, because our inner-layer CIS makes it possible to estimate the system performance under new  $\theta-$parameter without doing any new simulations. Moreover, Direct Monte Carlo performs the worst under the limited budget due to the large inner-layer estimation error. Thus, TLIS-1 is the more preferred method  when the total simulation budget is limited and fast decision-making is required.
\begin{figure}[t]
	\centering
		\includegraphics[scale=0.19]{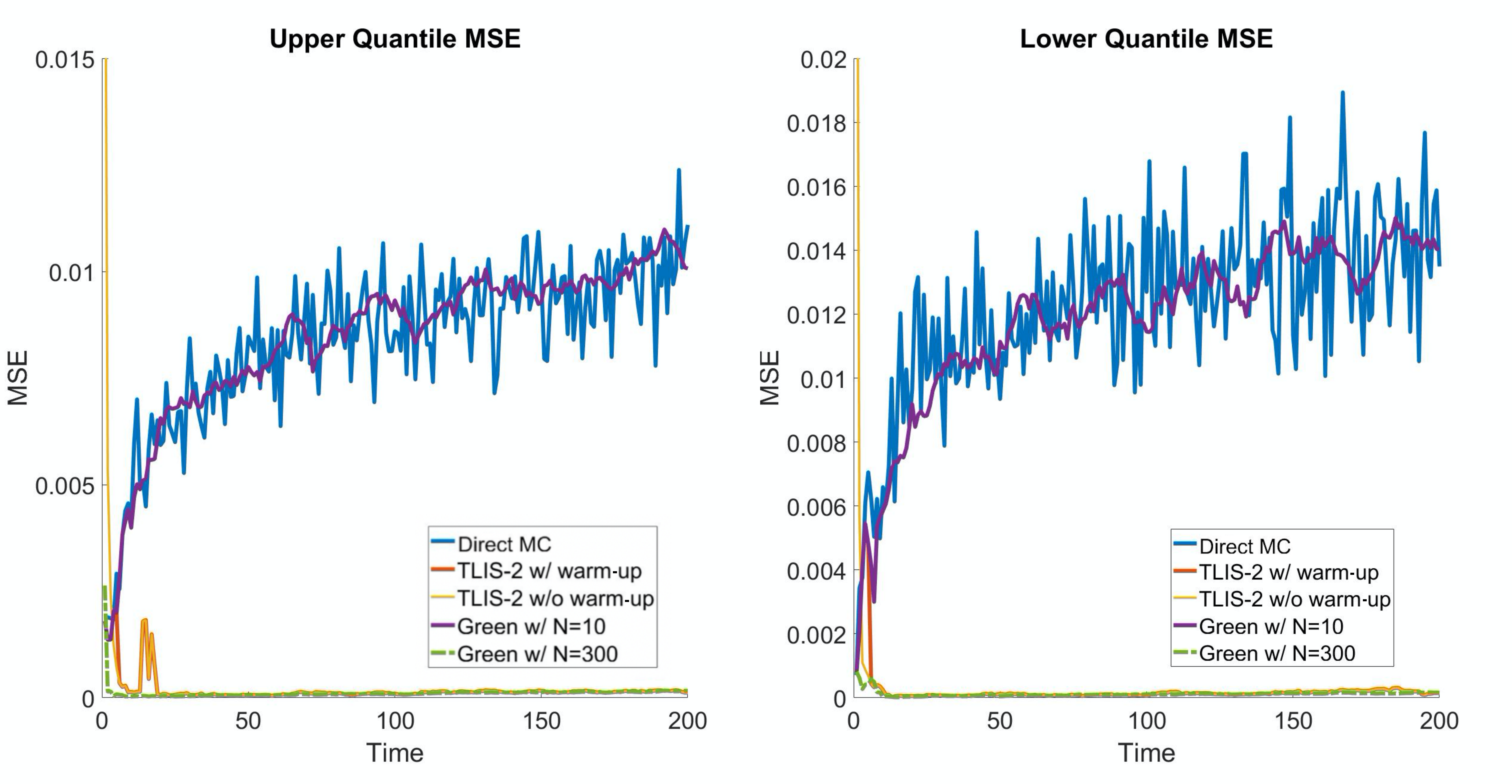}		
	\caption{Comparison between TLIS-2 with and without warm up ($N=10$), Green Simulation (Green) with $N=10$ and $N=300$  and the Direct Monte Carlo (Direct MC) method ($N=10$). }
	\label{exp2}
\end{figure}

\noindent {\bf Numerical Experiment 2.} We next consider Scenario II, where we can afford a small number of new simulations at each time stage. Algorithms that are applicable to this scenario include TLIS-2, the Green simulation method, and the Direct Monte Carlo method. For fair comparison, all the algorithms are run using the same simulation budget. Specifically, these three algorithms run $M\times N$ simulation experiments at any time stage.  We set the outer sample size $M=30,$ inner sample size $N=10$, and time horizon $T=200$. To see clearly the improvement brought by CIS, we also run Green Simulation with $N=300$, which is equal to the effective inner sample size $M*N = 30*10$ when using CIS. We remark that when we have only a few data points  at the first few time stages, the posterior distribution usually has a large variance. In this case, when the sample number is relatively small (here is $M=30$), any two outer-layer $\theta-$samples can be very different, and hence, when using CIS, the large variance among the importance weights could lead to a large estimation error. {Therefore, when applying TLIS-2, we recommend  warming up the algorithm at the initial time stages where we do not use CIS, although using it will not impact the later stages. This echos the analysis after Theorem~\ref{thm:exp_in_bound}. In this experiment, we run TLIS-2 with and without warm-up. TLIS-2 with warm-up starts applying CIS after the 5th time stage.} We run 100 macro replications and report the MSEs of both the upper and lower quantile estimates over time in Figure \ref{exp2}.

We have the following observations.
\begin{itemize}
	\item
	As shown in Figure~\ref{exp2}, the estimates obtained by TLIS-2 are much more precise than the others, including the Green Simulation method, which clearly shows the benefit of CIS. 
	\item {TLIS-2 without warm-up performs the worst among all the algorithms at initial time stages. However, after several time stages, TLIS-2 with and without warm-up have similar good performance. }
	\item { Under a small simulation budget, Green Simulation achieves similar MSE as the Direct Monte Carlo method.  In fact, without CIS, the inner-layer simulation uncertainty dominates the MSE and ruins the performance of these two algorithms when $N$ is very small. Thus, our CIS technique is crucial under limited budget. Moreover, the trajectory of Green Simulation is more stable than the Direct Monte Carlo. This justifies that reusing $\theta-$parameters from previous time stages can greatly reduce the variance in the quantile estimator.}
	
	\item TLIS-2 with $N=10$ performs similar to Green Simulation with $N=300$, which shows the effectiveness of CIS that prompts the effective inner sample size of TLIS-2 by $M=30$ times.  This verifies our theoretical analysis in Theorem \ref{thm:in_rate}.
\end{itemize}

\noindent {\bf Numerical Experiment 3.}
In this experiment, we  study the choice of $(M,N)$ (with a fixed $K$) under a fixed simulation budget $B$ at each time stage. Recall that in Scenario II we carry out $N\times M$ new simulation replications at each time stage, and hence, $N\times M=B$. According to Theorem \ref{thm:out_rate} and \ref{thm:in_rate}, the estimation error is approximately of order $O(1/\sqrt{KM})+O(1/\sqrt{NM})=O(1/\sqrt{KM})+O(1/\sqrt{B}).$ Given the fixed budget $B$ and fixed time horizon $T$, the order does not explicitly depend on $N,$ which means we usually need to choose a large $M$ and small $N$. This is mainly because CIS utilizes simulation outputs from other input parameters, resulting the number of simulation outputs used in each estimator equivalent to $NM=B.$  This analysis shows that when the budget is fixed, we should favor a large $M$ and a small $N$. However, $N=1$  (the smallest possible value) usually does not work well in practice, since our convergence results are in the  asymptotic sense (i.e., $N$ and $M$ should be sufficiently large).
\begin{table}[]
	\centering
	\begin{tabular}{|c|c|c|c|c|c|c|c|c|}
		\cline{1-9}
		& \multicolumn{4}{c|}{Lower MSE ($\times 10^{-3}$)}                                        & \multicolumn{4}{c|}{Upper MSE ($\times 10^{-3}$)}                                                                      \\ \hline
		\multicolumn{1}{|c|}{}    & $t=50 $           & $t=100 $          & $t=150$           & $t=200$           &$ t=50$            & $t=100$                          & $t=150$                          & $t=200$           \\ \hline
		\multicolumn{1}{|c|}{$M=10$} & 0.3348          &  0.2094          & 0.1418          & 0.2173          & 0.2834          & 0.2310 & 0.1635 & 0.3320          \\ \hline
		\multicolumn{1}{|c|}{$M=30$} &  0.1203         & 0.1779          & 0.1558          & 0.1321        & 0.1603          & 0.1001                       & 0.1529 & 0.2451         \\ \hline
		\multicolumn{1}{|c|}{$M=50$} & \textbf{0.0806} & \textbf{ 0.1153} & \textbf{0.0706} & \textbf{0.1296
		} & \textbf{0.0645} & \textbf{0.0767}                & \textbf{0.1076}                & \textbf{0.2016} \\ \hline
	\end{tabular}
	\caption{Mean Square Errors of quantiles estimatiors  by TLIS-2 with $(M,N)=(10,30),(30,10),(50,6).$}
	\label{tab_exp3}
\end{table}

We next empirically compare different choices of $(M,N)$ under the same simulation budget. We use the same setting in Experiment 2 except that we consider three choices of $$(M,N)\in\{(50,6), (30,10),(10,30)\}.$$ We run the experiment for 100 times and report the MSEs of upper and lower quantile estimates at time $t=100,150,200$ in Table \ref{tab_exp3}. Here, we observe that choosing $M=50$ achieves smaller MSEs for both lower and upper quantile estimates at all 4 time stages. This is consistent with our argument above that using a large $M$ and small $N$ may achieve the best performance.  $M=30$ works  better than $M=10$ most of the time except for the lower quantile estimate at $t=150.$  Since our theoretical results are in the asymptotic sense, this slight inconsistency when $M$ and $N$ are small is reasonable and acceptable. 

\noindent {\bf Numerical Experiment 4.} In this experiment, we empirically study how to choose $K$, the number of time stages reused, to obtain the best accuracy and efficiency.  From Theorem \ref{thm:out_rate}, we see clearly $K$ only affects the convergence rate of the outer layer, and the larger $K$ is, the better the estimator is.  However, this only holds when the variance of the importance weights is bounded. Unfortunately, the following result shows that when $K=t$ the variance will explode as $t$ goes to infinity.
\begin{align}
\EE\left(\frac{\pi_{t}(\theta)}{\pi_{0}(\theta) }\right)^2&=\int_\Theta{ f(\chi_{t},t)^2  \pi_0(\theta)\exp \left (2\theta^\top \boldsymbol\chi_{t}- 2t A(\theta)\right )}d\theta=O_{a.s.}\left((\sqrt{\frac{t}{2}})^s\right).\label{exlpode}
\end{align}
That means when we choose $K=t$, to obtain convergence we need an extremely large $M$  to control the variance, especially when time $t$ and dimension $s$ are large. Moreover, \eqref{exlpode} also suggests that $K$ is not allowed to increase in the same speed as $t$. However, one can still  choose a fixed large constant  $K$, since the posterior distribution does not change much from time stage $t-K$ to $t$ when $t$ is large.

In this experiment, we test $K=10,~50,~100$, and $200$ to verify our discussion above. To focus only on the outer-layer estimation, we set a large inner-layer sample size $N=1000$ to make the inner-layer simulation error negligible. We run the experiment for 100 times and report the MSEs of upper and lower quantile estimates at time $t=100,150,200$ in Table \ref{tab_exp41}. As shown in Table~\ref{tab_exp41}, reusing outer-layer samples from more previous time stages does not necessarily lead to a better estimate.  Specifically, we find that reusing the latest $K=100$ time stages  performs better than $K=200$  at $t=200.$  Though we use more samples when $K=200$,   $\pi_{t-K}$ is not a good proposal distribution for $\pi_t$ when $K$ is large,  and thus reusing samples from $\pi_{t-K}$ may not bring any benefit. Moreover, the average  running time of the last 100 iterations is also presented in Table \ref{tab_exp41}. Note that when $K=200,$ the running time (8.9s)  is about $1.5$ times of that of $K=100$ (5.2s). However, $K=200$ does not yield better performance than $K=100$.  Thus, when considering both running time and estimation accuracy, a large $K$ is not necessarily a good choice.
\begin{table}[]
	\centering
	\begin{tabular}{|c|c|c|c|c|c|c|c|}
		\hline
		& \multicolumn{3}{c|}{Lower MSE  ($\times 10^{-3}$)}                      & \multicolumn{3}{c|}{Upper MSE  ($\times 10^{-3}$)}                                                    &              \\ \hline
		& $t=100$           &$ t=150  $         &$ t=200 $          &$ t=100  $         &$ t=150    $                      &$ t=200 $                         & Running Time \\ \hline
		$K=10$  & 0.2967         & 0.0442         & 0.0474         & 0.135         &0.7777 &0.0132 & 1.5s         \\ \hline
		$K=50$  & 0.0700        & 0.0411          & 0.0571          & 0.0126          & 0.0187                        & 0.0022 & 2.3s         \\ \hline
		$K=100$ & \textbf{0.0337} & \textbf{0.0341} & \textbf{0.0401} &\textbf{0.0090} & 0.01807                         & \textbf{0.0018}                & 5.2s         \\ \hline
		$K=200$ & \textbf{0.0337} & 0.0665          & 0.0640          & \textbf{0.0090} & \textbf{0.0155}                & 0.0054                         & 8.9s         \\ \hline
	\end{tabular}
	\caption{ MSEs of  quantiles estimators  and running time of the last 100 iterations of Two-layer Importance Sampling with $K=10/50/100/200.$}
	\label{tab_exp41}
\end{table}

\section{Conclusions}
 {In this paper, we propose a Two-layer Importance Sampling (TLIS) method to quantify input uncertainty in an online manner with streaming data. This method uses importance sampling to reuse simulation outputs from previous time stages in the outer-layer and the simulation outputs under other input parameter samples in the inner-layer. Meanwhile, to  meet the requirement of different applications, we design two algorithm versions under the TLIS framework to suit for two special scenarios where fast or moderate speed of decision making is required. We prove the asymptotic convergence and convergence rate results, which show the speed up of the  TLIS framework. Numerical examples are conducted to justify our theoretical analysis and demonstrate the superior performance of TLIS over other existing methods.}

\section*{Acknowledgment}
{The authors  are grateful to the support by the National Science Foundation under Grant CAREER CMMI-1453934, and Air Force Office of Scientific Research under Grant FA9550-19-1-0283.
	This work is based on our preliminary work {\it Online Quantification of Input Uncertainty in Parametric Models} which has been published in the Proceedings of the 2018 Winter Simulation Conference. }
\bibliographystyle{ims}
\bibliography{draft}

\begin{thebibliography}{25}
\expandafter\ifx\csname natexlab\endcsname\relax\def\natexlab#1{#1}\fi
\expandafter\ifx\csname url\endcsname\relax
  \def\url#1{\texttt{#1}}\fi
\expandafter\ifx\csname urlprefix\endcsname\relax\def\urlprefix{URL }\fi

\bibitem[{Barton(2012)}]{barton2012tutorial}
\textsc{Barton, R.~R.} (2012).
\newblock Tutorial: Input uncertainty in output analysis.
\newblock In \textit{Proceedings of the 2012 Winter Simulation Conference}.
  IEEE, Berling, Germany.

\bibitem[{Barton et~al.(2013)Barton, Nelson and Xie}]{barton2013quantifying}
\textsc{Barton, R.~R.}, \textsc{Nelson, B.~L.} and \textsc{Xie, W.} (2013).
\newblock Quantifying input uncertainty via simulation confidence intervals.
\newblock \textit{INFORMS Journal on Computing} \textbf{26} 74--87.

\bibitem[{Barton and Schruben(1993)}]{barton1993uniform}
\textsc{Barton, R.~R.} and \textsc{Schruben, L.~W.} (1993).
\newblock Uniform and bootstrap resampling of empirical distributions.
\newblock In \textit{Proceedings of the 1993 Winter Simulation Conference}.
  IEEE, Los Angeles, California.

\bibitem[{Barton and Schruben(2001)}]{barton2001resampling}
\textsc{Barton, R.~R.} and \textsc{Schruben, L.~W.} (2001).
\newblock Resampling methods for input modeling.
\newblock In \textit{Proceedings of the 2001 Winter Simulation Conference}.
  IEEE, Arlington, Virginia.

\bibitem[{Cheng and Holloand(1997)}]{cheng1997sensitivity}
\textsc{Cheng, R.~C.} and \textsc{Holloand, W.} (1997).
\newblock Sensitivity of computer simulation experiments to errors in input
  data.
\newblock \textit{Journal of Statistical Computation and Simulation}
  \textbf{57} 219--241.

\bibitem[{Chick(2001)}]{chick2001input}
\textsc{Chick, S.~E.} (2001).
\newblock Input distribution selection for simulation experiments: accounting
  for input uncertainty.
\newblock \textit{Operations Research} \textbf{49} 744--758.

\bibitem[{Durrett(2019)}]{durrett2019probability}
\textsc{Durrett, R.} (2019).
\newblock \textit{Probability: theory and examples}, vol.~49.
\newblock Cambridge University Press.

\bibitem[{Egloff and Leippold(2010)}]{egloff2010quantile}
\textsc{Egloff, D.} and \textsc{Leippold, M.} (2010).
\newblock Quantile estimation with adaptive importance sampling.
\newblock \textit{The Annals of Statistics} \textbf{38} 1244--1278.

\bibitem[{Feng and Song(2019)}]{feng2019WSCuq}
\textsc{Feng, M.} and \textsc{Song, E.} (2019).
\newblock Efficient input uncertainty quantification via green simulation using
  sample path likelihood ratios.
\newblock In \textit{Proceedings of the 2019 Winter Simulation Conference}.
  IEEE, National Harbor, Maryland.

\bibitem[{Feng and Staum(2015)}]{feng:2015WSCgreensim}
\textsc{Feng, M.} and \textsc{Staum, J.} (2015).
\newblock Green simulation designs for repeated experiments.
\newblock In \textit{Proceedings of the 2015 Winter Simulation Conference}.
  IEEE, Huntington Beach, California.

\bibitem[{Feng and Staum(2017)}]{Feng2017GreenSR}
\textsc{Feng, M.} and \textsc{Staum, J.} (2017).
\newblock Green simulation: reusing the output of repeated experiments.
\newblock \textit{ACM Transactions on Modeling and Computer Simulation
  (TOMACS)} \textbf{27} 1--28.

\bibitem[{Glynn(1996)}]{glynn1996importance}
\textsc{Glynn, P.~W.} (1996).
\newblock Importance sampling for monte carlo estimation of quantiles.
\newblock In \textit{Mathematical Methods in Stochastic Simulation and
  Experimental Design: Proceedings of the 2nd St. Petersburg Workshop on
  Simulation}. Publishing House of St. Petersburg University.

\bibitem[{Johnson et~al.(1967)}]{johnson1967asymptotic}
\textsc{Johnson, R.~A.} \textsc{et~al.} (1967).
\newblock An asymptotic expansion for posterior distributions.
\newblock \textit{The Annals of Mathematical Statistics} \textbf{38}
  1899--1906.

\bibitem[{Lam(2016)}]{lam:2016sensitivity}
\textsc{Lam, H.} (2016).
\newblock Robust sensitivity analysis for stochastic systems.
\newblock \textit{Mathematics of Operations Research} \textbf{41} 1248--1275.

\bibitem[{Lam and Qian(2016)}]{lam2016empirical}
\textsc{Lam, H.} and \textsc{Qian, H.} (2016).
\newblock The empirical likelihood approach to simulation input uncertainty.
\newblock In \textit{Proceedings of the 2016 Winter Simulation Conference
  (WSC)}. IEEE, Arlington, Virginia.

\bibitem[{Lam and Qian(2018)}]{lam2018subsampling}
\textsc{Lam, H.} and \textsc{Qian, H.} (2018).
\newblock Subsampling variance for input uncertainty quantification.
\newblock In \textit{2018 Winter Simulation Conference (WSC)}. IEEE,
  Gothenburg, Sweden.

\bibitem[{Lin et~al.(2015)Lin, Song and Nelson}]{lin2015single}
\textsc{Lin, Y.}, \textsc{Song, E.} and \textsc{Nelson, B.~L.} (2015).
\newblock Single-experiment input uncertainty.
\newblock \textit{Journal of Simulation} \textbf{9} 249--259.

\bibitem[{Schwartz(1965)}]{schwartz1965bayes}
\textsc{Schwartz, L.} (1965).
\newblock On bayes procedures.
\newblock \textit{Zeitschrift f{\"u}r Wahrscheinlichkeitstheorie und verwandte
  Gebiete} \textbf{4} 10--26.

\bibitem[{Song and Nelson(2017)}]{song2017input}
\textsc{Song, E.} and \textsc{Nelson, B.~L.} (2017).
\newblock Input model risk.
\newblock In \textit{Advances in Modeling and Simulation}, chap.~5. Springer,
  63--80.

\bibitem[{Van~der Vaart(2000)}]{van2000asymptotic}
\textsc{Van~der Vaart, A.~W.} (2000).
\newblock \textit{Asymptotic statistics}, vol.~3.
\newblock Cambridge University Press.

\bibitem[{Wu et~al.(2018)Wu, Zhu and Zhou}]{wu2018bayesian}
\textsc{Wu, D.}, \textsc{Zhu, H.} and \textsc{Zhou, E.} (2018).
\newblock A bayesian risk approach to data-driven stochastic optimization:
  Formulations and asymptotics.
\newblock \textit{SIAM Journal on Optimization} \textbf{28} 1588--1612.

\bibitem[{Xie et~al.(2014)Xie, Nelson and Barton}]{xie2014bayesian}
\textsc{Xie, W.}, \textsc{Nelson, B.~L.} and \textsc{Barton, R.~R.} (2014).
\newblock A bayesian framework for quantifying uncertainty in stochastic
  simulation.
\newblock \textit{Operations Research} \textbf{62} 1439--1452.

\bibitem[{Zhu et~al.(2019)Zhu, Liu and Zhou}]{ZhouZhu:2015risk}
\textsc{Zhu, H.}, \textsc{Liu, T.} and \textsc{Zhou, E.} (2019).
\newblock Risk quantification in stochastic simulation under input uncertainty.
\newblock \textit{ACM Transactions on Modeling and Computer Simulation}
  Accepted.

\bibitem[{Zouaoui and Wilson(2003)}]{zouaoui2003accounting}
\textsc{Zouaoui, F.} and \textsc{Wilson, J.~R.} (2003).
\newblock Accounting for parameter uncertainty in simulation input modeling.
\newblock \textit{IIE Transactions} \textbf{35} 781--792.

\bibitem[{Zouaoui and Wilson(2004)}]{zouaoui2004accounting}
\textsc{Zouaoui, F.} and \textsc{Wilson, J.~R.} (2004).
\newblock Accounting for input-model and input-parameter uncertainties in
  simulation.
\newblock \textit{IIE Transactions} \textbf{36} 1135--1151.

\end{thebibliography}

\newpage
\appendix



\section{Proof of Important Properties of Exponential Family}\label{exp_proof}
\subsection{Proof of Theorem \ref{thm:exp_out_bounded}}\label{proof:exp_out_bounded}
\begin{proof}
	We first compute the variance of the weight using $\pi_{t_2}(\theta) $ as the proposal distribution and $\pi_{t_1}(\theta)$ as the target distribution, where $t_1>t_2>0.$
	\begin{align*}
	\EE\left(\frac{\pi_{t_1}(\theta)}{\pi_{t_2}(\theta) }\right)^2&=\int_\Theta \frac{\pi_{t_1}(\theta)^2}{\pi_{t_2}(\theta) }d \theta\\
	&=\int_\Theta\frac{ f(\boldsymbol\chi_{t_1},t_1)^2  \pi_0(\theta)^2\exp \left (2\theta^\top \boldsymbol\chi_{t_1}- 2t_1 A(\theta)\right )}{f(\boldsymbol\chi_{t_2},t_2) \pi_0(\theta)\exp \left (\theta^\top \boldsymbol\chi_{t_2} - t_2 A(\theta)\right )} d\theta\\
	&=\int_\Theta\frac{ f(\boldsymbol\chi_{t_1},t_1)^2 }{f(\boldsymbol\chi_{t_2},t_2) } \pi_0(\theta)\exp \left (\theta^\top (2\boldsymbol\chi_{t_1}-\chi_{t_2})-(2t_1-t_2) A(\theta) \right ) d\theta\\
	&=\int_\Theta\frac{ f(\boldsymbol\chi_{t_1},t_1)^2 }{f(\boldsymbol\chi_{t_2},t_2) f(2\boldsymbol\chi_{t_1}-\chi_{t_2},2t_1-t_2)}f(2\boldsymbol\chi_{t_1}-\chi_{t_2},2t_1-t_2)\exp \left (\theta^\top (2\boldsymbol\chi_{t_1}-\chi_{t_2} ) - (2t_1-t_2)A(\theta) \right ) d\theta\\
	&=\frac{ f(\boldsymbol\chi_{t_1},t_1)^2 }{f(\boldsymbol\chi_{t_2},t_2) f(2\boldsymbol\chi_{t_1}-\chi_{t_2},2t_1-t_2)}\\
	&=\frac{\int_\Theta \pi_0(\theta) \exp \left (\theta^\top \boldsymbol\chi_{t_2} - t_2 A(\theta) \right )d\theta \int_\Theta  \pi_0(\theta)\exp \left (\theta^\top (2\boldsymbol\chi_{t_1}-\chi_{t_2}) - (2t_1-t_2)A(\theta) \right )d\theta}{(\int_\Theta  \pi_0(\theta)\exp \left (\theta^\top \boldsymbol\chi_{t_1} - t_1 A(\theta) \right )d\theta)^2}.
	\end{align*}
	Similarly, we have
	\begin{align*}
	\EE\left(\frac{\pi_{t_1}(\theta)}{\pi_{t_2}(\theta) }\right)^3&=\int_\Theta \frac{\pi_{t_1}(\theta)^3}{\pi_{t_2}(\theta)^2 }d \theta\\
	&=\frac{ f(\boldsymbol\chi_{t_1},t_1)^3 }{f(\boldsymbol\chi_{t_2},t_2)^2 f(3\boldsymbol\chi_{t_1}-2\chi_{t_2},3t_1-2t_2)}<\infty.\end{align*}
	Now we calculate the second moment. Let's first approximate the following integral when $n$ is  large. 
	$$\int_\Theta  \pi_0(\theta)\exp \left(\left (\theta^\top( \boldsymbol\chi_{n}) - nA(\theta) \right )\right)d\theta. $$
	From the properties of $A(\theta),$ we know$$\nabla^2A(\theta)=\Var\{{T}(\xi)\}, \nabla A(\theta)=\EE\{{T}(\xi)\}.$$
	Under the assumption that  $\nabla^2A(\theta)$ is positive definite, by the concentration property of the posterior distribution with large samples,  we  can calculate the integral  $\int_\Theta \pi_0(\theta) \exp \left (\theta^\top \sum_{i=1}^n {T}(\xi_i) - n A(\theta) \right )d\theta$ in a small neighborhood around the maximum likelihood estimator  (MLE) $\hat\theta_n.$ 
	Note that our uniform prior  guarantees that $\pi_0$ is bounded, i.e.,$ \pi_0(\theta)\leq C$ for some constant $C\geq0.$
	Note that for EFDs, the MLE is strong consistent, i.e.,
	$$\hat{\theta}_n\rightarrow \theta^c,~a.s.(\PP^\NN_{\theta^c})~ \text{as }n\rightarrow\infty,$$
	where $\hat{\theta}_n$ is the MLE of $\theta$ given data $\frac{1}{n} \sum_{i=1}^n {T}(\xi_i).$
	
	For simplicity, we define $\boldsymbol\eta=(\nabla^2 A(\hat{\theta}(r)))^{\frac{1}{2}}(\theta-\hat{\theta}(r)),$ where $\hat{\theta}(r)$ is the MLE given  $r.$ In our case $ r=\frac{1}{n} \sum_{i=1}^n {T}(\xi_i).$ Let 
	$$g(\boldsymbol\eta,r)=\exp(A(\hat{\theta}(r))-A(\hat{\theta}(r)+(\nabla^2 A(\hat{\theta}(r)))^{-\frac{1}{2}}\boldsymbol\eta)+ r^\top(\nabla^2 A(\hat{\theta}(r)))^{-\frac{1}{2}}\boldsymbol\eta),$$
	and $$h(\boldsymbol\eta,r)=\ln g(\boldsymbol\eta,r).$$
It is obvious that for a fixed $r$, $g(0,r)=1,$ $\nabla g(0,r)=0,$ $\nabla^2 g(0,r)=-I,$ since $\nabla A((\hat{\theta}(r))=r$ (MLE's property). Then we have the following  lemma.
	\begin{lemma}\label{lemma_33}
	Assume $\nabla^2A(\theta)$ is positive definite for all $\theta\in\Theta$, then we have
		$$\int_{E_{\hat{\theta}_n}} \pi_0(\hat{\theta}_n+(\nabla^2 A(\hat{\theta}_n))^{-\frac{1}{2}}\boldsymbol\eta)g^n(\boldsymbol\eta,\frac{1}{n} \sum_{i=1}^n {T}(\xi_i)) d\boldsymbol\eta=O_{\text{a.s.}}((\frac{1}{\sqrt{n}})^s),$$
		where $E_{\hat{\theta}_n}=(\nabla^2 A(\hat{\theta}_n))^{\frac{1}{2}}(\Theta-\hat{\theta}_n))$ is the parameter space after transformation.  
	\end{lemma}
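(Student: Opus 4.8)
\emph{Plan.} This is a Laplace-type approximation: the integrand $g^n(\boldsymbol\eta,\cdot)$ concentrates exponentially fast around $\boldsymbol\eta=0$, where $g(0,\cdot)=1$ and $\nabla^2_{\boldsymbol\eta}g(0,\cdot)=-I$, so heuristically $g^n(\boldsymbol\eta,\cdot)\approx e^{-n\norm{\boldsymbol\eta}^2/2}$, whose integral over $\RR^s$ is $(2\pi/n)^{s/2}=O((1/\sqrt n)^s)$; the bounded prior $\pi_0\le C$ contributes only the harmless constant $C$. First I would pin down the randomness: by the strong law of large numbers $r_n:=\tfrac1n\sum_{i=1}^nT(\xi_i)\to\EE_{\theta^c}[T(\xi)]=\nabla A(\theta^c)=:r^\star$ a.s., and by strong consistency of the exponential-family MLE (already invoked just before the lemma) $\hat\theta_n=\hat\theta(r_n)\to\theta^c$ a.s. Hence, off a $\PP^\NN_{\theta^c}$-null set and for all $n$ large, $r_n$ lies in a fixed compact neighborhood $\mathcal R$ of $r^\star$, and $\hat\theta(r)$ ranges over a fixed compact subset of the interior of $\cN$ as $r$ ranges over $\mathcal R$. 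Here I use positive definiteness of $\nabla^2 A$ (so $r\mapsto\hat\theta(r)$ is a well-defined smooth map), boundedness of $\norm{T(\cdot)}_2$ (so $\nabla A=\EE[T]$ and $\nabla^2 A=\Var[T]$ are bounded on the relevant compact set), and $\Theta'\subseteq\cN$ together with compactness of $\Theta$ (so $\theta^c$ and a neighborhood lie in the interior of $\cN$, keeping all derivatives of $A$ finite and continuous). On $\mathcal R$ the maps $(\boldsymbol\eta,r)\mapsto g(\boldsymbol\eta,r)$ and $h=\ln g$ are $C^2$ with continuous derivatives, $h(0,r)=0$, $\nabla_{\boldsymbol\eta}h(0,r)=0$, $\nabla^2_{\boldsymbol\eta}h(0,r)=-I$, and $E_{\hat\theta_n}=(\nabla^2A(\hat\theta_n))^{1/2}(\Theta-\hat\theta_n)$ sits inside a fixed compact ball $\{\norm{\boldsymbol\eta}\le R\}$ once $\hat\theta_n$ is in the compact set above.

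Next I would split the domain. By uniform continuity of $\nabla^2_{\boldsymbol\eta}h$ on the compact set $\{\norm{\boldsymbol\eta}\le R\}\times\mathcal R$, a second-order Taylor expansion of $h$ in $\boldsymbol\eta$ around $0$ gives an $n$-independent radius $\delta>0$ with $h(\boldsymbol\eta,r)\le-\tfrac14\norm{\boldsymbol\eta}^2$ for all $\norm{\boldsymbol\eta}\le\delta$, $r\in\mathcal R$; hence $g^n(\boldsymbol\eta,r_n)\le e^{-n\norm{\boldsymbol\eta}^2/4}$ there, and
$$\int_{\norm{\boldsymbol\eta}\le\delta}g^n(\boldsymbol\eta,r_n)\,d\boldsymbol\eta\le\int_{\RR^s}e^{-n\norm{\boldsymbol\eta}^2/4}\,d\boldsymbol\eta=(4\pi/n)^{s/2}=O\!\left((1/\sqrt n)^s\right).$$
On the complement $\{\boldsymbol\eta\in E_{\hat\theta_n}:\norm{\boldsymbol\eta}>\delta\}$, I would use $g(\boldsymbol\eta,r)<1$ for $\boldsymbol\eta\ne0$, which holds because undoing the linear change of variable gives $g(\boldsymbol\eta,r)=\exp\{(r^\top\theta-A(\theta))-(r^\top\hat\theta(r)-A(\hat\theta(r)))\}$ with $\theta\mapsto r^\top\theta-A(\theta)$ strictly concave (by $\nabla^2A\succ0$) and uniquely maximized at $\hat\theta(r)$. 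Since $g$ is continuous and $\{\delta\le\norm{\boldsymbol\eta}\le R\}\times\mathcal R$ is compact, $\sup g=:\rho<1$ there, so this region contributes at most $(\text{bounded volume})\cdot\rho^n=O(\rho^n)$, negligible compared with $(1/\sqrt n)^s$. Adding the two pieces and the factor $C=\sup_\Theta\pi_0<\infty$ yields the claimed $O_{\text{a.s.}}((1/\sqrt n)^s)$.

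The main obstacle I anticipate is the uniformity in the \emph{random} argument $r_n$: the Taylor remainder bound, the invertibility of $\nabla^2A(\hat\theta(r))$, and the strict-maximum gap $\rho<1$ must hold with constants independent of $n$, which is precisely why the proof must first confine $r_n$ and $\hat\theta_n$ to fixed compact sets (via the SLLN and MLE consistency) and only then invoke continuity and compactness. A secondary technical point is verifying that a neighborhood of $\theta^c$ lies in the interior of $\cN$ so that the derivatives of $A$ used throughout are well defined; Assumption~\ref{assumption_Exp} ($\Theta'\subseteq\cN$) together with compactness of $\Theta$ handles this.
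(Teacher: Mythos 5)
Your proposal is correct and follows essentially the same route as the paper: a Laplace-type argument that confines $r_n=\frac{1}{n}\sum_{i=1}^n T(\xi_i)$ to a fixed compact neighborhood of $\nabla A(\theta^c)$ via the strong law of large numbers and MLE consistency, Taylor-expands $h=\ln g$ around $\boldsymbol\eta=0$ to obtain a Gaussian-type bound on a small ball (the paper controls the third-order remainder, you control the second-order one; either works), and uses strict concavity of $\theta\mapsto r^\top\theta-A(\theta)$ together with compactness to bound $g$ uniformly below $1$ outside that ball, making the remainder exponentially negligible. The only substantive difference is that you stop at the upper bound $O_{\text{a.s.}}((1/\sqrt{n})^s)$, which is all the lemma literally asserts, whereas the paper's calculation extracts the precise leading term $(1/\sqrt{n})^s\,\pi_0(\theta^c)\int_{\RR^s}e^{-\frac{1}{2}\boldsymbol\eta^\top\boldsymbol\eta}\,d\boldsymbol\eta$, which is what is actually used downstream in the proof of Theorem \ref{thm:exp_out_bounded} to show the ratio of second moments converges to $1$.
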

	This lemma is a direct extension of  Lemma 3.3  in \cite{johnson1967asymptotic} to the multi-parameter EFDs.
	\begin{proof}
		Denote $r_0=\nabla A({\theta^c}).$ Thus, we have  $ g(0,r_0)=1.$ First, we consider the function $$\phi(\theta)=(\theta)\exp{\theta^\top r-A(\theta)}.$$ Since $A(\theta)$ is continuous, $\phi(\theta)$ is continuous. The following lemma shows that under certain conditions, $\phi(\theta)$ has a unique maximum. 
		\begin{lemma} \label{lem_maximum}
			There exists a $d_1>0$ such that for fixed $\norm{r-r_0}_2\leq d_1,$ the function $\phi(\theta)=\exp{\theta^\top r-A(\theta)}$ has a unique maximum at $\hat{\theta}(r).$ And for any vector $v\in \RR^s,$ we have for $0<t_1<t_2,$
			$$\exp{ (\hat{\theta}(r)+t_1v)^\top r-A(\hat{\theta}(r)+t_1v))}>\exp{ (\hat{\theta}(r)+t_2v)^\top r-A(\hat{\theta}(r)+t_2v))}.$$
		\end{lemma}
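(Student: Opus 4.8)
The plan is to push everything through the log-density $\log\phi(\theta)=\theta^\top r-A(\theta)$ together with the two structural identities for exponential families, $\nabla A(\theta)=\EE\{T(\xi)\}$ and $\nabla^2 A(\theta)=\Var\{T(\xi)\}$, the latter being positive definite on $\cN$ by assumption. Hence $A$ is strictly convex on the convex set $\cN$, so $\log\phi$ is strictly concave there, and any critical point of $\log\phi$ is automatically its unique global maximizer. Since $\nabla\log\phi(\theta)=r-\nabla A(\theta)$, a critical point is exactly a solution of $\nabla A(\theta)=r$; strict convexity of $A$ makes $\nabla A$ injective, so such a solution, when it exists, is unique, and then $\hat\theta(r)$ is the unique maximizer of $\phi$ over $\cN$, and a fortiori over $\Theta\subseteq\cN$.

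First I would produce $\hat\theta(r)$ for $r$ near $r_0$. Because $\nabla A(\theta^c)=r_0$ and $\nabla^2A(\theta^c)$ is invertible, the inverse function theorem yields open neighborhoods $V\ni\theta^c$ and $U\ni r_0$ with $\nabla A:V\to U$ a diffeomorphism; shrinking $V$ (using that $\theta^c$ lies in $\Theta$, and, in the large-sample regime in which this lemma is applied, in its interior) we may assume $V\subseteq\Theta$. Then I would pick $d_1>0$ so that $\{r:\norm{r-r_0}_2\le d_1\}\subseteq U$ and set $\hat\theta(r):=(\nabla A|_V)^{-1}(r)\in\Theta$; by the previous paragraph it is the unique critical point of $\log\phi$, hence the unique maximum of $\phi$.

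For the second assertion, fix such an $r$, a nonzero $v\in\RR^s$, and put $\psi(t):=\log\phi(\hat\theta(r)+tv)=(\hat\theta(r)+tv)^\top r-A(\hat\theta(r)+tv)$ for $t$ in the interval of reals with $\hat\theta(r)+tv\in\cN$ (an interval containing $0$, by convexity of $\cN$). Differentiating, $\psi'(t)=v^\top\big(r-\nabla A(\hat\theta(r)+tv)\big)$ and $\psi''(t)=-v^\top\nabla^2 A(\hat\theta(r)+tv)\,v<0$. Since $\psi'(0)=v^\top(r-\nabla A(\hat\theta(r)))=v^\top(r-r)=0$ and $\psi'$ is strictly decreasing, $\psi'(t)<0$ for every admissible $t>0$, so $\psi$ is strictly decreasing on $(0,\infty)$; exponentiating gives $\phi(\hat\theta(r)+t_1v)>\phi(\hat\theta(r)+t_2v)$ for $0<t_1<t_2$. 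When $\hat\theta(r)+t_2v$ leaves $\cN$ one has $A=+\infty$ there, so $\phi=0$ and the inequality still holds provided $\hat\theta(r)+t_1v\in\cN$.

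The main obstacle I expect is not the differentiation but the domain bookkeeping: guaranteeing $\hat\theta(r)\in\Theta$ (immediate when $\theta^c$ is interior to $\Theta$, which is the relevant case by consistency of the MLE, and requiring a minor extra remark otherwise) and keeping the ray $\hat\theta(r)+tv$ inside $\cN$ so that $A$ — and the sign $\psi''<0$ — are meaningful; convexity of $\cN$ together with convexity of $A$ is precisely what makes both go through. A smaller point to flag is the degenerate direction $v=0$, for which the strict inequality is vacuous and should simply be excluded.
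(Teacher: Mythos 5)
Your proof is correct and follows essentially the same route as the paper: both arguments derive uniqueness of the maximizer and the monotone decrease along rays from the strict concavity of $\theta\mapsto\theta^\top r-A(\theta)$ (via positive definiteness of $\nabla^2A$) together with the stationarity condition $\nabla A(\hat\theta(r))=r$. Your version is more explicit about the existence of $\hat\theta(r)$ for $r$ near $r_0$ (via the inverse function theorem) and about the one-dimensional restriction $\psi(t)$, details the paper's terser proof leaves implicit.
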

		\begin{proof}
			Since $A(\theta)$ is a convex function of $\theta$, $\hat{\theta}(r)\rightarrow \theta^c$ as $r\rightarrow r_0$ and $A(\theta^c)<\infty,$  there exists a $d_1>0$ such that $A(\hat{\theta}(r))<\infty$ holds for all  $r$ satisfying $\norm{r-r_0}_2\leq d_1.$
			Then the unique maximum comes from the fact   $\nabla A(\hat{\theta}(r))=r,$ and $\nabla^2 A({\theta})$ is positive definite. The positive definiteness also implies the strictly concavity of the function. Thus, we have 
			\begin{align}\label{lem_monotone}
			\exp{ (\hat{\theta}(r)+t_1v)^\top r-A(\hat{\theta}(r)+t_1v))}>\exp{ (\hat{\theta}(r)+t_2v)^\top r-A(\hat{\theta}(r)+t_2v))} 
			\end{align}
			when $A(\hat{\theta}(r)+t_1v)$ and  $A(\hat{\theta}(r)+t_2v)$ are finite.
		\end{proof}
			\noindent For a fixed $r\in \{r:\norm{r-r_0}_2\leq d_1\}$ and $\norm{\boldsymbol\eta}_2\leq 1,$  Taylor  expansion for $h(\boldsymbol\eta,r)$ around $h(0,r)$ yields
		$$h(\boldsymbol\eta,r)=-\frac{1}{2}\boldsymbol\eta^\top \boldsymbol\eta+R_2(\boldsymbol\eta),$$
		where the remaining term $R_2(\boldsymbol\eta)$ satisfies the following inequality,
		$$|R_2(\boldsymbol\eta)|\leq\frac{C}{6}\norm{\boldsymbol\eta}_2^3,$$
		where $C$ is some constant, since$ \nabla^3h(\boldsymbol\eta,r)$ is uniformly continuous in $\left\{(\boldsymbol\eta,r)|\norm{\boldsymbol\eta}_2\leq1, \norm{r-r_0}_2\leq d_1\right\}.$ 
		When $\norm{\boldsymbol\eta}_2\leq \delta \ll\frac{1}{C},$ we have $|R_2(\boldsymbol\eta)|\ll\boldsymbol\eta^\top \boldsymbol\eta.$
		Thus, we have $$h(\boldsymbol\eta,r)\approx-\frac{1}{2}\boldsymbol\eta^\top \boldsymbol\eta,~\text{if}~ \norm{\boldsymbol\eta}_2\leq \delta.$$
		We finish the proof of  Lemma \ref{lem_maximum}. 
		
			\noindent Next we bound the value of $g(\boldsymbol\eta,r)$ when $\norm{\boldsymbol\eta}_2\geq \delta$ by the following lemma.		
		\begin{lemma}\label{lem_15}
			There exists an $0<\epsilon<1 ,$ such that 
			$g(\boldsymbol\eta,r)\leq \epsilon$ for all $\boldsymbol\eta$ satisfying $\norm{\boldsymbol\eta}_2\geq \delta.$
		\end{lemma}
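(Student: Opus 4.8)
The key observation is that $g(\boldsymbol\eta,r)$ is, after the affine reparametrization $\theta=\hat\theta(r)+(\nabla^2A(\hat\theta(r)))^{-1/2}\boldsymbol\eta$, nothing but the normalized likelihood ratio
$$g(\boldsymbol\eta,r)=\frac{\phi_r\!\left(\hat\theta(r)+(\nabla^2A(\hat\theta(r)))^{-1/2}\boldsymbol\eta\right)}{\phi_r(\hat\theta(r))},\qquad \phi_r(\theta):=\exp\!\left(\theta^\top r-A(\theta)\right),$$
as one checks by taking logarithms. By Lemma~\ref{lem_maximum}, for $\norm{r-r_0}_2\le d_1$ the (strictly log-concave) function $\phi_r$ attains its unique maximum at $\hat\theta(r)$, so $g(\boldsymbol\eta,r)\le 1$ for every $\boldsymbol\eta$ in the transformed domain $E_{\hat\theta(r)}$, with equality only at $\boldsymbol\eta=0$. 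Thus the strict inequality ``$g<1$ away from the origin'' is already in hand, and the only remaining task is to upgrade it to a uniform bound on $\{\norm{\boldsymbol\eta}_2\ge\delta\}$.

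This is where Assumption~\ref{TLIS-1} (compactness of $\Theta$) is used. First, $A$ is continuous on $\cN\supseteq\Theta$, so $g(\cdot,r)$ is continuous on the compact set $E_{\hat\theta(r)}=(\nabla^2A(\hat\theta(r)))^{1/2}(\Theta-\hat\theta(r))$; hence $S_\delta:=\{\boldsymbol\eta\in E_{\hat\theta(r)}:\norm{\boldsymbol\eta}_2\ge\delta\}$ is compact (it is vacuous, and the lemma trivial, if it is empty). A continuous function on a nonempty compact set attains its maximum, say at $\boldsymbol\eta^\star\in S_\delta$; since $\norm{\boldsymbol\eta^\star}_2\ge\delta>0$ we have $\boldsymbol\eta^\star\ne 0$, and therefore $g(\boldsymbol\eta^\star,r)<1$ by the strict pointwise bound above. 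Taking $\epsilon:=g(\boldsymbol\eta^\star,r)$ yields $0<\epsilon<1$ (positivity is immediate since $g$ is an exponential) and $g(\boldsymbol\eta,r)\le\epsilon$ throughout $S_\delta$, which is the claim.

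I expect the only real subtlety to be the uniformity of $\epsilon$ in $r$, which is what is actually needed when Lemma~\ref{lem_15} is invoked in the proof of Lemma~\ref{lemma_33} with $r=\frac1n\sum_{i=1}^nT(\xi_i)\to r_0$: a priori the domain $E_{\hat\theta(r)}$ of $\boldsymbol\eta$ itself moves with $r$. My plan to handle this is to transport everything back to $\Theta$. Since $\nabla^2A$ is continuous and positive definite and $\hat\theta(\cdot)$ is continuous, there is a uniform upper eigenvalue bound $\lambda_{\max}$ on $\{\nabla^2A(\hat\theta(r)):\norm{r-r_0}_2\le d_1\}$, so $\norm{\boldsymbol\eta}_2^2=(\theta-\hat\theta(r))^\top\nabla^2A(\hat\theta(r))(\theta-\hat\theta(r))\le\lambda_{\max}\norm{\theta-\hat\theta(r)}_2^2$ and hence $\norm{\boldsymbol\eta}_2\ge\delta$ forces $\norm{\theta-\hat\theta(r)}_2\ge\delta':=\delta/\sqrt{\lambda_{\max}}$. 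One then applies the same compactness argument to the jointly continuous map $(\theta,r)\mapsto\phi_r(\theta)/\phi_r(\hat\theta(r))$ on the compact set $\{(\theta,r):\theta\in\Theta,\ \norm{r-r_0}_2\le d_1,\ \norm{\theta-\hat\theta(r)}_2\ge\delta'\}$, whose maximum is again strictly below $1$, producing a single $\epsilon\in(0,1)$ valid for all $r$ with $\norm{r-r_0}_2\le d_1$. Everything else — continuity of $A$, of $\hat\theta(\cdot)$, and the eigenvalue estimate — is routine given the positive-definiteness hypothesis.
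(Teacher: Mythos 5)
Your proof is correct, and it reaches the same conclusion by a slightly different compactness route than the paper. The paper first uses the strict concavity of $\theta^\top r-A(\theta)$ (the monotonicity along rays in \eqref{lem_monotone}) to push the supremum of $g(\cdot,r)$ over $\{\norm{\boldsymbol\eta}_2\geq\delta\}$ down onto the compact sphere $\SSS_0(\delta/2)$, and only then invokes continuity to get a maximum strictly below $1$; this reduction does not need the transformed domain itself to be compact. You instead lean directly on Assumption~\ref{TLIS-1}: compactness of $\Theta$ makes $E_{\hat\theta(r)}\cap\{\norm{\boldsymbol\eta}_2\geq\delta\}$ compact, so you can extract the maximizer there without any ray argument. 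The trade-off is that your argument genuinely requires $\Theta$ compact, whereas the paper's sphere reduction would survive for unbounded natural parameter spaces. On the other hand, your handling of uniformity in $r$ is more careful than the paper's: the paper concludes with ``by the same argument, $\epsilon=\sup_{\norm{r-r_0}_2\leq d_0}\sup_{\boldsymbol\eta\in\SSS_0(\delta/2)}g(\boldsymbol\eta,r)<1$,'' which implicitly needs a joint continuity-plus-compactness argument in $(\boldsymbol\eta,r)$ (a supremum of quantities each strictly below $1$ need not be strictly below $1$). Your explicit maximization of the jointly continuous map $(\theta,r)\mapsto\phi_r(\theta)/\phi_r(\hat\theta(r))$ over the compact set $\{(\theta,r):\theta\in\Theta,\ \norm{r-r_0}_2\leq d_1,\ \norm{\theta-\hat\theta(r)}_2\geq\delta'\}$, together with the uniform eigenvalue bound relating $\norm{\boldsymbol\eta}_2$ to $\norm{\theta-\hat\theta(r)}_2$, closes that gap cleanly (continuity of $\hat\theta(\cdot)$ follows from the implicit function theorem applied to $\nabla A(\hat\theta(r))=r$ with $\nabla^2A$ positive definite, as you note).
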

		\begin{proof}
			By Lemma \ref{lem_maximum},  when $||\boldsymbol\eta||_2=\sqrt{\sum_i|\boldsymbol\eta^{(i)}|^2}\geq \delta/2$ and $\norm{r-r_0}_2\leq d_1$ we have $ g(\boldsymbol\eta,r)<1.$   Thus, take $\delta$ small enough and for each fixed $r$, we have by \eqref{lem_monotone}
			$$g(\boldsymbol\eta,r)\leq  \sup_{\boldsymbol\eta\in \SSS_0(\delta/2)}\{g(\boldsymbol\eta,r)\},~\forall \norm{\boldsymbol\eta}_2\geq \delta.$$
			For notational simplicity, we denote $\BB_0(r)=\left\{x\in \RR^s\big|\norm{x}_2\leq r\right\}$ as the ball with radius $r,$ and $ \SSS_0(r)=\left\{x\in \RR^s\big|\norm{x}_2=r\right\}$ as its sphere.  Since $ \SSS_0(\delta/2)$ is compact, there exists $\boldsymbol\eta'\in\SSS_0(\delta/2),$ such that  $g(\boldsymbol\eta',r)=\sup_{\boldsymbol\eta\in \SSS_0(\delta/2)}\{g(\boldsymbol\eta,r)\}<g(0,r)=1.$
			By the same argument, we have $\epsilon=\sup_{\norm{r-r_0}_2\leq d_0} \sup_{\boldsymbol\eta\in \SSS_0(\delta/2)}\{g(\boldsymbol\eta,r)\}<1.$
			Thus $$g(\boldsymbol\eta,r)\leq  \epsilon,~\forall \norm{\boldsymbol\eta}_2\geq \delta,~\norm{r-r_0}_2\leq d_1.$$
			We finish the proof of  Lemma \ref{lem_15}. 

		\end{proof}
		Note that by the law of large number,  when $n$ is large enough,  $$\norm{\frac{1}{n} \sum_{i=1}^n {T}(\xi_i)-r_0}_2\leq d_1,~~~\text{a.s.}$$
		By Lemma \ref{lem_15}, we have
		\begin{align}
		\int_{E_{\hat{\theta}_n}\backslash \BB_0(\delta)} \pi_0(\hat{\theta}_n+(\nabla^2 A(\hat{\theta}_n))^{-\frac{1}{2}}\boldsymbol\eta)g^n(\boldsymbol\eta,\frac{1}{n} \sum_{i=1}^n {T}(\xi_i)) d\boldsymbol\eta\leq \epsilon^n.
		\end{align}
		Furthermore, in the ball $\BB_0(\delta),$ $g(\boldsymbol\eta,\frac{1}{n} \sum_{i=1}^n {T}(\xi_i))$ can be approximated by $\exp(-\frac{1}{2}\boldsymbol\eta^\top\boldsymbol\eta).$ Thus,
		\begin{align*}
		& \int_{ \BB_0(\delta)} \pi_0(\hat{\theta}_n+(\nabla^2 A(\hat{\theta}_n))^{-\frac{1}{2}}\boldsymbol\eta)g^n(\boldsymbol\eta,\frac{1}{n} \sum_{i=1}^n {T}(\xi_i)) d\boldsymbol\eta\\
		&\approx \int_{ \BB_0(\delta)}\pi_0(\hat{\theta}_n+(\nabla^2 A(\hat{\theta}_n))^{-\frac{1}{2}}\boldsymbol\eta)\exp(-\frac{n}{2}\boldsymbol\eta^\top\boldsymbol\eta) d\boldsymbol\eta\\
		&= \pi_0(\boldsymbol \theta^c) \int_{ \BB_0(\delta)}\exp(-\frac{n}{2}\boldsymbol\eta^\top\boldsymbol\eta) d\boldsymbol\eta\\
		&=(\frac{1}{\sqrt{n}})^s  \pi_0(\boldsymbol \theta^c)\int_{ \BB_0(\sqrt{n}\delta)}\exp(-\frac{1}{2}\boldsymbol\eta^\top\boldsymbol\eta) d\boldsymbol\eta\\
		&\approx (\frac{1}{\sqrt{n}})^s\pi_0(\boldsymbol \theta^c)\int_{\RR^s}\exp(-\frac{1}{2}\boldsymbol\eta^\top\boldsymbol\eta) d\boldsymbol\eta.~\quad\quad\text{(when $n$ is large.)}
		\end{align*}
		Since $\epsilon^n=o((\frac{1}{\sqrt{n}})^s),$ we have $$\int_{E_{\hat{\theta}_n}} \pi_0(\hat{\theta}_n+(\nabla^2 A(\hat{\theta}_n))^{-\frac{1}{2}}\boldsymbol\eta)g^n(\boldsymbol\eta,\frac{1}{n} \sum_{i=1}^n {T}(\xi_i)) d\boldsymbol\eta= O_{a.s}((\frac{1}{\sqrt{n}})^s).$$
		We finish the proof of Lemma \ref{lemma_33}. \hfill $\square$
	\end{proof}
	
	\noindent Now we can calculate  $$\int_\Theta \pi_0(\theta) \exp \left (\theta^{\top} \sum_{i=1}^n {T}(\xi_i) - n A(\theta) \right )d\theta.$$  Note that without specific statement, all the following results hold almost surely ($\PP^\NN_{\theta^c}$). In fact, we have
	\begin{align*}
	&\int_\Theta \pi_0(\theta) \exp \left (\theta^\top \sum_{i=1}^n {T}(\xi_i) - n A(\theta) \right )d\theta\\
	&=\exp(\hat{\theta}_n^\top \sum_{i=1}^n {T}(\xi_i)-nA(\hat{\theta}_n))|(\nabla^2 A(\hat{\theta}_n))^{-\frac{1}{2}}|\int_{E_{\hat{\theta}_n}} \pi_0(\hat{\theta}_n+(\nabla^2 A(\hat{\theta}_n))^{-\frac{1}{2}}\theta)g^n(\theta,\frac{1}{n} \sum_{i=1}^n {T}(\xi_i))d\theta\\
	&=\exp(\hat{\theta}_n^\top \sum_{i=1}^n {T}(\xi_i)-nA(\hat{\theta}_n))|(\nabla^2 A(\hat{\theta}_n))^{-\frac{1}{2}}| (\frac{1}{\sqrt{n}})^s\pi_0(\boldsymbol \theta^c)\int_{\RR^s}\exp(-\frac{1}{2}\boldsymbol\eta^\top\boldsymbol\eta) d\boldsymbol\eta,
	\end{align*}
	where $|\nabla^2A(\hat{\theta}_n))^{-\frac{1}{2}}|$ is the determinant of $(\nabla^2A(\hat{\theta}_n))^{-\frac{1}{2}}.$
	To calculate 
	$$\int_\Theta \pi_0(\theta) \exp \left (\theta^{\top} \left(2\sum_{i=1}^{t_1} {T}(\xi_i)-\sum_{i=1}^{t_2} {T}(\xi_i)\right)-(2t_1-t_2) A(\theta) \right )d\theta$$
	for $t_1>t_2=t_1-k,$ where $k$ is a fixed positive constant, let's define the MLE  $\tilde{\theta}_{2t_1-t_2}$ satisfies  $$\nabla_\eta A(\tilde{\theta}_{2t_1-t_2})=\frac{1}{2t_1-t_2}\left(2\sum_{i=1}^{t_1} {T}(\xi_i)-\sum_{i=1}^{t_2} {T}(\xi_i)\right).$$  As $t_1\rightarrow \infty$, the right hand side converges almost surely to $\nabla_\eta A({\theta^c}),$ which implies $\tilde{\theta}_{2t_1-t_2}\rightarrow {\theta^c}$ almost surely. Then we have
	\begin{align*}
	& \int \pi_0(\theta) \exp \left (\theta^{\top} \left(2\sum_{i=1}^{t_1} {T}(\xi_i)-\sum_{i=1}^{t_2} {T}(\xi_i)\right)-(2t_1-t_2) A(\theta) \right )d\theta\\
	=&\exp(\tilde{\theta}_{2t_1-t_2}^\top \left(2\sum_{i=1}^{t_1} {T}(\xi_i)-\sum_{i=1}^{t_2} {T}(\xi_i)\right)-(2t_1-t_2)A(\tilde{\theta}_{2t_1-t_2}))|(\nabla^2 A(\tilde{\theta}_{2t_1-t_2}))^{-\frac{1}{2}}|\\
	&\qquad \cdot(\frac{1}{\sqrt{{2t_1-t_2}}})^s\pi_0(\boldsymbol \theta^c)\int_{\RR^s}\exp(-\frac{1}{2}\boldsymbol\eta^\top\boldsymbol\eta) d\boldsymbol\eta.
	\end{align*}
	Next, we can calculate the variance 
	\begin{align*}
	\EE\left(\frac{\pi_{t_1}(\theta)}{\pi_{t_2}(\theta) }\right)^2&=\frac{\left(\int_\Theta \exp \left (\theta^\top \boldsymbol\chi - t_2 A(\theta) \right )d\theta\right)\left( \int_\Theta  \exp \left (\theta^\top (2\boldsymbol\chi_{t_1}-\chi_{t_2}) - (2t_1-t_2)A(\theta) \right )d\theta\right)}{(\int_\Theta  \exp \left (\theta^\top \boldsymbol\chi' - t_1 A(\theta) \right )d\theta)^2}\\
	&=\frac{(\frac{1}{\sqrt{t_2}})^s(\frac{1}{\sqrt{{2t_1-t_2}}})^s}{(\frac{1}{\sqrt{t_1}})^{2s}}\frac{|\nabla^2 A(\hat{\theta}_{t_2}))^{-\frac{1}{2}}||\nabla^2 A(\tilde{\theta}_{2t_1-t_2}))^{-\frac{1}{2}}|}{|\nabla^2 A(\hat{\theta}_{t_1}))^{-\frac{1}{2}}|^2}\\
	&~~~~~~\times \frac{\exp(2\tilde{\theta}_{2t_1-t_2}^\top \sum_{i=1}^{t_1} {T}(\xi_i)-2{t_1}A(\tilde{\theta}_{2t_1-t_2}))}{\exp(2\hat{\theta}_{t_1}^\top \sum_{i=1}^{t_1} {T}(\xi_i)-2{t_1}A(\hat{\theta}_{t_1}))}\\
	&~~~~~~\times \frac{\exp(\hat{\theta}_{t_2}^\top \sum_{i=1}^{t_2} {T}(\xi_i)-{t_2}A(\hat{\theta}_{t_2}))}{\exp(\tilde{\theta}_{2t_1-t_2}^\top \sum_{i=1}^{t_2} {T}(\xi_i)-{t_2}A(\tilde{\theta}_{2t_1-t_2}))}.
	\end{align*}
	
	By the property of MLE, we know that
	$$\hat{\theta}_{t_2}^\top \sum_{i=1}^{t_2} {T}(\xi_i)-{t_2}A(\hat{\theta}_{t_2})\geq\tilde{\theta}_{2t_1-t_2}^\top \sum_{i=1}^{t_2} {T}(\xi_i)-{t_2}A(\tilde{\theta}_{2t_1-t_2}),$$
	$$\tilde{\theta}_{2t_1-t_2}^\top \left(\sum_{i=1}^{t_2} {T}(\xi_i)+2\sum_{i=t_2+1}^{t_2+k} {T}(\xi_i)\right)-(t_2+2k)A(\tilde{\theta}_{2t_1-t_2})\geq \hat{\theta}_{t_2}^\top \left(\sum_{i=1}^{t_2} {T}(\xi_i)+2\sum_{i=t_2+1}^{t_2+k} {T}(\xi_i)\right)-(t_2+2k)A(\hat{\theta}_{t_2}).$$
	The last two together implies 
	$$0\leq \hat{\theta}_{t_2}^\top \sum_{i=1}^{t_2} {T}(\xi_i)-{t_2}A(\hat{\theta}_{t_2})-\tilde{\theta}_{2t_1-t_2}^\top \sum_{i=1}^{t_2} {T}(\xi_i)-{t_2}A(\tilde{\theta}_{2t_1-t_2})\leq 2k(A(\hat{\theta}_{t_2})-A(\tilde{\theta}_{2t_1-t_2}))+2\sum_{i=t_2+1}^{t_2+k} {T}(\xi_i)(\tilde{\theta}_{2t_1-t_2}-\hat{\theta}_{t_2}).$$
	Since both MLE estimators converge to the true $\theta^c$ a.s., we have 
	$$2k(A(\hat{\theta}_{t_2})-A(\tilde{\theta}_{2t_1-t_2}))+2\sum_{i=t_2+1}^{t_2+k} {T}(\xi_i)(\tilde{\theta}_{2t_1-t_2}-\hat{\theta}_{t_2})\rightarrow 0 \text{ a.s.$(\PP^\NN_{\theta^c})$}.$$
	Thus 
	$$\frac{\exp(\hat{\theta}_{t_2}^\top \sum_{i=1}^{t_2} {T}(\xi_i)-{t_2}A(\hat{\theta}_{t_2}))}{\exp(\tilde{\theta}_{2t_1-t_2}^\top \sum_{i=1}^{t_2} {T}(\xi_i)-{t_2}A(\tilde{\theta}_{2t_1-t_2}))}\rightarrow 1 \text{ a.s.$(\PP^\NN_{\theta^c})$}.$$ 
	Similarly, we have
	$$\frac{\exp(2\tilde{\theta}_{2t_1-t_2}^\top \sum_{i=1}^{t_1} {T}(\xi_i)-2{t_1}A(\tilde{\theta}_{2t_1-t_2}))}{\exp(2\hat{\theta}_{t_1}^\top \sum_{i=1}^{t_1} {T}(\xi_i)-2{t_1}A(\hat{\theta}_{t_1}))}\rightarrow 1 \text{ a.s.$(\PP^\NN_{\theta^c})$},$$
	which further implies
	$$\EE\left(\frac{\pi_{t_1}(\theta)}{\pi_{t_2}(\theta) }\right)^2\rightarrow 1 \text{ a.s.$(\PP^\NN_{\theta^c})$},$$
	We finish the proof of Theorem \ref{thm:exp_out_bounded}. 
\end{proof}

\subsection{Proof of Corollary \ref{lem_bound} }
\begin{proof}
	Let $$\Omega'=\left\{\omega\in \Omega^\NN\Big|\EE\left(\frac{\pi_{t}(\theta)}{\pi_{\max\{t-k,0\}}(\theta) }\right)^2(\omega) \rightarrow 1, \text{~as~}t\rightarrow\infty\right\}.$$
	For any $\omega\in \Omega',$ since convergence implies boundedness, we know $\exists C(\omega)>0$, such that $$\EE\left(\frac{\pi_{t}(\theta)}{\pi_{ \max\{t-k,0\}}(\theta) }\right)^2<C,\forall 1\leq k\leq K, \forall t>0.$$ By Theorem \ref{thm:exp_out_bounded}, we have $$\PP^\NN_{\theta^c}\left(\exists C>0,\EE\left(\frac{\pi_{t}(\theta)}{\pi_{ \max\{t-k,0\}}(\theta) }\right)^2<C,\forall 1\leq k\leq K, \forall t>0.\right)\geq \PP^\NN_{\theta^c}(\Omega')=1.$$
	We finish the proof of Corollary \ref{lem_bound}.
\end{proof}
\subsection{Proof of Theorem \ref{thm:exp_in_bound}}
\begin{proof}
	We only need to show that $$\int \frac{p(x|\theta_1)^2}{p(x|\theta_2)}h(x)^2dx\leq C_1.$$
	Note that $\int\frac{p(x|\theta_1)^2}{p(x|\theta_2)}h(x)^2 dx=\int \kappa(x)\exp \left ( {(2\theta_1-\theta_2)}^\top{T}(x) -2A(\theta_1)+A(\theta_2)\right )h(x)^2dx.$ Since we have $2\theta_1-\theta_2\in\cN,$ we further have 
	\begin{align*}
	\int\frac{p(x|\theta_1)^2}{p(x|\theta_2)}h(x)^2 dx=\exp\left(-2A(\theta_1)+A(\theta_2)\right)\int  \kappa(x)h(x)^2\exp \left ((2\theta_1-\theta_2)^\top{T}(x)\right)dx.
	\end{align*}
	Note that $A(\theta)$ and $B_{h^2}(\theta)\coloneqq\log \int  \kappa(x)h(x)^2\exp \left (\theta^\top{T}(x)\right)dx$ are convex  and continuous in $\theta$. It is easy to see that $A(\theta)$ and $B_{h^2}(\theta)$ are bounded, which further implies that there exists $C_1>0$ such that 
	$$\int \frac{p(x|\theta_1)^2}{p(x|\theta_2)}h(x)^2dx\leq C_1.$$
	We finish the proof of Theorem \ref{thm:exp_in_bound}.
\end{proof}

\section{Proof of Theorem \ref{thm:consistency}}\label{proof:consistency}

\subsection{Proof of Lemma \ref{thm:in_consistency}}
\begin{proof} Recall that 
	$\hat{H}^{M,N}(\theta_\tau^s)	=\frac{1}{NM}\sum_{i=1}^M\sum_{j=1}^N\frac{p(\xi_0 ^{i,j}|\theta_\tau^s)}{p(\xi_0^{i,j}|\theta_\tau^i)}h(\xi_0^{i,j}).$
	For every $1\leq i\leq M,$ let $$\Omega_{i,\tau, s}=\left\{\omega\in\Omega: \lim_{N\rightarrow \infty}\frac{1}{N}\sum_{j=1}^N\frac{p(\xi_0 ^{i,j}(\omega)|\theta_\tau^s)}{p(\xi_0^{i,j}(\omega)|\theta_\tau^i)}h(\xi_0^{i,j}(\omega))={H}(\theta_\tau^s)\right\}. $$
	Then $$\left\{\forall \tau,s, \lim_{N\rightarrow \infty}\hat{H}^{M,N}(\theta_\tau^s)={H}(\theta_\tau^s)\right\}\supset\cup_{i,\tau,s}\Omega_{i,\tau, s}.$$ By the law of large number,
	$$\PP(\Omega_{i,\tau, s})=1.$$
	Then we have $$\PP\{\forall \tau,s>0, \lim_{N\rightarrow \infty}\hat{H}^{M,N}(\theta_\tau^s)={H}(\theta_\tau^s)\}=1.$$
	Let $\epsilon=\inf\left\{|H(\theta_{\tau_1}^{s_1})-H(\theta_{\tau_2}^{s_2})|\Big| t-k\leq \tau_1,\tau_2\leq t,~1\leq s_1,s_2\leq M\right\}.$
	Then for every $$\omega\in\left\{\forall \tau,s, \lim_{N\rightarrow \infty}\hat{H}^{M,N}(\theta_\tau^s)={H}(\theta_\tau^s)\right\},$$ there exists a $t_1>0$ such that when $N>t_1,$
	$$|\hat{H}^{M,N}(\theta_\tau^s)-{H}(\theta_\tau^s)|\leq \frac{\epsilon}{3},~\forall \tau,s>0.$$
	If $H(\theta_{\tau_1}^{s_1})<H(\theta_{\tau_2}^{s_2}),$ the performance estimator will keep this order, i.e.,
	$\hat{H}^{M,N}(\theta_{\tau_1}^{s_1})<\hat{H}^{M,N}(\theta_{\tau_2}^{s_2}).$ Let $H(\theta_t^{i_\alpha})=\hat{q}_{t}^{M,K}.$ Since the order is kept, we have   $\hat{H}^{M,N}(\theta_t^{i_\alpha})=\hat{q}_{t}^{M,N,K}.$ Then when $N>t_1,$
	$$|\hat{q}^{t,\alpha}_{N,M}-\hat{q}_{t}^{M,K}|=|\hat{H}^{M,N}(\theta_t^{i_\alpha})-H(\theta_t^{i_\alpha})|\leq \frac{\epsilon}{3},$$
	which implies $$\lim_{N\rightarrow\infty} \hat{q}_t^{M,N,K}=\hat{q}_{t}^{M,K}~~~~\text{ a.s.}$$
	We finish the proof of Lemma \ref{thm:in_consistency}. \hfill $\square$
\end{proof}

\subsection{Proof of Lemma \ref{thm:out_consistency}}
\begin{proof}
	For any $\delta>0,$ define a set $A_t^{M,K}(\delta)$ as follows.
	\begin{align*}
	A_t^{M,K}(\delta)&=\{\hat{q}_t^{M,K}\leq q_t-\delta\}\\&=\left\{\sum_{\tau=t-K+1}^t \sum_{j=1}^M \left(w_{t|\tau}^j \cI( H(\theta_\tau^i)\leq q_t-\delta)-G_t(q_t-\delta)\right)\geq KM\alpha-KMG_t(q_t-\delta)\right\}.
	\end{align*}
Let  $\sigma_{t|\tau}:=\sqrt{\Var{[w_{t|\tau}^j \cI( H(\theta_\tau^i)\geq q_t-\delta)]}}.$  Since $\EE[w_{t|\tau}^2]$ is uniformly bounded, there exists a constant $C>0$ such that   $\sigma_{t|\tau}\leq C.$ By the law of iterated logarithm,  for any $t-k\leq \tau\leq t,$
	$$\limsup_{M\rightarrow\infty}\frac{ \sum_{j=1}^M \left(w_{t|\tau}^j \cI( H(\theta_\tau^i)\geq q_t-\delta)-(1-G_t(q_t-\delta))\right)}{\sigma_{t|\tau}\sqrt{M\log\log M}}=1$$
	and
	$$\liminf_{M\rightarrow\infty}\frac{ \sum_{j=1}^M \left(w_{t|\tau}^j \cI( H(\theta_\tau^i)\geq q_t-\delta)-(1-G_t(q_t-\delta))\right)}{\sigma_{t|\tau}\sqrt{M\log\log M}}=-1.$$
	Define another set for $\eta>0$, \begin{align*}
	A_t^{M,K}(\delta,\eta)&=\left\{\sum_{\tau=t-K+1}^t \sum_{j=1}^M \left(w_{t|\tau}^j \cI( H(\theta_\tau^i)\leq q_t-\delta)-G_t(q_t-\delta)\right)\geq (1+\eta) KC\sqrt{M\log\log M}\right\}.
	\end{align*}
	Note that for large enough $M,$ we have $$KM\alpha-KMG_t(q_t-\delta)\geq (1+\eta) kC\sqrt{M\log\log M},$$ which implies $A_t^{M,K}(\delta)\subset A_t^{M,K}(\delta,\eta).$
	The law of iterated of logrithm implies, for any $\eta>0$
	$$\PP(A_t^{M,K}(\delta,\eta), \text{i.o.})=0\Rightarrow\PP(A_t^{M,K}(\delta), ~\text{i.o.})=0.$$
	Next, we show that $\hat{q}_t^{M,K}\geq q_t+\delta$ cannot happen infinitely often. Similarly,
	\begin{align*}
	B_t^{M,K}(\delta)&=\{\hat{q}_t^{M,K}\geq q_t+\delta\}\\
	&=\left\{\sum_{\tau=t-K+1}^t \sum_{j=1}^M \left(w_{t|\tau}^j \cI( H(\theta_\tau^i)\leq q_t+\delta)-G_t(q_t+\delta)\right)\leq KM\alpha-KMG_t(q_t+\delta)\right\}.
	\end{align*}
	Define another set for $\eta>0$, \begin{align*}
	B_t^{M,K}(\delta,\eta)&=\left\{\sum_{\tau=t-K+1}^t \sum_{j=1}^M \left(w_{t|\tau}^j \cI( H(\theta_\tau^i)\leq q_t+\delta)-G_t(q_t+\delta)\right)\leq-(1+\eta) kC\sqrt{M\log\log M}\right\}.
	\end{align*}
	Note that for large enough $M,$ we have $$KM\alpha-KMG_t(q_t+\delta)\leq -(1+\eta) kC\sqrt{M\log\log M},$$ which implies $B_t^{M,K}(\delta)\subset B_t^{M,K}(\delta,\eta).$
	The law of iterated of logrithm implies, for any $\eta>0,$
	$$\PP\left(B_t^{M,K}(\delta,\eta), \text{i.o.}\right)=0\Rightarrow\PP\left(B_t^{M,K}(\delta), ~\text{i.o.}\right)=0.$$
	Thus, we have for any $\delta>0$,
	$$\PP\left(A_t^{M,K}(\delta)\cup B_t^{M,K}(\delta), ~\text{i.o.}\right)=0,$$
	which implies $$\lim_{M\rightarrow\infty}\hat{q}_t^{M,K}= q_t,~~~~\text{ a.s.}$$
	We finish the proof of Lemma \ref{thm:out_consistency}.\hfill $\square$
\end{proof}

\section{Proof of Convergence Rate}\label{rate_proof}
\subsection{Proof of Theorem \ref{thm:out_rate}}\label{proof:out_rate}
\begin{proof}
The proof follows that of Theorem 1 in \cite{glynn1996importance}.	\begin{align*}
	&\PP(M^{\frac{1}{2}}(\hat{q}_t^{M,K}- q_t)\leq x)=\PP(\hat{q}_t^{M,K}\leq q_t+M^{-\frac{1}{2}}x)
	=\PP(\hat{G}_M(q_t+M^{-\frac{1}{2}}x)\geq \alpha)\\
	=&\PP\left(\frac{1}{KM}\sum_{\tau=t-K+1}^t \sum_{j=1}^M w_{t|\tau}^j \cI\left( H(\theta_\tau^i)\leq q_t+M^{-\frac{1}{2}}x\right)\geq \alpha\right)\\
	=&\PP\left(\frac{1}{M}\sum_{j=1}^M \frac{1}{k}\sum_{\tau=t-K+1}^t w_{t|\tau}^j \cI\left( H(\theta_\tau^i)\leq q_t+M^{-\frac{1}{2}}x\right)\geq \alpha\right)\\
	=&\PP\left(\frac{1}{M}\sum_{j=1}^M \frac{1}{k}\sum_{\tau=t-K+1}^t \left[w_{t|\tau}^j \cI\left( H(\theta_\tau^i)\leq q_t+M^{-\frac{1}{2}}x\right)-G_t(q_t+M^{-\frac{1}{2}}x)\right]\geq \alpha-G_t(q_t+M^{-\frac{1}{2}}x)\right).
	\end{align*}
	Let $s_t^{M,K}(x)=\sqrt{\frac{1}{K^2}\sum_{\tau=t-K+1}^t\EE\left\{  (w_{t|\tau}^j)^2 \cI\left( H(\theta_\tau^i)\leq q_t+M^{-\frac{1}{2}}x\right)-(G_t(q_t+M^{-\frac{1}{2}}x))^2\right\}}.$ By Berry-Esseen Theorem, we have 
	\begin{align}\label{eq16}
	&\left|\PP(M^{\frac{1}{2}}(\hat{q}_t^{M,K}- q_t)\leq x)-\PP\left(N(0,1)\geq M^{\frac{1}{2}}\frac{\alpha-G_t(q_t+M^{-\frac{1}{2}}x)}{s_t^{M,K}(x)}\right)\right|\nonumber\\
	&=\Bigg|\PP\left(N(0,1)\geq M^{\frac{1}{2}}\frac{\alpha-G_t(q_t+M^{-\frac{1}{2}}x)}{s_t^{M,K}(x)}\right)\nonumber\\&~~~~-\PP\left(M^{\frac{1}{2}}\frac{1}{M}\sum_{j=1}^M \frac{1}{K}\sum_{\tau=t-K+1}^t \left[w_{t|\tau}^j \cI\left( H(\theta_\tau^i)\leq q_t+M^{-\frac{1}{2}}x\right)-G_t(q_t+M^{-\frac{1}{2}}x)\right]\geq M^{\frac{1}{2}}(\alpha-G_t(q_t+M^{-\frac{1}{2}}x))\right)\Bigg|\nonumber\\
	&\rightarrow 0.
	\end{align}
	Moreover,  we have $$\lim_{M\rightarrow\infty}s_t^{M,K}(x)=\sqrt{\frac{1}{K^2}\sum_{\tau=t-K+1}^t\EE\left\{  (w_{t|\tau}^j)^2 \cI\left( H(\theta_\tau^i)\leq q_t\right)-\alpha^2\right\}},$$ and $$\lim_{M\rightarrow\infty}M^{\frac{1}{2}}(\alpha-G_t(q_t+M^{-\frac{1}{2}}x))=-xG'_t(q_t).$$
	If we further denote $$\tilde\sigma_t^K=\frac{\sqrt{\frac{1}{K^2}\sum_{\tau=t-K+1}^t\EE\left\{  (w_{t|\tau}^j)^2 \cI\left( H(\theta_\tau^i)\leq q_t\right)-\alpha^2\right\}}}{G'_t(q_t)},$$  then we have
	$$\lim_{M\rightarrow\infty}M^{\frac{1}{2}}\frac{\alpha-G_t(q_t+M^{-\frac{1}{2}}x)}{s_t^{M,K}(x)}= -\frac{1}{\tilde\sigma_t^K} x.$$ 
	Together with \eqref{eq16}, we can show that  $M^{\frac{1}{2}}(\hat{q}_t^{M,K}- q_t)$ is asymptotically normal.
	\begin{align*}
	&\left|\PP(M^{\frac{1}{2}}(\hat{q}_t^{M,K}- q_t)\leq x)-\PP\left(\tilde\sigma_t^K N(0,1)\leq x\right)\right|\\
	=&\left|\PP(M^{\frac{1}{2}}(\hat{q}_t^{M,K}- q_t)\leq x)-\PP\left(\tilde\sigma_t^K N(0,1)\geq -x\right)\right|\\
	=&\left|\PP(M^{\frac{1}{2}}(\hat{q}_t^{M,K}- q_t)\leq x)-\PP\left( N(0,1)\geq -\frac{1}{\tilde\sigma_t^K} x\right)\right|\rightarrow 0,
	\end{align*}
	which implies$$M^{\frac{1}{2}}(\hat{q}_t^{M,K}- q_t)\Rightarrow \tilde\sigma_t^KN(0,1),~~~\text{as $M\rightarrow\infty.$}$$
	This is equivalent to 
	$$(KM)^{\frac{1}{2}}(\hat{q}_t^{M,K}- q_t)\Rightarrow \sigma_t^KN(0,1),~~~\text{as $M\rightarrow\infty,$}$$
where $\sigma_t^K=\sqrt{\frac{1}{K}\sum_{\tau=t-K+1}^t\EE\left\{  (w_{t|\tau}^j)^2 \cI\left( H(\theta_\tau^i)\leq q_t\right)-\alpha^2\right\}}/G'_t(q_t).$
	
	Note that $\EE\left\{  (w_{t|\tau}^j)^2 \cI\left( H(\theta_\tau^i)\leq q_t\right)\right\}\leq\EE\left\{  (w_{t|\tau}^j)^2\right\},$ which is almost surely bounded according to Corollary \ref{lem_bound}. This further implies that $\sigma_t^{M,K}$ is almost surely bounded for all $t>0.$ 
	
	We finish the proof of Theorem \ref{thm:out_rate}.\hfill $\square$
\end{proof}
\subsection{Proof of Theorem \ref{thm:in_rate}}
\begin{proof}
	From the proof of  Theorem \ref{thm:consistency}, we know that for large enough $N,$ given $\theta_s^i,~ i=1,..,M, s=t-k,..,t,$ simulation uncertainty does not change the order statistics. Thus, suppose  $H(\theta_{\tau}^{i(\tau)})=\hat{q}_{t}^{M,K},$ then we have $$\lim_{N\rightarrow\infty} \sqrt{N}\left(\hat{q}_{t}^{M,N,K}-\hat{q}_{t}^{M,K}\right)=\lim_{N\rightarrow\infty} \sqrt{N}\left(\hat{H}^{M,N}\left(\theta_{\tau}^{i(\tau)}\right)-H\left(\theta_{\tau}^{i(\tau)}\right)\right).$$
	Recall that
	$\hat{H}^{M,N}(\theta_{\tau}^{i(\tau)})	=\frac{1}{NM}\sum_{i=1}^M\sum_{j=1}^N\frac{p(\xi_{\tau}^{i,j}|\theta_{\tau}^{i(\tau)})}{p(\xi_{\tau}^{i,j}|\theta_{\tau}^i)}h(\xi_{\tau}^{i,j}),$ where $\xi_{\tau}^{i,j}\sim p(\cdot|\theta_{\tau}^i), i=1,...,M, j=1,...,N$.
	By the central limit theorem, we have 
	\begin{align}\label{eq_in}
	\lim_{N\rightarrow\infty} \sqrt{N}\left(\hat{H}^{M,N}\left(\theta_{\tau}^{i(\tau)}\right)-H\left(\theta_{\tau}^{i(\tau)}\right)\right)=\sigma^M_{\tau}N(0,1) \text{ in distribution,}
	\end{align}
	where
	$$(\sigma^M_{\tau})^2=\Var\left\{\frac{1}{{M}}\sum_{i=1}^M\frac{p(\xi_{\tau} ^{i,j}|\theta_{\tau}^{i(\tau)})}{p(\xi_{\tau}^{i,j}|\theta_{\tau}^i)}h(\xi_{\tau}^{i,j})\Bigg| \theta_{\tau}^i,i=1,...,M\right\}=\frac{1}{M^2}\sum_{i=1}^M\Var\left\{\frac{p(\xi_{\tau} ^{i,j}|\theta_{\tau}^{i(\tau)})}{p(\xi_{\tau}^{i,j}|\theta_{\tau}^i)}h(\xi_{\tau}^{i,j})\Bigg| \theta_{\tau}^i,\theta_{\tau}^{i(\tau)}\right\}\leq \frac{C}{M},$$
	where $C$ is the constant defined in Theorem \ref{thm:exp_in_bound}.
	Thus, for $\beta\in[0,1)$,
	$$\lim_{M\rightarrow\infty} \frac{M^\beta}{M^2}\sum_{i=1}^M\Var\left\{\frac{p(\xi_{\tau} ^{i,j}|\theta_{\tau}^{i(\tau)})}{p(\xi_{\tau}^{i,j}|\theta_{\tau}^i)}h(\xi_{\tau}^{i,j})\Bigg| \theta_{\tau}^i,\theta_{\tau}^{i(\tau)}\right\}=0.$$ 
	This implies $$\lim_{M\rightarrow\infty}\sqrt{M^\beta}\sigma^M_{\tau}N(0,1)=0 \text{ in probability.}$$
	Together with \eqref{eq_in}, we prove Theorem \ref{thm:in_rate}. \hfill $\square$
\end{proof}
\subsection{Proof of Theorem \ref{thm:in_rate2}}
\begin{proof}
	Same as the case where CIS is applied, when $N$ is large enough, given $\theta_s^i,~ i=,..,M, s=t-k,..,t,$ simulation uncertainty does not change the order statistics. Thus,  we have $$\lim_{N\rightarrow\infty} \sqrt{N}\left(\breve{q}_{t}^{M,N,K}-\hat{q}_{t}^{M,K}\right)=\lim_{N\rightarrow\infty} \sqrt{N}\left(\breve{H}^{N}\left(\theta_{\tau}^{i(\tau)}\right)-H\left(\theta_{\tau}^{i(\tau)}\right)\right).$$
	By the central limit theorem, we have 
	\begin{align}\label{eq_in2}
	\lim_{N\rightarrow\infty} \sqrt{N}\left(\breve{H}^{N}\left(\theta_{\tau}^{i(\tau)}\right)-H\left(\theta_{\tau}^{i(\tau)}\right)\right)=\breve\sigma^M_{\tau}N(0,1), \text{ in distribution,}
	\end{align}
	where
	$$\left(\breve\sigma^M_{\tau}\right)^2=\Var_{\xi\sim p(\cdot|\theta_\tau^i(\tau))}\{h(\xi)\}.$$
	When $M\rightarrow\infty,$ we have $\lim_{M\rightarrow\infty}\Var_{\xi\sim p(\cdot|\theta_\tau^i(\tau))}\{h(\xi)\}=\Var_{\xi\sim p(\cdot|\theta_t)}\{h(\xi)\},$
	where $\theta_t$ is the $\theta-$parameter corresponding to the quantile $q_t,$ i.e., $H(\theta_t)=q_t.$ Then together with \eqref{eq_in2}, we have 
	$$\lim_{M\rightarrow\infty}\lim_{N\rightarrow\infty} \sqrt{N}\left(\breve{q}_{t}^{M,N,K}-\hat{q}_{t}^{M,K}\right)=\sigma_tN(0,1) \text{ in distribution, }$$
	where $\sigma_t^2=\Var_{\xi\sim p(\cdot|\theta_t)}\{h(\xi)\}.$
	We finish the proof of Theorem \ref{thm:in_rate2}. \hfill $\square$
\end{proof}

\subsection{Lower Boundedness of density quantile function $g_t(q_t)$}\label{bound_quantile_density}
Here, we consider the case that $\Theta\subset R$ is a compact set. Suppose $H(\theta)$ is strictly monotone and differentiable in $\Theta$. Note that $g_t$ is exactly the probability density function of $H(\theta_t),$ where $\theta_t\sim\pi_t.$  By the method of transformations,  the density quantile function can be calculated as follows:
$$g_t(q_t)=\frac{\pi_t(\theta_t)}{|H'(\theta_t)|},$$ where $\theta_t=H^{-1}(q_t)$ and $H'(\theta_t)$ is the derivative of $H$ at $\theta_t$. By strict monotonicity and smoothness of $H(\theta)$ and compactness of $\Theta,$ , we know that $|H'(\theta_t)|$ is lower bounded away from zero for all $t>0$.  Thus, we only need to show that $\pi_t(\theta_t)$ is lower bounded away from zero for all $t.$ Here we consider one special case that $\pi_t$ is the posterior distribution of  normal distribution with known variance $\sigma^2$, i.e.,
\begin{align}\label{gauss}
\pi_t\sim N\left(\frac{1}{\frac{1}{\sigma_0^2} + \frac{n}{\sigma^2}}\left(\frac{\mu_0}{\sigma_0^2} + \frac{\sum_{i=1}^n x_i}{\sigma^2}\right),\left(\frac{1}{\sigma_0^2} + \frac{n}{\sigma^2}\right)^{-1}\right),
\end{align}where $\mu_0,\sigma_0$ are respectively the expectation and variance of the prior normal distribution. Note that since $H$ is strictly monotone,  $H^{-1}(q_t)$ is either $\alpha$ or $1-\alpha$ quantile of $\pi_t.$ Without loss of generality, we assume $H^{-1}(q_t)=\theta_t^\alpha.$ Note that the $\alpha$ quantile $\theta^\alpha$ of a normal distribution $N(\mu,\sigma^2)$ (with p.d.f $\phi$ ) has an explicit form:
$$\theta^\alpha=\mu+\sqrt{2}\sigma \text{erf}^{-1}(2\alpha-1),$$
where $\text{erf}(x)=\frac{2}{\sqrt{\pi}}\int_0^xe^{-t^2}dt$ is the error function. Then  the density function of the quantile has the following form:
$$\phi(\theta^\alpha)=\frac{1}{\sqrt{2\pi}\sigma}\exp\left( \left(\text{erf}^{-1}\left(2\alpha-1\right)\right)^2\right).$$
Together with \eqref{gauss},
we know that $$\pi_t(\theta_t)=\frac{1}{\sqrt{2\pi\left(\frac{1}{\sigma_0^2} + \frac{n}{\sigma^2}\right)^{-1}}}\exp\left( \left(\text{erf}^{-1}\left(2\alpha-1\right)\right)^2\right),$$ which is strictly increasing to infinity and lower bounded by
$$\pi_0(\theta_0)=\frac{1}{\sqrt{2\pi\sigma_0^2}}\exp\left( \left(\text{erf}^{-1}\left(2\alpha-1\right)\right)^2\right)>0.$$
Thus $g_t(q_t)$ is lower bounded away from zero.

\section{Analysis for General Distributions}\label{discuss_general}
Though the previous analysis on EFDs is sufficient to justify the applicability of our algorithm, studying the condition required for general distributions still has its own merit. In this section, we provide one set of sufficient conditions such that Assumption \ref{assumption_Exp} holds.
The first assumption is on the input parameter.
\begin{assumption}\label{ass_3}
	~
	\begin{itemize}
		\item[1.] The input parameter space $\Theta$ is compact.
		\item[2.] For any neighborhood $V \in \mathcal{B}_{\theta}$ of $\theta^c$, there exists a sequence of uniformly consistent tests of the hypothesis $\tilde{\theta} = \theta^c$ against the alternative $\tilde{\theta} \in \Theta \setminus V$  .
		\item[3.] For any $\epsilon>0$ and any neighborhood $V \in \mathcal{B}_{\theta}$ of $\theta^c$, $V$ contains a subset $W$ such that $\pi(W)>0$ and $D_{KL}\left\{p(\cdot|{\theta^c}) \| p(\cdot|{\theta})\right\}<\epsilon$ for all $\theta \in W$.\end{itemize}
\end{assumption}
{The second item actually implies separability of $\theta^c$ from $\Theta \setminus V$. For more details on uniformly consistent tests, we refer the reader to \cite{schwartz1965bayes}.} We remark that Assumption \ref{ass_3} is also used in \cite{wu2018bayesian} (Assumption 3.1) to establish the strong consistency of posterior distributions, i.e., for any neighborhood  $V\in \cB_\theta$ of $\theta^c,$ $\int_V\pi_t(\theta)d\theta\rightarrow 1$ as $t\rightarrow \infty$ almost surely ($\PP_{\theta^c}^\NN$). Next assumption is on $\Omega$ and the likelihood function $p(\xi|\theta)$.
\begin{assumption}\label{ass_2}
	~\begin{itemize}
		\item[1.] $\Omega$ is compact.
		\item[2.] $p(\xi|\theta)$ is a continuous function in both $\xi$ and $\theta$ in $(\Omega,\Theta).$
	\end{itemize}
\end{assumption}
Assumption  \ref{ass_2} ensures $p(\xi|\theta)$ is uniformly continuous in $\xi$ for all $\theta\in \Theta.$ 	When $\Theta$  is finite, Assumption  \ref{ass_2} holds for every continuous likelihood function. Otherwise, we need further verify this assumption.  Under Assumptions \ref{ass_3} and \ref{ass_2}, we have the following theorem  for general distributions.	\begin{theorem}\label{thm:general_bound}
	Under Assumptions \ref{ass_3} and \ref{ass_2}, for a given $K>0,$  there exist constants $C_{3}>0$ and $C_4>0$ such that for any $1\leq k\leq K,$
	\begin{align*}
	&\EE_{\pi_{t-k}}\left[\left(\frac{\pi_t(\theta)}{\pi_{t-k}(\theta)}\right)^2\right]\leq C_{3},\\ &\EE_{\pi_{t-k}}\left[\left(\frac{\pi_t(\theta)}{\pi_{t-k}(\theta)}\right)^3\right]<\infty~, \\
	\sup_{\theta_1,\theta_2\in \Theta}&\Var\left\{\frac{p(\xi|\theta_1)}{p(\xi|\theta_2)}h(\xi)\Big| \theta_1,\theta_2\right\}\leq C_4, 
	\end{align*}
	where all the inequalities hold almost surely $(\PP^\NN_{\theta^c})$.
\end{theorem}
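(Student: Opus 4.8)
The plan is to reduce all three estimates to a single soft consequence of Assumptions \ref{ass_3} and \ref{ass_2}, namely that the likelihood is bounded above and below by positive constants on the compact set $\Omega\times\Theta$: $0<p_{\min}\le p(\xi|\theta)\le p_{\max}<\infty$ (the upper bound is immediate from continuity on a compact set; the positive lower bound is the delicate point, deferred to the last paragraph). Granting this, first I would write the outer-layer weight in product form: since $\pi_t(\theta)\propto\pi_{t-k}(\theta)\prod_{\tau=t-k+1}^{t}p(\xi_\tau|\theta)$,
\begin{align*}
\frac{\pi_t(\theta)}{\pi_{t-k}(\theta)}=\frac{L_{t,k}(\theta)}{Z_{t,k}},\qquad L_{t,k}(\theta):=\prod_{\tau=t-k+1}^{t}p(\xi_\tau|\theta),\quad Z_{t,k}:=\EE_{\theta'\sim\pi_{t-k}}\!\left[L_{t,k}(\theta')\right].
\end{align*}
The two-sided bound on $p$ gives $p_{\min}^{k}\le L_{t,k}(\theta)\le p_{\max}^{k}$ for all $\theta$, hence also $p_{\min}^{k}\le Z_{t,k}\le p_{\max}^{k}$, so that for $q\in\{2,3\}$
\begin{align*}
\EE_{\pi_{t-k}}\!\left[\left(\frac{\pi_t(\theta)}{\pi_{t-k}(\theta)}\right)^{q}\right]=\frac{\EE_{\pi_{t-k}}[L_{t,k}^{q}]}{Z_{t,k}^{q}}\le\left(\frac{p_{\max}}{p_{\min}}\right)^{qk}\le\left(\frac{p_{\max}}{p_{\min}}\right)^{qK}.
\end{align*}
Taking $C_3=(p_{\max}/p_{\min})^{2K}$ settles the first inequality (uniformly in $t$ and in $1\le k\le K$), and $q=3$ gives the second.

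For the inner-layer bound I would dominate the variance by the second moment and change measure:
\begin{align*}
\Var\!\left\{\frac{p(\xi|\theta_1)}{p(\xi|\theta_2)}h(\xi)\,\Big|\,\theta_1,\theta_2\right\}\le\int\frac{p(\xi|\theta_1)^{2}}{p(\xi|\theta_2)}h(\xi)^{2}\,d\xi\le\frac{p_{\max}}{p_{\min}}\int p(\xi|\theta_1)h(\xi)^{2}\,d\xi=\frac{p_{\max}}{p_{\min}}\,\EE_{\xi\sim p(\cdot|\theta_1)}[h(\xi)^{2}].
\end{align*}
Because $\Omega$ is compact and $p\le p_{\max}$, the last expectation is at most $p_{\max}\int_\Omega h(\xi)^{2}\,d\xi$, which is finite whenever $h\in L^{2}(\Omega)$ (in particular, automatically when $h$ is bounded on the compact set $\Omega$); taking the supremum over $\theta_1,\theta_2\in\Theta$ produces a constant $C_4$ independent of $t$. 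This mirrors the proof of Theorem \ref{thm:exp_in_bound}, with the EFD-specific convexity argument there replaced by compactness.

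The delicate point — and what I expect to be the main obstacle — is securing the uniform positive lower bound on $p$ (equivalently, on $Z_{t,k}$) without assuming full support, i.e., when $p(\xi|\theta)$ may vanish on part of $\Omega\times\Theta$. This is where the content of Assumption \ref{ass_3} beyond compactness enters, through the strong posterior consistency of \cite{wu2018bayesian}: for any neighborhood $V$ of $\theta^c$ one has $\int_V\pi_{t-k}(\theta)\,d\theta\to1$ almost surely, so for all large $t$ one localizes $Z_{t,k}=\EE_{\pi_{t-k}}[L_{t,k}]\ge\int_V\pi_{t-k}(\theta)L_{t,k}(\theta)\,d\theta$ to a small $V$ around $\theta^c$, on which the \emph{finite} product $L_{t,k}$ of $k\le K$ data likelihoods stays bounded below along the realized path (the data $\xi_\tau$ lying a.s.\ where $p(\cdot|\theta^c)>0$, and $p(\cdot|\theta)$ being uniformly close to $p(\cdot|\theta^c)$ on $V$ by Assumption \ref{ass_2}); the finitely many small-$t$ stages are then absorbed into the almost-sure constant. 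Making this localization genuinely uniform in $t$ is the technical heart of the argument; the remaining algebra above parallels the structure of the proofs of Theorems \ref{thm:exp_out_bounded} and \ref{thm:exp_in_bound}, with the Laplace-approximation estimates there replaced by continuity and compactness.
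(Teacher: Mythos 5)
Your proposal is correct in substance and reaches the first two bounds by a genuinely more elementary route than the paper. The paper's proof writes the second moment as $\int_\Theta \prod_{i=1}^{k}\ell_{t-k+i}^2\,\pi_{t-k}\,d\theta \big/ \big(\int_\Theta \prod_{i=1}^{k}\ell_{t-k+i}\,\pi_{t-k}\,d\theta\big)^2$ exactly as you do, but then invokes the strong consistency of the posterior to localize both integrals to a shrinking neighborhood $\Theta_n$ of $\theta^c$, together with uniform continuity of $p(\xi|\theta)$ in $\theta$, to show the ratio is asymptotically $\frac{(1-\epsilon)\prod_i(\ell_{t-k+i}(\theta^c)+\epsilon)^2 + C_{\max}^{2k}\epsilon}{(1-\epsilon)^2\prod_i(\ell_{t-k+i}(\theta^c)-\epsilon)^2}\to 1$; boundedness for all $t$ then follows since a convergent quantity is bounded. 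Your crude product bound $(p_{\max}/p_{\min})^{qK}$ bypasses posterior consistency entirely and is uniform in $t$ and $k\le K$ on its face; what you lose is the sharper conclusion (the paper's argument effectively shows the second moment tends to $1$, matching the EFD result of Theorem \ref{thm:exp_out_bounded}), and what you gain is a much shorter argument. The third inequality is handled essentially identically in both (the paper simply notes the integrand is bounded once $p$ is two-sided bounded and $h$ is bounded). One caveat: both your proof and the paper's hinge on the positive lower bound $C_{\min}\le p(\xi|\theta)$, which the paper asserts directly from Assumptions \ref{ass_3} and \ref{ass_2} (continuity on a compact $\Omega\times\Theta$ plus, implicitly, strict positivity of $p$); you are right to flag this as the delicate point, but your proposed localization workaround is not what the paper does and is itself shaky — as $t\to\infty$ the realized data $\xi_\tau$ accumulate, so $\inf_\tau p(\xi_\tau|\theta^c)$ need not stay away from zero without a global lower bound, which is precisely the hypothesis the paper takes for granted. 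If you simply grant the paper's reading (strictly positive continuous density on a compact domain), your last paragraph becomes unnecessary and your argument closes cleanly.
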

\begin{proof}
	Please refer to 	Appendix \ref{proof:general_bound}.
\end{proof}

Theorem \ref{thm:general_bound} shows that under Assumptions \ref{TLIS-1} and \ref{ass_2}, a general distribution enjoys the same good properties as EFDs shown in Theorem \ref{thm:exp_out_bounded}, Corollary \ref{lem_bound}, and Theorem \ref{thm:exp_in_bound}. Thus, following the same proofs, we can show that the conclusions in Theorems \ref{thm:consistency}, \ref{thm:out_rate}, and \ref{thm:in_rate} also hold for  general distributions.

\section{Proof of Theorem \ref{thm:general_bound}}\label{proof:general_bound}
\begin{proof}
	For notational simplicity, we  denote the likelihood function $\ell_t(\theta):=p(\xi_t|\theta).$  Note that according to \eqref{out_weight},  $\forall k\leq K,$
	$$\pi_{t}=\frac{ \Pi_{i=1}^{k}\ell_{t-k+i}(\theta) \pi_{t-k}(\theta)}{\int_\Theta\Pi_{i=1}^{k}\ell_{t-k+i}(\theta) \pi_{t-k}(\theta)d\theta},$$
	and
	$$\EE_{\pi_{t-k}}\left[\left(\frac{\pi_t(\theta)}{\pi_{t-k}(\theta)}\right)^2\right]=\int_\Theta \frac{\pi_t(\theta)^2}{\pi_{t-k}(\theta)}d\theta=\frac{\int_\Theta \Pi_{i=1}^{k}\ell_{t-k+i}^2 \pi_{t-k} d \theta}{\left(\int_\Theta\Pi_{i=1}^{k}\ell_{t-k+i}(\theta) \pi_{t-k}(\theta)d\theta\right)^2}.$$
	To show the second moment of the importance weight is bounded, we only need to upper bound $\int_\Theta \Pi_{i=1}^{k}\ell_{t-k+i}^2 \pi_{t-k} d \theta$ and lower bound $\int_\Theta\Pi_{i=1}^{k}\ell_{t-k+i}(\theta) \pi_{t-k}(\theta)d\theta.$  Under Assumption \ref{ass_3} and \ref{ass_2},  there exist constants $C_{min},~C_{max}>0,$ such that $C_{min}\leq p(\xi|\theta)\leq C_{max}.$  $p(\xi|\theta)$ is uniformly continuous, since a continuous function on a compact subset is uniformly continuous. 
	Then
	for any $\epsilon>0$, there exists $\delta>0$ such that when $\norm{\theta-\theta^c}_2\leq \delta,$ we have $|\ell_t(\theta)-\ell_t(\theta^c)|\leq \epsilon,~\forall t>0.$ 
	By the consistency of posterior distributions, we know $\pi_t(\theta)\rightarrow \delta(\theta^c)$. Moreover, for a fixed positive integer $n$, $\PP_{\theta\sim\pi_t}(\theta\in\Theta \backslash \Theta_n) \rightarrow 0,$ where $\Theta_n\subseteq \Theta$ is an open ball centered at $\theta^c$ with radius $1/n$.  Taking $n>1/\delta,$ then  there exists $\tau>\tau_1,$ such that when $t-K>\tau$,
 $\PP_{\theta\sim\pi_{t-k}}(\theta\in\Theta \backslash \Theta_n) \geq 1-\epsilon.$ We further have 
	\begin{align*}
	\int_\Theta \Pi_{i=1}^{k}\ell_{t-k+i}  \pi_{t-k}  d \theta&=\int_{\Theta\backslash\Theta_n} \Pi_{i=1}^{k}\ell_{t-+i} \pi_{t-k} d\theta+\int_{\Theta_n} \Pi_{i=1}^{k}\ell_{t-k+i} \pi_{t-k}  d\theta\\&\geq (1-\epsilon)\inf_{\Theta_n} \Pi_{i=1}^{k}\ell_{t-k+i}(\theta)\\&\geq(1-\epsilon) \Pi_{i=1}^{k}(\ell_{t-k+i}(\theta^c)-\epsilon).
	\end{align*}
	Following the similar argument, we have 
	\begin{align*}
	\int_\Theta \Pi_{i=1}^{k}\ell_{t-k+i}^2  \pi_{t-k}  d \theta&=\int_{\Theta\backslash\Theta_n} \Pi_{i=1}^{k}\ell_{t-k+i} ^2 \pi_{t-k}  d\theta+\int_{\Theta_n} \Pi_{i=1}^{k}\ell_{t-k+i}^2 \pi_{t-k}  d\theta\\
	&\leq (1-\epsilon)\sup_{\Theta_n} \Pi_{i=1}^{k}\ell_{t-k+i}^2(\theta)+C_{max}^{2k}\epsilon\\
	&\leq(1-\epsilon) \Pi_{i=1}^{k}(\ell_{t-k+i}(\theta^c)+\epsilon)^2+C_{max}^{2k}\epsilon.
	\end{align*}
	Thus, we have
	\begin{align*}
	\EE_{\pi_{t-k}}\left[\left(\frac{\pi_t(\theta)}{\pi_{t-k}(\theta)}\right)^2\right]&=\frac{\int_\Theta \Pi_{i=1}^{k}\ell_{t-k+i}^2  \pi_{t-k}  d \theta}{\left(\int_\Theta \Pi_{i=1}^{k}\ell_{t-k+i}  \pi_{t-k} d \theta\right)^2}\\
	&\leq\frac{(1-\epsilon) \Pi_{i=1}^{k}(\ell_{t-k+i}(\theta^c)+\epsilon)^2+C_{max}^{2k}\epsilon}{(1-\epsilon)^2 \Pi_{i=1}^{k}(\ell _{t-k+i}(\theta^c)-\epsilon)^2}.
	\end{align*}
	Note that  $\ell_t(\theta^c)$ is lower bounded away from 0. When $\epsilon\rightarrow 0$ (i.e., $t\rightarrow \infty$), the upper bound  goes to $1.$ Thus, there must exist a constant $C_3,$ such that for all $t>0,$ $$\EE_{\pi_{t-k}}\left[\left(\frac{\pi_t(\theta)}{\pi_{t-k}(\theta)}\right)^2\right]\leq C_{3}.$$
	Following the similar argument, we have
	$$\EE_{\pi_{t-k}}\left[\left(\frac{\pi_t(\theta)}{\pi_{t-k}(\theta)}\right)^3\right]<\infty.$$
	
	Finally, since $h$ is continuous and bounded,  $\frac{p(\xi|\theta_1)}{p(\xi|\theta_2)}h(\xi)$ is bounded, which implies there exists a constant $C_4>0,$ such that $$\sup_{\theta_1,\theta_2\in \Theta}\Var\left\{\frac{p(\xi|\theta_1)}{p(\xi|\theta_2)}h(\xi)\Big| \theta_1,\theta_2\right\}\leq C_4.$$
	We finish the proof of Theorem \ref{thm:general_bound}. \hfill $\square$

\end{proof}



\end{document}